\newtheorem{theorem}{Theorem}
\newtheorem{lemma}{Lemma}
\newtheorem{proposition}{Proposition}
\newtheorem{remark}{Remark}
\DeclareMathAlphabet{\mathsfsl}{OT1}{cmss}{m}{sl}
\numberwithin{equation}{section}
\newcommand{\D}{\mathrm{d}}
\newcommand{\tr}{\mathrm{tr}}
\renewcommand{\iint}{\int\!\!\!\!\!\int}
\def\alphab{\underline{\alpha}}
\def\betab{\underline{\beta}}
\def\chib{\underline{\chi}}
\def\chibh{\widehat{\underline{\chi}}}
\def\chih{\widehat{\chi}}
\def\etab{\underline{\eta}}
\def\Lb{\underline{L}}
\def\mub{\underline{\mu}}
\def\kappab{\underline{\kappa}}
\def\tr{\mathrm{tr}}
\def\omegab{\underline{\omega}}
\def\tensor{\widehat{\otimes}}
\def\DD{\mathcal{D}}
\def\ub{\underline{u}}
\def\Cb{\underline{C}}
\def\Rb{\underline{R}}
\def\Lb{\underline{L}}
\def\lot{\text{l.o.t.}}
\newcommand{\Db}{\underline{D}}
\newcommand{\Dh}{\widehat{D}}
\newcommand{\Dbh}{\widehat{\underline{D}}}
\newtheorem*{MainTheorem}{Main Theorem}
\newtheorem*{ChristodoulouMainEstimates}{Christodoulou's Main Estimates}
\newtheorem*{LocalDeformationTheorem}{Local Deformation Theorem}
\def\nablas{\mbox{$\nabla \mkern -13mu /$ }}
\def\Deltas{\mbox{$\Delta \mkern -13mu /$ }}
\def\Gammas{\mbox{$\Gamma \mkern -13mu /$ }}
\def\divs{\mbox{$\mathrm{div} \mkern -13mu /$ }}
\def\curls{\mbox{$\mathrm{curl} \mkern -13mu /$ }}
\def\ds{\mbox{$\mathrm{d} \mkern -9mu /$}}
\def\gs{\mbox{$g \mkern -9mu /$}}
\def\epsilons{\mbox{$\epsilon \mkern -9mu /$}}
\def\pis{\mbox{$\pi \mkern -9mu /$}}
\def\piOs{{}^{(O)}{}\pis}
\def\piOsh{{}^{(O)}{}\widehat{\pis}}
\def\Lie{\mbox{$\mathcal{L} \mkern -10mu/$}}
\def\LieO{\Lie_{O}}
\def\LieOh{\widehat{\Lie}_{O}}
\begin{document}

\title[Cauchy Data and Gravitational Collapses]{Construction of Cauchy Data of Vacuum Einstein field equations Evolving to Black Holes}

\author[Junbin Li]{Junbin Li}
\address{Department of Mathematics, Sun Yat-sen University\\ Guangzhou, China}
\email{mc04ljb@mail2.sysu.edu.cn}

\author[Pin Yu]{Pin Yu}
\address{Mathematical Sciences Center, Tsinghua University\\ Beijing, China}
\email{pin@math.tsinghua.edu.cn}
\thanks{JL is deeply indebted to his Ph.D advisor Prof. \emph{Xi-Ping Zhu} for support on this work and to Prof. \emph{Siu-Hung Tang} for discussions. He would like to thank the Mathematical Sciences Center of Tsinghua University where the work was partially done during his visit.\\
\indent PY is supported by NSF-China Grant 11101235. He would like to thank Prof. \emph{Sergiu Klainerman} for numerous suggestions on improving the earlier manuscripts; Prof. \emph{Shing-Tung Yau} for his idea of proving the absence of trapped surfaces; and Professors \emph{Demetrios Christodoulou}, \emph{Igor Rodnianski} and \emph{Richard Schoen} for their interests and discussions.}
\date{\today}

\maketitle

\begin{abstract}
We show the existence of complete, asymptotically flat Cauchy initial data for the vacuum Einstein field equations, free of trapped surfaces,  whose future development must admit a trapped surface. Moreover, the datum is exactly a constant time slice in Minkowski space-time inside and exactly a constant time slice in Kerr space-time outside.

The proof makes use of the full strength of Christodoulou's work on the dynamical formation of black holes and Corvino-Schoen's work on the constructions of initial data set.
\end{abstract}
\setcounter{tocdepth}{1}

\parskip=\baselineskip

\section{Introduction}

\subsection{Earlier Works}
Black holes are the central objects of study in general relativity. The presence of a black hole is usually detected through the existence of a trapped surface, namely a two dimensional space-like sphere whose outgoing and incoming expansions are negative. The celebrated Penrose singularity theorem states that under suitable assumptions, if the space-time has a trapped surface,  then the space-time is future causally geodesically incomplete, i.e., it must be singular, at least in some weak sense. On the other hand, the weak cosmic censorship conjecture (WCC) asserts that singularities have to be hidden from an observer at infinity by the event horizon of a black hole. Thus, assuming WCC,  the theorem of Penrose  predicts the existence of black holes, via the exhibition of a trapped surface. This is precisely the reason why the trapped surfaces are intimately related to the understanding of the mechanism of gravitational collapse.

In \cite{Chr}, Christodoulou discovered a remarkable mechanism responsible for the dynamical formation of trapped surfaces. He proved that a trapped surface can form, even in vacuum space-time,  from completely dispersed initial configurations (i.e., free of any trapped surfaces) and purely by the focusing effect of gravitational waves. Christodoulou described an open set of initial data (so called \textit{short pulse ansatz}) on an outgoing null hypersurfaces without any symmetry assumptions. Based on the techniques he and Klainerman developed in \cite{Ch-K} proving the global stability of the Minkowski space-time, he managed to understand the cancelations among the different components of connection and curvature. This eventually enabled him to obtain a complete picture of how the various geometric quantities propagate along the evolution. Christodoulou also proved a version of the above result for data prescribed at the past null infinity. He showed that strongly focused gravitational waves, coming in from past null infinity, lead to a trapped surface. More precisely, he showed that if the incoming energy per unit solid angle in each direction in an advanced time interval $[0,\delta]$ at null infinity is sufficiently large (and yet sufficiently dispersed so that no trapped surfaces are present) the focusing effect will lead to gravitational collapses. Besides its important physical significance, from the point of view of partial differential equations, it establishes the first result on the long time dynamics in general relativity for general initial data which are not necessarily close to the Minkowski space-time.

In \cite{K-R-09}, Klainerman and Rodnianski extended Christodoulou's result. They introduced a parabolic scaling  and studied a broader class of initial data.  The new scaling allowed them to capture the hidden smallness of the nonlinear interactions in Einstein equations. Another key observation in their paper is that,  if one enlarges the admissible set of initial conditions, the corresponding propagation estimates are much easier to derive. Based on this idea, they gave a significant simplification of Christodoulou's result. At the same time, their relaxation of the propagation estimates were just enough to guarantee the formation of a trapped surface. Based on the geometric sharp trace theorems, which they have introduced earlier in \cite{K-R-Sharp} and applied to local well posedness for vacuum Einstein equations in \cite{K-R-04},\cite{K-R-05}, \cite{K-R-05-Rough} and \cite{K-R-05-Microlocolized-Rough}, they were also able to reduce the number of derivatives needed in the estimates from two derivatives on curvature (in Christodoulou's proof) to just one. The price for such a simpler proof,  with a larger set of data, is that the natural propagation estimates, consistent with the new scaling, are weaker than those of Christodoulou's. Nevertheless, once the main existence results are obtained, improved propagation estimates can be derived by assuming more conditions on the initial data, such as those consistent with Christodoulou's assumptions. This procedure allows one to recover Christodoulou's stronger results in a straightforward manner. We remark that, from a purely analytic point of view, the main difficulty of all the aforementioned results on dynamical formation of black holes is the proof of the long time existence results. The work \cite{K-R-09} overcomes this difficulty in an elegant way \footnote{\quad We remark that Klainerman and Rodnianski only considered the problem on a finite region. We also note that the problem from past null infinity has also been studied by Reiterer and Trubowitz \cite{R-T}, by a  different approach.}.

When matter fields are present, some black-hole formation results have been established much earlier, under additional symmetry assumptions. The most important such work is by Christodoulou (see \cite{Ch-91}). He studied the evolutionary formation of singularities for the Einstein-scalar field equations, under the assumption of spherical symmetry (notice that according to Birkhoff theorem, this assumption is not interesting in vacuum).  The incoming energy of the scalar field was the main factor leading to the gravitational collapse. We note also that the work \cite{Ch-91} provides a much more precise picture for the large scale structure of the space-time, than that available in the general case.

In all the aforementioned works, the initial data are prescribed on null hypersurfaces. It is however natural to study the question of formation of trapped surfaces for Cauchy initial data. We first recall that Cauchy data, i.e., the data defined on a space-like hypersurface,  must satisfy a system of partial differential equations, namely, the constraint equations \eqref{constraint}. The main advantage of using characteristic initial data is that one has complete freedom in specifying data without any constraint. Though Christodoulou's results for initial data prescribed at past null infinity predicts, indirectly, the existence of asymptotically flat Cauchy data, leading to a future trapped surface, it is natural to provide a constructive approach to this problem. In this connection, we mention an interesting piece of work \cite{S-Y} by Schoen and Yau. They showed that on a space-like hypersurface, there exists a trapped surface when the matter field is condensed in a small region, see also \cite{Y} for an improvement. Their proof analyzed the constraint equations and made use of their earlier work on positive mass theorem (especially the resolution of Jang's equation). We remark that their work is not evolutionary and matter fields are essential to the existence of trapped surfaces.

The goal of the current paper is to exhibit Cauchy initial data for vacuum Einstein field equations with a precise asymptotic behavior at space-like infinity, free of trapped surfaces, which lead to trapped surfaces along the evolution. We give the precise statement of the result in next section.

\subsection{Main Result}\label{Main Result}
Let $\Sigma$ be a three dimensional differentiable manifold diffeomorphic to $\mathbb{R}^3$ and  $(x_1,x_2,x_3)$ be the standard coordinate system. We also use $|x|$ to denote the usual radius function. Let $r_0  > 1$ be a given number, $\delta > 0$ a small positive number and $\varepsilon_0>0$ another small positive number. We divide $\Sigma$ into four concentric regions $\Sigma = \Sigma_{M} \bigcup \Sigma_{C} \bigcup \Sigma_{S} \bigcup \Sigma_{K}$, where
\begin{align*}
\Sigma_M &=\{x \, |\, |x| \leq r_0\}, \quad \Sigma_C =\{x \,| \,r_0 \leq |x| \leq r_1 \}, \\
\Sigma_S &=\{x\, |\, r_1 \leq |x| \leq r_2\}, \quad \text{and} \quad \Sigma_K =\{x \,|\,  |x| \geq r_2\}.
\end{align*}
The numbers $r_1,r_2$ will be fixed in the sequel such that $r_1-r_0=O(\delta)$ and $r_2-r_1=O(\varepsilon_0)$.

A Cauchy initial datum for vacuum Einstein field equations on $\Sigma$ consists of a Riemannian metric $\bar{g}$ and a symmetric two tensor $\bar{k}$ (as the second fundamental form) subject to the following  constraint equations:
\begin{equation}\label{constraint}
\begin{split}
R(\bar{g}) -|\bar{k}|^2 + \bar{h}^2 &= 0,\\
\text{div}_{\bar{g}} \bar{k} - \D \, \bar{h} &=0,
\end{split}
\end{equation}
where $R(\bar{g})$ is the scalar curvature of the metric $\bar{g}$ and $\bar{h}$ is the mean curvature.

In order to state the main theorem, we also need to specify a mass parameter $m_0 >0$ which will be defined in an explicit way in the course of the proof. Our main result is as follows:
\begin{MainTheorem} For any sufficiently small $\varepsilon >0$, there is a Riemannian metric $\bar{g}$ and a symmetric two tensor $\bar{k}$ on $\Sigma$ satisfying \eqref{constraint}, such that
\begin{itemize}
\item[1.] $\Sigma_M$ is a constant time slice in Minkowski space-time, in fact, $(\bar{g},\bar{k})=(\delta_{ij},0)$;
\item[2.] $\Sigma_K$ is isometric to a constant time slice all the way up to space-like infinity in a Kerr space-time with mass $m$ and angular momentum $\mathbf{a}$. Moreover, $|m-m_0|+|\mathbf{a}| \lesssim \varepsilon$;
\item[3.] $\Sigma$ is free of trapped surfaces;
\item[4.] Trapped surfaces will form in the future domain of dependence of $\Sigma$.
\end{itemize}
\end{MainTheorem}
\begin{remark}\label{Remark on condition on initial mass}
The mass parameter $m_0$ reflects the amount of incoming gravitational energy that we inject  into the Minkowski space-time through an outgoing null hypersurface. It can be computed explicitly from the initial conditions as follows (see Section \ref{Preliminaries} for definitions):
\begin{equation*}
m_0 = \frac{1}{4}\int_0^{\delta}|u_0|^2|\chih(\ub,u_0,\theta)|^2_{\gs}\D\ub.
\end{equation*}
\end{remark}
\begin{remark}
The regions $\Sigma_C$ and $\Sigma_S$ do not appear in the theorem. In the proof, we shall see that $\Sigma_S$ will be $\varepsilon$-close to a constant time slice in a Schwarzschild space-time with mass $m_0$ and $\Sigma_C$ will be constructed from Christodoulou's short pulse ansatz.
\end{remark}

\includegraphics[width = 5.5 in]{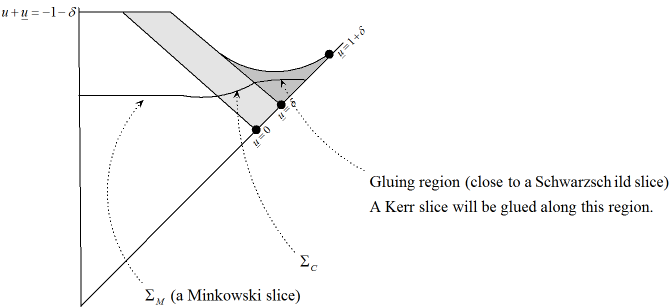}

We now sketch the heuristics of the proof with the help the above picture. We start by describing initial data on a truncated null cone (represented by the outgoing segment between the bottom vertex of the cone and the point $\ub=1+\delta$ in the picture).

The part of the cone between the vertex and $\ub=0$ will be a light cone in Minkowski space-time, thus its future development, i.e., the white region in the picture, is flat.

For data between $\ub=0$ and $\ub = \delta$, we use Christodoulou's initial data (from which a trapped surface will appear on the top of the light grey region) with an additional condition. This condition requires the incoming energy through this part of the cone be spherically symmetric. The main consequence of imposing this condition is that the incoming hypersurface from $\ub=\delta$ (represented by segment between the light grey region and the grey region in the picture) will be close to an incoming null cone in Schwarzschild space-time.

For data between $\ub=\delta$ and $\ub = 1+ \delta$, we require that its shear be identically zero. Together with the data on $\ub =\delta$, one can show that this part of the data will be close to an outgoing null cone in Schwarzschild space-time.

We construct the grey region (pictured above) by means of solving the vacuum Einstein equation using the initial data described in the previous two paragraphs. Thanks to the consequences of our additional condition, this region is close to a region in Schwarzschild space-time.

We then can choose a smooth space-like hypersurface such that it coincides with a constant time slice in Minkowski space-time in the white region (noted as $\Sigma_M$ in the picture) and it is also close to a constant time slice in Schwarzschild space-time (noted as the gluing region in the picture). Due to the closeness to Schwarzschild slice, we can do explicit computations to understand the obstruction space needed in Corvino-Schoen construction and we can eventually attach a Kerr slice to this region.

\subsection{Comments on the Proof}

We would like to address now the motivations for and difficulties in the proof.

As we stated in the Main Theorem, the ultimate goal is to obtain a Kerr slice $\Sigma_K$. To this end, we would like to use a gluing construction for the  constraint equations due to  Corvino and Schoen \cite{C-S}. Roughly speaking,  if two initial data sets are close to each other, this construction allows us to patch one to another  along a gluing region without changing the data outside. Similar results  have also been obtained by Chru\'{s}ciel and Delay in \cite{Chru-D}. We also note that in \cite{Chru-D}, there is a gluing construction with background metric close to the Minkowski metric, while in our work the  background will be close to a Schwarzschild slice instead. \cite{C-S} extended an earlier work of Corvino \cite{C} which proved a parallel  result for time symmetric data. Both constructions in \cite{C} and \cite{C-S} relied on the study of local deformations of the constraint maps (see Section \ref{The Work of Corvino-Schoen}). We will also have to make use of the deformation technique. Morally speaking, three ingredients are required to implement this approach:
\begin{itemize}
\item A reasonable amount of differentiability, say some $C^{k,\alpha}$ control on the background geometry. This is necessary since the local deformation techniques can only be proved for a relatively regular class of data, see Section \ref{The Work of Corvino-Schoen} or \cite{C-S} for details.
\item Some precise information on the background metric. This is essential since eventually we can only glue two data sets close to each other. A priori, one of them is a Kerr slice. Therefore, we expect to contract some space-time close to some Kerr slice on a given region. We call this requirement the \textit{smallness} condition since it will be captured eventually by the smallness of $\varepsilon$.
\item The gluing region must have a fixed size. This is important because the local deformation techniques work only on a fixed open set. If the gluing region shrinks to zero, it is not clear how one can proceed.
\end{itemize}

The first ingredient on the differentiability motivates us to derive the higher order energy estimates in Section \ref{Higher Order Energy Estimates}. In Christodoulou's work \cite{Chr}, he obtained energy estimates up to  two derivatives on the curvature components. He also had energy estimates for higher order derivatives, but the bounds were far from sharp. For our purposes, we require not only differentiabilities but also better bounds.

The second ingredient turns out to be the key of the entire proof. First of all, we remark that the higher order energy estimates are also important for the second ingredient. To clarify this connection, we review a technical part of Christodoulou's proof in \cite{Chr}. One of the most difficult estimates is the control on $\|\nabla \tensor \eta\|_{L^\infty}$. He used two mass aspect functions $\mu$ and $\mub$ coupled with $\eta$ and $\etab$ in Hodge systems. The procedure allowed him to gain one derivative from elliptic estimates to close the bootstrap argument. However, he had to incur a loss of smallness on the third derivatives of $\eta$. We remark that the loss only happened to the top order derivatives. In fact, for lower order derivatives, we can integrate some propagation equations to obtain smallness (but this integration loses one derivative instead!). In other words, we can afford one more derivative in exchange for the smallness. From the above, it is clear that once we have higher order energy estimates, we can expect better controls.

To obtain the second ingredient, we will show that a carefully designed data will evolve to a region that is $\varepsilon$-close to some Schwarzschild space-time; it is in this sense that we have precise information on the background metric. As a consequence, we will be able to write down explicitly a four dimensional space in the kernel of the formal adjoint of the linearized constraint maps. This kernel is the obstruction space for the gluing construction, and we shall use mass and angular momentum (which is of four dimensions in total) of the Kerr family to remove the obstruction. In this connection, we want to point out that in the work of Reiterer and Trubowitz \cite{R-T}, they also obtained such a near Schwarzschild space-time region (which to the best of our knowledge may not satisfy the requirement in our third ingredient).

To obtain an almost Schwarzschild slab, one needs a careful choice of initial data among the short pulse ansatz of Christodoulou. We shall impose the following condition:
\begin{equation*}
\int_0^{\delta} |u_0|^2|\chih(\ub,u_0,\theta)|^2_{\gs}\D\ub = \text{constant},
\end{equation*}
namely, independent of the $\theta$ variable. Heuristically, this says that the total incoming gravitational energy is the same for all the spherical directions. We expect that through such a pulse of incoming gravitational wave, after an advanced time interval $[0,\delta]$, we can regain  certain spherical symmetry. We remark that the above condition is the main innovation of the paper and will be crucial to almost all of our estimates.

Finally, we discuss our third ingredient. One may attempt to carry out the gluing directly on the space-time constructed in Christodoulou's work. However, when the parameter $\delta$ goes to zero, one runs the risk of shrinking the gluing region to zero. To solve this difficulty, we will further extend the short pulse ansatz beyond the the advanced time interval $[0,\delta]$ to $[0, 1+ \delta]$. This extension allows us to further solve Einstein equations to construct a space-time slab of a fixed size.

To close the section, we would like to discuss a way to prove the Main Theorem based on the work of Klainerman and Rodnianski \cite{K-R-09} instead of the stronger result \cite{Chr} of Christodoulou. Since we have to rely on higher order derivative estimates, we can actually start with the existence result in \cite{K-R-09} which only has control up to first derivatives on curvature. We then can assume more on the data, as we mentioned in the introduction, to derive more on the solution. As we will do in sequel, we can do the same induction argument on the number of derivatives on curvatures. This will yield the same higher order energy estimates and the rest of the proof remains the same.

\section{Preliminaries}\label{Preliminaries}

\subsection{Preliminaries on Geometry}\label{Preliminaries on Geometry}

We follow the geometric setup in \cite{Chr} and use the double null foliations for most of the paper.  We use $\DD$
 to denote the underlying space-time and use $g$ to denote the background 3+1 dimensional Lorentzian metric. We also use $\nabla$ to denote the Levi-Civita connection of the metric $g$.

Recall that we have two optical functions $\ub$ and $u$ defined on $\DD$ such that
\begin{equation*}
g(\nabla\ub,\nabla \ub)= g(\nabla u,\nabla u)=0.
\end{equation*}
The space-time $\DD$ is spanned by the level sets of $\ub$ and $u$. The functions $u$ and $\ub$ increase towards the future. We use $C_u$ to denote the outgoing null hypersurfaces generated by the level surfaces of $u$ and use ${\Cb}_{\ub}$ to denote the incoming null hypersurfaces generated by the level surfaces of $\ub$. We also use $S_{\ub,u}=\Cb_{\ub} \cap C_u$ to denote the space-like two sphere.  The notation $\Cb_{\ub}^{[u',u'']}$ refers to the part of the incoming cone $\Cb_{\ub}$ where $u'\leq u\leq u''$ and the notation $C_{u}^{[\ub',\ub'']}$ refers to the part of the outgoing cone $C_u$ where $\ub' \leq \ub \leq \ub''$.

Following Christodoulou \cite{Chr}, for the initial null hypersurface $C_{u_0}$ where $u_0<-2$ is a fixed constant, we require that $C_{u_0}^{[u_0,0]}$ is a flat light cone in Minkowski space-time and $S_{0,u_0}$ is the standard sphere with radius $|u_0|$. Thus, by solving Einstein vacuum equations, we know the future domain of dependence of $C_{u_0}^{[u_0,0]}$ is flat. It is equivalent to saying that the past of $\Cb_0$ can be isometrically embedded into Minkowski space-time. In particular, the incoming null hypersurface $\Cb_0$ coincides with an incoming light cone of the Minkowski space-time. In the sequel, on the initial hypersurface $C_{u_0}$, we shall only specify initial data on $C_{u_0}^{[0,\delta+1]}$.

We are ready to define various geometric quantities. The positive function $\Omega$ is defined by the formula $ \Omega^{-2}=-2g(\nabla\ub,\nabla u)$.  We then define the normalized null pair $(e_3, e_4)$ by $e_3=-2\Omega\nabla\ub$ and $e_4=-2\Omega\nabla u$. We also need two more null vector fields $\Lb=\Omega e_3$ and $L=\Omega e_4$. We remark that the flows generated by $\Lb$ and $L$ preserve the double null foliation. On a given two sphere $S_{\ub, u}$ we choose a local orthonormal frame $(e_1,e_2)$. We call $(e_1, e_2, e_3, e_4)$ a \emph{null frame}.  As a convention, throughout the paper, we use capital  Latin letters $A, B, C, \cdots$ to denote an index from $1$ to $2$, e.g. $e_A$ denotes either $e_1$ or $e_2$; we use little Latin letters $i, j, k, \cdots$ to denote an index from $1$ to $3$. Repeated indices should always be understood as summations.

Let $\phi$ to be a tangential tensorfield on $\DD$. By definition, $\phi$ being tangential means that $\phi$ is \textit{a priori} a tensorfield defined on the space-time $\DD$ and all the possible contractions of $\phi$ with either $e_3$ or $e_4$ are zeros. We use $D\phi$ and $\Db\phi$ to denote the projection to $S_{\ub,u}$ of usual Lie derivatives $\mathcal{L}_L\phi$ and $\mathcal{L}_{\Lb}\phi$. The space-time metric $g$ induces a Riemannian metric $\gs$ on $S_{\ub,u}$.  We use $\ds$ and $\nablas$ to denote the exterior differential and covariant derivative (with respect to $\gs$) on $S_{\ub,u}$.

Let $(\theta^A)_{A=1,2}$ be a local coordinate system on the two sphere $S_{0,u_0}$. We can extend $\theta^A$'s  to the whole $\DD$ by first setting $L(\theta^A)=0$ on $C_{u_0}$, and then setting $\Lb(\theta^A)=0$ on $\DD$. Therefore, we obtain a coordinate system $(\ub,u,\theta^A)$ on $\DD$. In such a coordinate, the Lorentzian metric $g$ takes the following form
\begin{align*}
g=-2\Omega^2(\D\ub\otimes\D u+\D u\otimes\D \ub)+\gs_{AB}(\D\theta^A-b^A\D\ub)\otimes(\D\theta^B-b^B\D\ub).
\end{align*}
The null vectors $\Lb$ and $L$ can be computed as $\Lb=\partial_u$ and $L=\partial_{\ub}+b^A\partial_{\theta^A}$. By construction, we have $b^A(\ub,u_0,\theta)=0$. In addition, we also require $\Omega(\ub,u_0,\theta)=\Omega(0,u,\theta)=1$.

We recall the definitions of null connection coefficients. Roughly speaking, the following quantities are Christoffel symbols of $\nabla$ according to the null frame $(e_1,e_2,e_3,e_4)$:
\begin{align*}
\chi_{AB}&=g(\nabla_Ae_4,e_B), \eta_A=-\frac{1}{2}g(\nabla_3e_A,e_4), \omega=\frac{1}{2}\Omega g(\nabla_4e_3,e_4),\\
\chib_{AB}&=g(\nabla_Ae_3,e_B), \etab_A=-\frac{1}{2}g(\nabla_4e_A,e_3), \omegab=\frac{1}{2}\Omega g(\nabla_3e_4,e_3).
\end{align*}
They are all tangential tensorfields. We define $\chi'=\Omega^{-1}\chi$, $\chib'=\Omega^{-1}\chi$ and $\zeta=\frac{1}{2}(\eta-\etab)$. The trace of $\chi$ and $\chib$ will play an important role in Einstein field equations and they are defined by $\tr\chi = \gs^{AB}\chi_{AB}$ and $\tr\chib = \gs^{AB}\chib_{AB}$. We remark that the trace is taken with respect to the metric $\gs$ and the indices are raised by $\gs$. By definition, we can check directly the following identities $\ds\log\Omega=\frac{1}{2}(\eta+\etab)$, $D\log\Omega=\omega$, $\Db\log\Omega=\omegab$ and $\Db b=4\Omega^2\zeta^\sharp$. Here, $\zeta^\sharp$ is the vector field dual to the $1$-form $\zeta$. In the sequel,  we will suppress the sign $\sharp$ and use the metric $\gs$ to identify $\zeta$ and $\zeta^\sharp$.

We can also decompose the curvature tensor into null curvature components:
\begin{align*}
\alpha_{AB}&=R(e_A,e_4,e_B,e_4),\beta_A=\frac{1}{2}R(e_A,e_4,e_3,e_4),\rho=\frac{1}{4}R(e_3,e_4,e_3,e_4),\\
\alphab_{AB}&=R(e_A,e_3,e_B,e_3),\betab_A=R(e_A,e_3,e_3,e_4),\sigma=\frac{1}{4}R(e_3,e_4,e_A,e_B)\epsilons^{AB},
\end{align*}
where $\epsilons$ is the volume form on $S_{\ub,u}$.

In order to express the Einstein vacuum equations with respect to a null frame, we have to introduce some operators. For a  symmetric tangential 2-tensorfield $\theta$, we use $\widehat{\theta}$ and $\tr\theta$ to denote the trace-free part and trace of $\theta$ (with respect to $\gs$). If $\theta$ is trace-free, $\Dh\theta$ and $\Dbh\theta$ refer to the trace-free part of $D\theta$ and $\Db\theta$. Let $\xi$ be a tangential $1$-form. We define some products and operators for later use. For the products, we define $(\theta_1,\theta_2)=\gs^{AC}\gs^{BD}(\theta_1)_{AB}(\theta_2)_{CD}$ and $\ (\xi_1,\xi_2)=\gs^{AB}(\xi_1)_A(\xi_2)_B$. This also leads to the following norms $|\theta|^2=(\theta,\theta)$ and $|\xi|^2=(\xi,\xi)$. We then define the contractions $(\theta\cdot\xi)_A=\theta_A{}^B\xi_B$, $(\theta_1\cdot \theta_2)_{AB}=(\theta_1)_A{}^C(\theta_2)_{CB}$, $\theta_1 \wedge\theta_2=\epsilons^{AC}\epsilons^{BD} (\theta_1)_{AB}(\theta_2)_{CD}$ and $\xi_1\tensor \xi_2=\xi_1\otimes\xi_2+\xi_2\otimes\xi_1-(\xi_1,\xi_2)\gs$. The Hodge dual for $\xi$ is defined by $\prescript{*}{}\xi=\epsilons_A{}^C\xi_C$. For the operators, we define $\divs\xi=\nablas^A\xi_A$, $\curls\xi=\epsilons^{AB}\nablas_A\xi_B$ and $(\divs\theta)_A=\nablas^B\theta_{AB}$. We finally define a traceless operator $(\nablas\tensor\xi)_{AB}=(\nablas\xi)_{AB}+(\nablas\xi)_{BA}-\divs\xi \,\gs_{AB}$.

For the sake of simplicity, we will use shorthands $\Gamma$ and $R$ to denote an arbitrary connection coefficient and an arbitrary null curvature component. We also introduce an schematic way to write products. Let $\phi$ and $\psi$ be arbitrary tangential tensorfields, we also use $\phi\cdot\psi$ to denote an arbitrary contraction of $\phi$ and $\psi$ by $\gs$ and $\epsilons$. This schematic notation only captures the quadratic nature of the product and it will be good enough for most of the cases when we derive estimates. As an example, the notation $\Gamma \cdot R$ means a sum of products between a connection coefficient and a curvature component.

We use a null frame $(e_1,e_2,e_3,e_4)$ to decompose the Einstein vacuum equations $\text{Ric}(g)=0$ into components. This leads to the following null structure equations (where  $K$ is the Gauss curvature of $S_{\ub,u}$):
\begin{align}\label{NSE_Dh_chih}
\Dh \chih'&=-\alpha,\\
\label{NSE_D_trchi}
D\tr\chi'&=-\frac{1}{2}\Omega^2(\tr\chi')^2-\Omega^2|\chih'|^2,\\
\label{NSE_Dbh_chibh}
\Dbh \chibh'&=-\alphab,\\
\label{NSE_Db_trchib}
\Db\tr\chib'&=-\frac{1}{2}\Omega^2(\tr\chib')^2-\Omega^2|\chibh'|^2,\\
\label{NSE_D_eta}
D\eta &= \Omega(\chi \cdot\etab-\beta),\\
\label{NSE_Db_etab}
\Db\etab &= \Omega(\chib \cdot\eta+\betab),\\
\label{NSE_D_omegab}
D  \omegab &=\Omega^2(2(\eta,\etab)-|\eta|^2-\rho),\\
\label{NSE_Db_omega}
\Db  \omega &=\Omega^2(2(\eta,\etab)-|\etab|^2-\rho),\\
\label{NSE_Gauss}
K&=-\frac{1}{4}\tr \chi\tr\chib+\frac{1}{2}(\chih,\chibh)-\rho,\\
\label{NSE_div_chih}
\divs \chih'&=\frac{1}{2}\ds \tr \chi'-\chih'\cdot\eta+\frac{1}{2}\tr \chi'\eta-\Omega^{-1}\beta,\\
\label{NSE_div_chibh}
\divs \chibh'&=\frac{1}{2}\ds \tr \chib'-\chibh'\cdot\etab+\frac{1}{2}\tr \chib'\etab-\Omega^{-1}\betab,\\
\label{NSE_curl_eta}
\curls \eta&=\sigma-\frac{1}{2}\chih \wedge\chibh,\\
\label{NSE_curl_etab}
\curls \etab&= -\sigma + \frac{1}{2}\chih \wedge\chibh,\\
\label{NSE_D_chibh}
\Dh(\Omega\chibh)&=\Omega^2(\nablas \tensor \etab + \etab \tensor \etab +\frac{1}{2}\tr\chi\chibh-\frac{1}{2}\tr\chib \chih),\\
\label{NSE_D_trchib}
D(\Omega\tr\chib)&=\Omega^2(2\divs\etab+2|\etab|^2-(\chih,\chibh)-\frac{1}{2}\tr\chi\tr\chib+2\rho),\\
\label{NSE_Db_chih}
\Dbh(\Omega\chih)&=\Omega^2(\nablas \tensor \eta + \eta \tensor \eta +\frac{1}{2}\tr\chib\chih-\frac{1}{2}\tr\chi \chibh),\\
\label{NSE_Db_trchi}
\Db(\Omega\tr\chi)&=\Omega^2(2\divs\eta+2|\eta|^2-(\chih,\chibh)-\frac{1}{2}\tr\chi\tr\chib+2\rho).
\end{align}

We also use the null frame to decompose second Bianchi identity $\nabla_{[a} R_{bc]de} = 0$ into components. This leads the following null Bianchi equations,
\begin{align}\label{NBE_Db_alpha}
\Dbh\alpha-\frac{1}{2}\Omega\tr\chib \alpha+2\omegab\alpha+\Omega\{-\nablas\tensor\beta -(4\eta+\zeta)\tensor \beta+3\chih \rho+3{}^*\chih \sigma\}&=0,\\
\label{NBE_D_alphab}
\Dh\alphab-\frac{1}{2}\Omega\tr\chi \alphab+2\omega\alphab+\Omega\{\nablas\tensor\betab +(4\etab-\zeta)\tensor \betab+3\chibh \rho-3{}^*\chibh \sigma\}&=0,\\
\label{NBE_D_beta}
D\beta+\frac{3}{2}\Omega\tr\chi\beta-\Omega\chih\cdot\beta-\omega\beta-\Omega\{\divs\alpha+(\etab+2\zeta)\cdot\alpha\}&=0,\\
\label{NBE_Db_betab}
\Db\betab+\frac{3}{2}\Omega\tr\chib\betab-\Omega\chibh\cdot\betab-\omegab\betab+\Omega\{\divs\alphab+(\eta-2\zeta)\cdot\alphab\}&=0,\\
\label{NBE_Db_beta}
\Db\beta+\frac{1}{2}\Omega\tr\chib\beta-\Omega\chibh \cdot \beta+\omegab \beta-\Omega\{\ds \rho+{}^*\ds \sigma+3\eta\rho+3{}^*\eta\sigma+2\chih\cdot\betab\}&=0,\\
\label{NBE_D_betab}
D\betab+\frac{1}{2}\Omega\tr\chi\betab-\Omega\chih \cdot \betab+\omega \betab+\Omega\{\ds \rho-{}^*\ds \sigma+3\etab\rho-3{}^*\etab\sigma-2\chibh\cdot\beta\}&=0,\\
\label{NBE_D_rho}
D\rho+\frac{3}{2}\Omega\tr\chi \rho-\Omega\{\divs \beta+(2\etab+\zeta,\beta)-\frac{1}{2}(\chibh,\alpha)\}&=0,\\
\label{NBE_Db_rho}
\Db\rho+\frac{3}{2}\Omega\tr\chib \rho+\Omega\{\divs \betab+(2\eta-\zeta,\betab)-\frac{1}{2}(\chih,\alphab)\}&=0,\\
\label{NBE_D_sigma}
D\sigma+\frac{3}{2}\Omega\tr\chi\sigma+\Omega\{\curls\beta+(2\etab+\zeta,{}^*\beta)-\frac{1}{2}\chibh\wedge\alpha\}&=0,\\
\label{NBE_Db_sigma}
\Db\sigma+\frac{3}{2}\Omega\tr\chib\sigma+\Omega\{\curls\betab+(2\etab-\zeta,{}^*\betab)+\frac{1}{2}\chih\wedge\alphab\}&=0.
\end{align}

To conclude this subsection, we recall how one prescribes characteristic data for Einstein vacuum equations on two transversally intersecting null hypersurfaces, say $C^{[0,\delta+1]}_{u_0}\cup \Cb_0$ in our current situation. In general, the initial data given on $C_{u_0}\cup\Cb_0$ should consist of the full metric $\gs_{0,u_0}$, torsion $\zeta$, outgoing expansion $\tr\chi$ and incoming expansion $\tr\chib$ on the intersecting sphere $S_{0,u_0}$ together with the conformal geometry on $C^{[0,\delta+1]}_{u_0}$ and $\Cb_{0}$. As we observed, the incoming surface $\Cb_{0}$ is a fixed cone in the Minkowski space-time, thus $\gs_{0,u_0}$, $\zeta$,$\tr\chi$ and $\tr\chib$ are already fixed on $S_{0,u_0}$. Therefore, to specify initial data, we only need to specify the conformal geometry on $C^{[0,\delta+1]}_{u_0}$. We will see how the \emph{short pulse} data of Christodoulou is prescribed in next subsection.

\subsection{The Work of Christodoulou}\label{The Work of Christodoulou}
We first discuss Christodoulou's short pulse ansatz presented in \cite{Chr}. As we mentioned earlier, we need to specify the conformal geometry on $C_{u_0}^{[0,\delta + 1]}$. Let $\Phi_{\ub}$ be the one parameter group generated by $L$. We can rewrite the induced metric $\gs|_{S_{\ub,u_0}}$ uniquely as $\gs|_{S_{\ub,u_0}}=(\phi|_{S_{\ub,u_0}})^2\widehat{\gs}|_{S_{\ub,u_0}}$, where $\phi|_{S_{\ub,u_0}}$ is a positive function, such that the metric $\Phi_{\ub}^*\widehat{\gs}|_{S_{\ub,u_0}}$ on $S_{0,u_0}$ has the same volume form as $\gs|_{S_{0,u_0}}$.  In this language, we only need to specify $\Phi_{\ub}^*\widehat{\gs}|_{S_{\ub,u_0}}$ freely on $S_{0,u_0}$ since it yields the conformal geometry on $C_{u_0}^{[0,\delta + 1]}$.

Let $\{(U_1,(\theta^A_1)),(U_2,(\theta^A_2))\}_{A=1,2}$ be the two stereographic charts on $S_{0,u_0}$. Thus, the round metric $\gs|_{S_{0,u_0}}$ is expressed as $(\gs|_{S_{0,u_0}})_{AB}(\theta)=\frac{|u_0|^2}{(1+\frac{1}{4}|\theta|^2)^2}\delta_{AB}$ with $\theta=\theta_1$ or $\theta_2$ and $|\theta|^2=|\theta^1|^2+|\theta^2|^2$. Since we require that $\widehat{\gs}(\ub)=\Phi_{\ub}^*\widehat{\gs}|_{S_{\ub,u_0}}$ on $S_{0,u_0}$ has the same volume form as $\gs|_{S_{0,u_0}}$, that is,
\begin{equation*}
\det(\widehat{\gs}(\ub)_{AB})(\theta)=\det(\gs(0)_{AB})(\theta)=\frac{|u_0|^4}{(1+\frac{1}{4}|\theta|^2)^4},
\end{equation*}
$\widehat{\gs}(\ub)$ is given by
\begin{equation*}
\widehat{\gs}(\ub)_{AB}(\theta)=\frac{|u_0|^2}{(1+\frac{1}{4}|\theta|^2)^2}m_{AB}(\ub,\theta)=\frac{|u_0|^2}{(1+\frac{1}{4}|\theta|^2)^2}\exp\psi_{AB}(\ub,\theta),
\end{equation*}
where $m_{AB}$ takes value in the set of positive definite symmetric matrices with determinant one and $\psi_{AB}$ takes value in the set of symmetric trace-free matrices. After this reduction, to prescribe initial data, we only need to specify a function
\begin{equation*}
\psi: [0,\delta+1] \times S_{0,u_0} \longrightarrow \widehat{S}_2, \quad (\ub, \theta) \mapsto \exp\psi_{AB}(\ub,\theta),
\end{equation*}
where $\widehat{S}_2$ denotes the set of $2\times 2$ symmetric trace-free matrices.

In Christodoulou's work \cite{Chr}, he only provided data on $C_{u_0}^{[0,\delta]}$, i.e. $\ub \in [0,\delta]$. First he chose a smooth compactly supported $\widehat{S}_2$-valued function $\psi_0 \in C^\infty_c((0,1) \times S_{0,u_0})$. Then he called the following specific data
\begin{equation}\label{shortpulse}
\psi(\ub,\theta)=\frac{\delta^{\frac{1}{2}}}{|u_0|}\psi_0(\frac{\ub}{\delta},\theta),
\end{equation}
the \emph{short pulse ansatz} and he called $\psi_0$ the \emph{seed data}.

\begin{remark}\label{Remark on initial ansatz}
A key ingredient for the current work is to give further restriction on the seed data $\psi_0$ (see \eqref{integral=m0} or \eqref{integral=m0 different}).  We shall further extend $\psi$ to the whole region $[0,\delta+1] \times S_{0,u_0}$ by zero.
\end{remark}

To state the main theorems in \cite{Chr}, especially the energy estimates, we also need to define some norms. Let $k \in \mathbb{Z}_{\geq 0}$ be a non-negative integer, on each outgoing cone $C_u^{[0,\delta]}$, we define
\begin{equation*}
\mathcal{R}_k(u)=\delta^{-\frac{1}{2}}|u|^{-1}(\|\delta^{\frac{3}{2}}|u|(|u|\nabla)^k\alpha\|
+\|\delta^{\frac{1}{2}}|u|^2(|u|\nabla)^k\beta\|+\||u|^3(|u|\nabla)^k(\rho,\sigma)\
+\|\delta^{-1}|u|^4(|u|\nabla)^k\underline{\beta}\|),
\end{equation*}
and
\begin{align*}
\mathcal{O}_{k+1}(u)&=\delta^{-\frac{1}{2}}|u|^{-1}\|\delta^{\frac{1}{2}}|u|(|u|\nabla)^{k+1}\widehat{\chi}\|
+\||u|^2(|u|\nabla)^{k+1}(\mathrm{tr}\chi-\frac{2}{|u|})\|+\|\delta^{-\frac{1}{2}}|u|^2(|u|\nabla)^{k+1}\underline{\widehat{\chi}}\|\\
&+\|\delta^{-1}|u|^3(|u|\nabla)^{k+1}(\mathrm{tr}\underline{\chi}+\frac{2}{|u|}-\frac{2\underline{u}}{|u|^2})\|+\||u|^2(|u|\nabla)^{k+1}(\eta,\underline{\eta})\| \\
&+\|\delta^{-1}|u|^3(|u|\nabla)^{k+1}\underline{\omega}\|+\|\delta^{\frac{1}{2}}|u|(|u|\nabla)^{k+1}\omega\|,
\end{align*}
where all the norms $\|\cdot \|$ are taken with respect to  $\|\cdot \|_{L^2(C_u^{[0,\delta]})}$; on each incoming cone $\Cb_{\ub}$, we define
\begin{equation*}
\underline{\mathcal{R}}_k[\underline{\alpha}](\underline{u})=\||u|^{-\frac{3}{2}}\delta^{-\frac{3}{2}}|u|^{\frac{9}{2}}(|u|\nabla)^k\underline{\alpha}\|_{L^2(\underline{C}_{\underline{u}})}.
\end{equation*}

Recall that we use $\Gamma$ or $R$ to denote an arbitrary connection coefficient or curvature component. We now use $\mathcal{R}_k[R](u),\mathcal{O}_{k+1}[\Gamma](u)$ to represent the corresponding norms of this given component. To be more precise, let $\phi = \Gamma$ or $R$, we have
\begin{align*}
&\mathcal{R}_k[\phi](u)=\delta^{-\frac{1}{2}}|u|^{-1}\|\delta^{-r(\phi)}|u|^{-s(\phi)}(|u|\nablas)^k\phi\|_{L^2(C_u)},\\
&\mathcal{\underline{R}}_k[\phi](\ub)=\||u|^{-\frac{3}{2}}\delta^{-r(\phi)}|u|^{-s(\phi)}(|u|\nablas)^k\phi\|_{L^2(\Cb_{\ub})},\\
&\mathcal{O}_{k+1}[\phi](u)=\delta^{-\frac{1}{2}}|u|^{-1}\|\delta^{-r(\phi)}|u|^{-s(\phi)}(|u|\nablas)^{k+1}\phi\|_{L^2(C_u)}.
\end{align*}
where the $r(\phi)$ or $s(\phi)$ can be easily retrieved from the precise definitions for each specific component. Similarly, we introduce
\begin{align*}
&\mathcal{R}^4_{k-1}[\phi](\ub,u)=|u|^{-\frac{1}{2}}\|\delta^{-r(\phi)}|u|^{-s(\phi)}(|u|\nablas)^{k-1}\phi\|_{L^4(S_{\ub,u})},\\
&\mathcal{R}^\infty_{k-2}[\phi](\ub,u)=\|\delta^{-r(\phi)}|u|^{-s(\phi)}(|u|\nablas)^{k-2}\phi\|_{L^\infty(S_{\ub,u})}.
\end{align*}
Finally, we define total norms
\begin{equation*}
\mathcal{R}_k=\sup_u\mathcal{R}_k(u), \, \mathcal{O}_{k+1}=\sup_u\mathcal{O}_{k+1}(u), \,\mathcal{R}_{\le k} = \sum_{j \leq k}\mathcal{R}_{j}, \,\mathcal{O}_{\le k} = \sum_{j \leq k}\mathcal{O}_{j}.
\end{equation*}

We state the main result in \cite{Chr}. Roughly speaking, it asserts that up to two derivatives on curvature, all the norms ($k\leq 2$) defined above can propagate along the evolution of Einstein vacuum equations.
\begin{ChristodoulouMainEstimates}[Theorem 16.1 in \cite{Chr}]\label{Christodoulou}
If $\delta>0$ is sufficiently small (depending on the $C^8$ bound of the seed data $\psi_0$), there exists a unique solution for the Einstein field equations on a domain $M$ corresponding to $0\leq\ub\leq\delta$ and $u_0\leq u\leq-1-\delta$. Moreover, the following total norms
\begin{equation*}
\mathcal{R}_{\le2},\,\underline{\mathcal{R}}_{\le2}[\alphab], \,\mathcal{R}^4_{\le1},\, \mathcal{R}^\infty_0,\, \mathcal{O}_{\le3},\, \mathcal{O}^4_{\le2},\, \mathcal{O}^\infty_{\le1},
\end{equation*}
are bounded by a constant depending on the $C^8$ bound of the seed data $\psi_0$.
\end{ChristodoulouMainEstimates}
The main consequence of this theorem is the dynamical formation of trapped surfaces. In \cite{Chr}, Christodoulou showed that if
\begin{equation*}
\frac{1}{8}\int_{0}^1 |\frac{\partial \psi_0}{\partial s}(s,\theta)|^2 \D s >1,
\end{equation*}
uniformly in $\theta$, then a trapped surface forms in the future of $C_{u_0}$.

\subsection{The Work of Corvino-Schoen}\label{The Work of Corvino-Schoen}
We follow closely the notations used by Corvino and Schoen in \cite{C-S} unless there are conflicts with the current work. We first recall some definitions on a given three dimensional space-like slice $\Sigma$ in a vacuum space-time. The vacuum initial data on $\Sigma$ consist of a Riemannian metric $\bar{g}$ and a symmetric $2$-tensor $\bar{k}$ subject to the following constraints (to distinguish from the notations in four dimensions, we shall always use barred notations in three dimensions, e.g. $\bar{\nabla}$ denotes the Levi-Civita connection associated to $\bar{g}$):
\begin{equation*}
\begin{split}
R(\bar{g}) -|\bar{k}|^2 + \bar{h}^2 &= 0,\\
\text{div}_{\bar{g}} \bar{k} - \D \, \bar{h} &=0,
\end{split}
\end{equation*}
where $\bar{h} = \text{tr}_{\bar{g}}\bar{k} = \bar{g}_{ij} \bar{k}^{ij}$ denotes the mean curvature, $\text{div}_{\bar{g}}\bar{k}_i =
\bar{\nabla}_j \bar{k}^{ij}$, and all quantities are computed with respect to $\bar{g}$. One then rewrites the constraint equations by introducing the momentum tensor $\bar{\pi}_{ij} = \bar{k}_{ij} -\bar{h}\cdot{\bar{g}}_{ij}$. Let $\mathcal{H}$ and $\Phi$ be the following maps:
\begin{equation*}
\begin{split}
\mathcal{H}(\bar{g},\bar{\pi}) &= R_{\bar{g}}+ \frac{1}{2}(\tr{\bar{\pi}})^2-|\bar{\pi}|^2,\\
\Phi(\bar{g},\bar{\pi})  &=(\mathcal{H}(\bar{g},\bar{\pi}), \text{div}_{\bar{g}}\bar{\pi}).
\end{split}
\end{equation*}
The constraints then take the form $\Phi(\bar{g},\bar{\pi})= 0$.

We use $\mathcal{M}^{k,\alpha}(\Sigma)$, $\mathcal{S}^{k,\alpha}(\Sigma)$ and $\mathcal{X}^{k,\alpha}(\Sigma)$ to denote the set of Riemannian metric, symmetric two tensors and vector fields on $\Sigma$ with $C^{k,\alpha}$ regularity respectively. Thus, we have
\begin{equation*}
\Phi : \mathcal{M}^{k+2,\alpha}(\Sigma) \times \mathcal{S}^{k+2,\alpha}(\Sigma) \rightarrow C^{k,\alpha}(\Sigma) \times \mathcal{X}^{k+1,\alpha}(\Sigma).
\end{equation*}
The formal $L^2$-adjoint operator $D \Phi^*_{(\bar{g},\bar{\pi})}$ of the linearization $D \Phi_{(\bar{g},\bar{\pi})}$ is then given by
\begin{equation}\label{adjoint_of_linearization}
\begin{split}
D\mathcal{H}^*_{(\bar{g},\bar{\pi})}(f) &= ((L^*_{\bar{g}}f)_{ij}+(\tr_{\bar{g}}\bar{\pi} \cdot \bar{\pi}_{ij}-2 \bar{\pi}_{ik}\bar{\pi}^{k}{}_{j})f,(\tr_{\bar{g}}\bar{\pi} \cdot \bar{g}_{ij}-2 \bar{\pi}_{ij})f),\\
D \text{div}^*_{(\bar{g},\bar{\pi})}(X)  &=\frac{1}{2}(\mathcal{L}_X \bar{\pi}_{ij}+ \bar{\nabla}_k X^k \pi_{ij}-(X_i (\bar{\nabla}_k \pi^k{}_j + X_j \bar{\nabla}_k \pi^{k}{}_i)\\
&\quad-(\bar{\nabla}_m X_k \pi^{km} + X_k \bar{\nabla}_m \pi^{mk})g_{ij}, \quad -\mathcal{L}_Xg_{ij}),
\end{split}
\end{equation}
where $L^*_{\bar{g}}f = -\triangle_{\bar{g}} f + \bar{\nabla}^2 f - f \cdot\text{Ric}(\bar{g})$ is the formal $L^2$-adjoint of the linearization of the scalar curvature operator.

Let $\Omega \subset \Sigma$ be a given bounded domain with smooth boundary and $\rho$ be a smooth positive function on $\Omega$ which near $\Omega$ decays as a power of distance to the boundary, i.e. $\rho \sim d^N$ where $d$ is the distance to the boundary and $N$ will be fixed later (see the theorem below). The weighted H\"{o}lder space $C^{k,\alpha}_{\rho^{-1}}(\Omega)$ is defined by the norm $\|f\|_{C^{k,\alpha}_{\rho^{-1}}} = \|f\rho^{-\frac{1}{2}}\|_{C^{k,\alpha}}$ in the obvious way; similarly, we can define those spaces for tensors. We are ready to state the local deformation theorem which will play a key role in our gluing construction.
\begin{LocalDeformationTheorem}[Theorem 2 in \cite{C-S}]
Let $\zeta \in C_0^{\infty}(\Omega)$ be a bump function and $(g_0,\pi_0) \in \mathcal{M}^{k+4,\alpha}(\Sigma) \times \mathcal{S}^{k+3,\alpha}(\Sigma)$. Then for $N$ sufficiently large, there is an $\varepsilon > 0$ such that for all pair $(v,W) \in C^{k,\alpha}(\bar{\Omega}) \times \mathcal{X}^{k+1,\alpha}(\bar{\Omega})$ with the support of $(v,W)-\Phi(g_0,\pi_0)$ contained in $\bar{\Omega}$ and with $\|(v,W)-\Phi(g_0,\pi_0)\|_{C_{\rho^{-1}}^{k,\alpha}(\bar{\Omega}) \times \mathcal{X}_{\rho^{-1}}^{k+1,\alpha}(\bar{\Omega})} < \varepsilon$, there is a pair $(g,\pi) \in \mathcal{M}^{k+2,\alpha}(\Sigma) \times \mathcal{S}^{k+2,\alpha}(\Sigma)$, such that $\Phi(g,\pi)-(v,W) \in \zeta \cdot \text{Ker} \,D\Phi^*_{(g_0,\pi_0)}$ in $\Omega$ and $(g,\pi) = (g_0,\pi_0)$ outside $\Omega$. Moreover, $(g,\pi) \in \mathcal{M}^{k+2,\alpha}(\Sigma) \times \mathcal{S}^{k+2,\alpha}(\Sigma)$ depends continuously on $(v,W)-\Phi(g_0,\pi_0) \in C_{\rho^{-1}}^{k,\alpha}(\bar{\Omega}) \times \mathcal{X}_{\rho^{-1}}^{k+1,\alpha}(\bar{\Omega})$.
\end{LocalDeformationTheorem}

In \cite{C-S}, Corvino and Schoen used this theorem to approximate asymptotically flat initial data for the vacuum Einstein field equations by solutions which agree with the original data inside a given domain and are identical to that of a suitable Kerr slice. We will use this theorem in a similar manner in a different situation to prove our main theorem.

\subsection{The Structure of the Proof}

This section is devoted to an outline of the proof. It consists of three steps.

\begin{itemize}
\item \textbf{Step 1}\, Higher Order Energy Estimates.
\end{itemize}
As we mentioned in the introduction, since we would like to use Corvino-Schoen construction (see the Local Deformation Theorem in Section \ref{The Work of Corvino-Schoen}) which requires certain regularity of the space-time, we are obliged to derive higher order energy estimates. This is accomplished in Section \ref{Higher Order Energy Estimates}.

We will derive the estimates on $M$ where $0\leq \ub \leq \delta$. Christodoulou's work \cite{Chr} already proved such estimates on the level of two derivatives on the curvature (see the Christodoulou Main Estimates in Section \ref{The Work of Christodoulou}). His estimates are already good enough to construct the space-time. Thus, in our case, we can use an induction (on the number of derivatives) argument instead a bootstrap argument.

Although this part should be regarded as the routine proof of the persistence of regularity for vacuum Einstein field equations, we would like to emphasize that by affording more derivatives, we can gain smallness in $\delta$ compared to Christodoulou's work. For example, in \cite{Chr}, in $L^\infty$ norm, we have
\begin{equation*}
|\nablas \eta|\lesssim \frac{1}{|u|^3},
\end{equation*}
while, in the current work, since we can control the third derivatives on the curvature components, we actually have
\begin{equation*}
|\nablas \eta|\lesssim \frac{1}{|u|^3}\delta^{\frac{1}{2}}.
\end{equation*}
The gain in $\delta$ will play a crucial role in Step 2.

\begin{itemize}
\item \textbf{Step 2}\, Construction of a Transition Region Close to the Schwarzschild Space-time with Mass $m_0$.
\end{itemize}

This step is the main innovation of the paper and it is completed in Section \ref{Construction of the Transition Region}.  Roughly speaking, this step build a bridge from Christodoulou's work to the Corvino-Schoen construction. By solving the vacuum Einstein field equations, we can construct a region with a fixed size (independent of the small parameter $\delta$) which is close (measured by $\delta$) to a region in the domain of outer communication in the Schwarzschild space-time with mass $m_0$. It is precisely on this region, or more precisely a space-like hypersurface inside this region, where we can use the Corvino-Schoen construction.

As we mentioned in Remark \ref{Remark on initial ansatz}, in addition to Christodoulou's short pulse ansatz described in Section \ref{The Work of Christodoulou}, we need to impose one more condition on the initial data defined on $C_{u_0}^{[0,\delta]}$, that is, for all $\theta$,
\begin{equation}\label{integral=m0}
\int_0^1\left|\frac{\partial\psi_0}{\partial s}(s,\theta)\right|^2\D s=16m_0.
\end{equation}
This condition is slightly different from the condition proposed in Remark \ref{Remark on condition on initial mass}, namely,
\begin{equation}\label{integral=m0 different}
|u_0|^2 \int_0^{\delta}|\chih(\ub,u_0,\theta)|^2_{\gs}\D\ub = 4 m_0.
\end{equation}
The advantage of using the former over the latter is that as $\delta$ changes, \eqref{integral=m0} is always valid, but \eqref{integral=m0 different} is not. However, under the condition \eqref{integral=m0}, \eqref{integral=m0 different} is valid up to an error of size $\delta$ to some positive power. The condition \eqref{integral=m0 different} is more physical and easier to use. In this paper, although we impose data via the seed function $\psi_0$, we shall stick to the the latter condition.

The condition \eqref{integral=m0 different} is the key ingredient to Step 2 and most of the estimates are directly tied to it. The condition \eqref{integral=m0 different} also has a clear physical interpretation: it requires the incoming gravitational energy per solid angle in the advanced time interval $[0,\delta]$ to be the same for all angles $\theta$. Roughly speaking, we impose certain spherical symmetry on the initial data.

In Section \ref{Section Geometry on intersecting sphere}, we show that the sphere $S_{\delta,u_0}$ is $\delta$-close to a given sphere in the Schwarzschild space-time with mass $m_0$; here, the closeness is measured in $C^k$ norms. The proof is based on condition \eqref{integral=m0 different}.

In Section \ref{Section Geometry on incoming cone}, we show that the incoming cone $\Cb_{\delta}$ is $\delta$-close to a given incoming cone in the Schwarzschild space-time with mass $m_0$. This cone is rooted on the given sphere Section \ref{Section Geometry on intersecting sphere}. The proof is based on condition \eqref{integral=m0 different} as well as the higher order energy estimates derived in Section \ref{Higher Order Energy Estimates}.

In Section \ref{Section Geometry on outgoing cone}, we further extend our data on $C_{u_0}^{[\delta, \delta+1]}$ smoothly by setting
\begin{equation}\label{data on C delta plus one}
\chih\equiv 0, \quad \delta\leq \ub \leq \delta +1.
\end{equation}
We then show that the outgoing cone $C_{u_0}^{[\delta, \delta+1]}$ is $\delta$-close to a given outgoing cone in the Schwarzschild space-time with mass $m_0$. This cone is also rooted on the given sphere Section \ref{Section Geometry on intersecting sphere}. The proof is based on condition \eqref{integral=m0 different}.

In Section \ref{Section transition slice}, since the Sections \ref{Section Geometry on intersecting sphere}, \ref{Section Geometry on incoming cone} and \ref{Section Geometry on outgoing cone} provide a characteristic data set for the vacuum Einstein field equations, we will solve the field equations to further extend Christodoulou's solution. By virtue of the closeness to the Schwarzschild data, we show that the resulting space-time is close to a region in the domain of outer communication in the Schwarzschild space-time with mass $m_0$. Furthermore, we will also see that this region has a fixed size, namely, it will not shrink to zero when we decrease the parameter $\delta$.

\begin{itemize}
\item \textbf{Step 3}\, Gluing a Kerr Slice.
\end{itemize}
This step is completed in Section \ref{Gluing Construction}. As a consequence of Section \ref{Section transition slice}, we can choose a three dimensional space-like region for gluing. Then we follow the procedure in \cite{C} and \cite{C-S}. Thanks to the closeness derived in Step 2, we can almost explicitly write down the kernel of the adjoint of the linearized operator of the constraint map. This allows one to use the Local Deformation Theorem in Section \ref{The Work of Corvino-Schoen}. The kernel will have four dimensions. Combined with a fixed point argument, we use a four parameter family of Kerr space-time, namely the mass $m$ and the angular momentum $\mathbf{a}$ to kill the kernel. We also show that the resulting initial data is free of trapped surfaces.

\section{Higher Order Energy Estimates}\label{Higher Order Energy Estimates}
In this section, we derive energy estimates for higher order derivatives on curvature components. At the same time, this also yields the control of higher order derivatives on connection coefficients. More precisely, we shall prove
\begin{theorem}\label{Higherorder}
 Let $k \in \mathbb{Z}_{\geq 3}$. If $\delta>0$ is sufficiently small depending $C^{k+N}$ norm of the seed data $\psi_0$ for sufficient large $N$, then the following quantities
\begin{align*}
\mathcal{R}_{\le k}, \underline{\mathcal{R}}_{\le k}[\alphab], \mathcal{R}^4_{\le{k-1}}, \mathcal{R}^\infty_{\le k-2}, \mathcal{O}_{\le{k+1}}, \mathcal{O}^4_{\le k}, \mathcal{O}^\infty_{\le k-1},
\end{align*}
are bounded by a constant depending on $C^{k+N}$ bounds of the seed data.
\end{theorem}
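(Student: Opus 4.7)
The plan is to prove Theorem \ref{Higherorder} by induction on $k$, with the base case $k = 2$ furnished directly by the Christodoulou Main Estimates. Because the $k=2$ theory already guarantees existence and qualitative smoothness of the space-time on $M = \{0 \leq \ub \leq \delta, \, u_0 \leq u \leq -1-\delta\}$, no bootstrap argument is needed: I will derive a priori quantitative bounds at level $k$ assuming quantitative bounds at level $\leq k-1$ with their correct $\delta$-signatures. The central observation, which the authors emphasize in the Comments on the Proof, is that a non-top-order derivative of a connection coefficient always carries one additional power of $\delta^{1/2}$ compared with the top-order derivative, because it can be recovered by integrating a transport equation in the short direction $\partial_{\ub}$ over an interval of length $\delta$.

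For the inductive step I would proceed in the following order. First, commute the null Bianchi equations \eqref{NBE_Db_alpha}--\eqref{NBE_Db_sigma} with $\nablas^k$ to obtain equations of the schematic form $D\,\nablas^k R_1 + \cdots = \nablas^{k+1} R_2 + \sum \nablas^{a}\Gamma \cdot \nablas^b R$ (and corresponding $\Db$-equations) for $a+b \leq k$, where $(R_1,R_2)$ ranges over the standard Bianchi pairs $(\alpha,\beta)$, $(\beta,(\rho,\sigma))$, $((\rho,\sigma),\betab)$ and $(\betab,\alphab)$. The commutator errors are all strictly below top order in at least one factor, so by the inductive hypothesis they come with an extra $\delta^{1/2}$ relative to the naive scaling. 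Second, perform the standard pairing-and-integration-by-parts identity on the space-time slab bounded by $C_u$, $C_{u_0}$, $\Cb_{\ub}$ and $\Cb_0$ to obtain, for each Bianchi pair,
\begin{equation*}
\|\nablas^k R_1\|^2_{L^2(C_u)} + \|\nablas^k R_2\|^2_{L^2(\Cb_{\ub})} \lesssim (\text{data}) + \iint \text{errors},
\end{equation*}
with the correct $\delta$-weights built in; Gronwall then closes $\mathcal{R}_{\leq k} + \underline{\mathcal{R}}_{\leq k}[\alphab]$. Once these are controlled, I would treat the null structure equations \eqref{NSE_Dh_chih}--\eqref{NSE_Db_trchi} commuted with $\nablas^{k+1}$ as transport equations in $\ub$ or $u$ and integrate them to bound $\mathcal{O}_{\leq k+1}$; wherever direct transport integration would cost a derivative (as it does for $\nablas^{k+1}\eta$, $\nablas^{k+1}\etab$), I would instead close a $\nablas^{k}$-commuted Hodge system built from \eqref{NSE_curl_eta}--\eqref{NSE_curl_etab} together with the mass-aspect identities $\mu,\mub$ used in \cite{Chr}, exchanging the propagation estimate for an elliptic estimate on $S_{\ub,u}$ which neither loses a derivative nor loses the inductive smallness. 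Finally, $\mathcal{R}^4_{\leq k-1}$, $\mathcal{R}^\infty_{\leq k-2}$, $\mathcal{O}^4_{\leq k}$ and $\mathcal{O}^\infty_{\leq k-1}$ will follow from the $L^2$ control on the full hypersurfaces by the sharp trace/Sobolev inequalities on null cones already used in \cite{Chr}.

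The main obstacle is the same one that appeared at level $k=2$ in \cite{Chr}: the top-order commutator term involving $\nablas^{k+1}\Gamma$ formally loses one derivative and, crucially, does not come with the extra $\delta^{1/2}$ that the inductive hypothesis supplies at intermediate orders. This is precisely the place where Christodoulou's device of introducing mass-aspect functions $\mu$ and $\mub$ and recasting $(\eta,\etab)$ as solutions of Hodge systems must be redeployed at order $k+1$; the elliptic gain recovers the missing derivative, while the unavoidable loss of smallness at top order is absorbed into the constants, since Theorem \ref{Higherorder} claims only boundedness of these top-order norms, not smallness in $\delta$. A secondary bookkeeping obstacle is that the commutator $[\nablas,D]$ produces factors of $\chi$ (similarly $[\nablas,\Db]$ produces $\chib$), so one must carry the full tree of $\nablas^{\leq k}\chi$, $\nablas^{\leq k}\chib$ estimates through each commutation; this is routine provided the induction is set up so that $\mathcal{O}_{\leq k+1}$ is always known one step ahead of where $\mathcal{R}_{\leq k+1}$ is being estimated.
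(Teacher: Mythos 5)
Your overall plan -- induction on $k$ (with base $k\le 2$ from Christodoulou), transport integration for the connection coefficients, Hodge systems with mass-aspect functions for $\eta,\etab$, energy estimates on commuted Bianchi pairs, followed by trace/Sobolev inequalities -- is in the right spirit, but you deviate from the paper's argument in one major respect and have two genuine gaps.

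The major deviation: the paper commutes the Bianchi system with the rotational Lie derivatives $\LieO^k$, $\LieOh^k$ rather than with $\nablas^k$. This forces a whole extra layer (Section~3.3) controlling the rotation fields $O_i$ and their deformation tensors $\piOs, \piOsh, Z$, and the resulting circular dependence $\mathcal{R}_k \to \mathcal{O}_{\leq k+1} \to \nablas^{\le k}\piOs \to \mathcal{R}_k$ is closed at the end by a continuity argument with $\delta$ small, exactly as in \cite{Chr}. Your $\nablas^k$ commutation would make the commutator $[\nablas^k,D]$ produce only $\nablas^{\le k}\chi$ terms (already known from the induction hypothesis $\mathcal{O}_{\le k}$), so in principle the dependence could be made acyclic: close $\mathcal{R}_{\le k}$ first, then derive $\mathcal{O}_{\le k+1}$. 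That would be a genuine simplification if it works, but you would have to redo the whole bookkeeping of the $\delta$-weights in the error terms since the paper's computations are for the Lie-derivative commutators. It is not obvious this is ``routine''; in particular, the identity $\LieOh^k \theta$ stays traceless while $\nablas^k\theta$ does not, so the integration-by-parts pairing on the null cones would require care.

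Two gaps. First, you assert that the remaining loss of smallness ``is absorbed into the constants, since Theorem \ref{Higherorder} claims only boundedness of these top-order norms, not smallness in $\delta$.'' This is incorrect: the $\delta$-weights are built into the definitions of $\mathcal{O}_{k+1}$, $\mathcal{R}_k$, etc., so boundedness of the weighted norms \emph{is} the assertion of a specific $\delta$-signature on the unweighted quantities. Any loss of smallness would make $\mathcal{O}_{k+1}$ blow up as $\delta \to 0$. Second, and related, you have no analogue of the auxiliary bootstrap assumption \eqref{bootstrap_Omega} on $\nablas^{k+1}\log\Omega$. The chain $\nablas^{k+1}\chih' \to \nablas^k(\eta,\etab) \to \nablas^{k+1}\tr\chib' \to \nablas^{k+1}(\eta,\etab) \to \nablas^{k+1}\omegab$ in Section~3.2 cannot be closed with the right $\delta$-signature if one simply plugs $\nablas^{k+1}\log\Omega = \tfrac12\nablas^k(\eta+\etab)$ into the $\Db\tr\chib'$ equation: the naive estimate on $\nablas^k(\eta+\etab)$ gives only $O(\delta)$, whereas the bootstrap supplies the needed $O(\delta^{3/2})$, and the gain is recovered at the end from the $\omegab$ estimate and $\omegab = \Db\log\Omega$. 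Without this device the estimate \eqref{estimate_trchib} -- and hence the claimed bound on $\mathcal{O}_{k+1}[\tr\chib]$ -- would not close. Finally, you also underestimate the last step: the closure in Section~3.5 is not a single Gronwall pass but a coupled system of self-improving inequalities for $\mathcal{R}_k[\alpha], \mathcal{R}_k[\beta], \mathcal{R}_k[\rho,\sigma], \mathcal{R}_k[\betab], \underline{\mathcal{R}}_k[\alphab]$ that must be weighted by different powers of $\delta$ and substituted into each other before the smallness of $\delta$ can be invoked.
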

The integer $N$ is chosen such that for all $k$, $\mathcal{R}_k(u_0)$ is bounded by a constant depending on the $C^{k+N}$ bound of the seed data. We remark that for $k\leq 2$, the theorem was proved in \cite{Chr} with $N=6$. In fact the precise value of $N$ is not important in the current work. Then we shall proceed by induction on $k$. We then make the induction assumption as follows, for $k \geq 3$
\begin{align}\label{induction}
\mathcal{R}_{\le k-1} + \underline{\mathcal{R}}_{\le k-1}[\alphab]+\mathcal{R}^4_{\le{k-2}}+\mathcal{R}^\infty_{\le k-3}+\mathcal{O}_{\le{k}} + \mathcal{O}^4_{\le k-1}+\mathcal{O}^\infty_{\le k-2} \le F_{k-1+N},
\end{align}
where $F_i$ is depending only on the $C^i$ bound of seed data.

\subsection{Sobolev Inequalities and Elliptic Estimates for Hodge Systems}
In this section, we recall the elliptic estimates for Hodge systems on $S_{\ub,u}$. Together with Sobolev inequalities, this will serve as the basic tools for us to control the $L^\infty$ norms. We the refer to \cite{Chr} for the proof and we shall take $ \delta>0$ to be sufficiently small as in \cite{Chr}. For the sake of simplicity, we use $S$ to denote $S_{\ub,u}$, $C$ to denote $C_u$ and $\Cb$ to denote $\Cb_{\ub}$.

We first collect all the Sobolev inequalities as follows:
\begin{equation}
\label{Sobolev_L2_L4}\|\xi\|_{L^4(S)}\lesssim|u|^{\frac{1}{2}}\|\nablas\xi\|_{L^2(S)}+|u|^{-\frac{1}{2}}\|\xi\|_{L^2(S)},
\end{equation}
\begin{equation}
\label{Sobolev_L4_Linfinity}\|\xi\|_{L^\infty(S)}\lesssim|u|^{\frac{1}{2}}\|\nablas\xi\|_{L^4(S)}+|u|^{-\frac{1}{2}}\|\xi\|_{L^4(S)},
\end{equation}
\begin{equation}
\label{SobolevC_L2_L4}\sup_{\ub}(|u|^{\frac{1}{2}}\|\xi\|_{L^4(S)})\lesssim\|\xi\|_{L^4(S)}+\|D\xi\|^{\frac{1}{2}}_{L^2(C)}(\|\xi\|^{\frac{1}{2}}_{L^2(C)}+|u|^2\|\nablas\xi\|^{\frac{1}{2}}_{L^2(C)}),
\end{equation}
\begin{equation}
\label{SobolevCb_L2_L4}\sup_u(|u|^q\|\xi\|_{L^4(S)})\lesssim|u_0|^q\|\xi\|_{L^4(S_{\ub,u_0})}+\||u|^q\Db\xi\|^{\frac{1}{2}}_{L^2(\Cb)}(\||u|^{q-1}\xi\|^{\frac{1}{2}}_{L^2(\Cb)}+\||u|^q\nablas\xi\|^{\frac{1}{2}}_{L^2(\Cb)}).
\end{equation}

For Hodge systems, we have the following two cases. First of all, if $\theta$ is a traceless symmetric two tensor, such that
$$\divs\theta=f,$$
where $f$ is a one form on $S$, thus,
\begin{align}\label{Hodge1}
\|\nablas^{k+1}\theta\|&_{L^2(S)}\lesssim\|\nablas^kf\|_{L^2(S)}+|u|^{-1}\|\nablas^{k-1}f\|_{L^2(S)}+|u|^{-1}\|\nablas^{k}\theta\|_{L^2(S)}\\\nonumber
&+|u|^{-2}\|\nablas^{k-1}\theta\|_{L^2(S)}+\sum_{i=1}^{k-1}(\|\nablas^iK\|_{L^2(S)}+|u|^{-1}\|\nablas^{i-1}K\|_{L^2(S)})\|\nablas^{k-1-i}\theta\|_{L^\infty};
\end{align}
Secondly, for $1$-form $\xi$ satisfies
\begin{equation*}
\divs\xi=f, \quad \curls\xi=g,
\end{equation*}
we have
\begin{align}\label{Hodge2}
\nonumber \|\nablas^{k+1}\xi\|_{L^2(S)}\lesssim&\|\nablas^kf\|_{L^2(S)}+|u|^{-1}\|\nablas^{k-1}f\|_{L^2(S)} +\|\nablas^kg\|_{L^2(S)}+|u|^{-1}\|\nablas^{k-1}g\|_{L^2(S)}\\
&+|u|^{-1}(\|\nablas^{k}\xi\|_{L^2(S)}+|u|^{-1}\|\nablas^{k-1}\xi\|_{L^2(S)})\\\nonumber
&+\sum_{i=1}^{k-1}(\|\nablas^iK\|_{L^2(S)}+|u|^{-1}\|\nablas^{i-1}K\|_{L^2(S)})\|\nablas^{k-1-i}\xi\|_{L^\infty}.
\end{align}

Finally, we also collect two Gronwall type estimates for later uses. They can be found in Chapter 4 of \cite{Chr}.

If we have
$$
D\theta=\frac{\nu}{2}\Omega\tr\chi\theta+\gamma\cdot\theta+\xi,
$$
with $|\gamma|\le m\Omega|\chih|$ where $\nu$ and $m$ are constants, then
\begin{align*}
\|\theta\|_{L^p(S_{\ub,u})}\lesssim_{p,\nu,m} \|\theta\|_{L^p(S_{0,u})}+\int_0^{\ub}\|\xi\|_{L^p(S_{\ub',u})}.
\end{align*}
If we have
\begin{align*}
\Db\theta=\frac{\nu}{2}\Omega\tr\chib\theta+\gamma\cdot\theta+\xi
\end{align*}
where $\theta$ is a $r$-covariant tangential tensor field and $|\gamma|\le m\Omega|\chibh|$ with some constants $\nu$ and $m$, then
\begin{align*}
|u|^{r-\nu-\frac{2}{p}}\|\theta\|_{L^p(S_{\ub,u})}\lesssim_{p,r,\nu,m} |u_0|^{r-\nu-\frac{2}{p}}\|\theta\|_{L^p(S_{\ub,u_0})}+\int_{u_0}^{u}|u'|^{r-\nu-\frac{2}{p}}\|\xi\|_{L^p(S_{\ub,u'})}.
\end{align*}

\subsection{Estimates for Connection Coefficients}

\subsubsection{Estimates for $\nablas^{k+1}\chih'$ and $\nablas^{k+1}\tr\chi'$}
We first consider $\nablas^{k+1}\chih'$ and $\nablas^{k+1}\tr\chi'$ and we will give all the details for these two terms. In the following sections, since we shall proceed in a similar way, we will be sketchy and only emphasize the key points.

 In view of the terms on the right hand side of \eqref{NSE_div_chih}, by induction hypothesis (\ref{induction}), we compute
\begin{align*}
\|\nablas^{k}(\chih'\cdot\eta)\|_{L^2(C_u)}&\lesssim\|\nablas^k\chih'\|_{L^2(C_u)}\|\eta\|_{L^\infty}+\int_0^{\ub}\sum_{i=1}^{k-1}\|\nablas^i\chih'\|_{L^4(S_{\ub',u})}\|\nablas^{k-1-i}\eta\|_{L^4(S_{\ub',u})}\\
&+\|\chih'\|_{L^\infty}\|\nablas^k\eta\|_{L^2(C_u)}\lesssim\delta^{-\frac{1}{2}}|u|^{-3-k}\cdot\delta^{\frac{1}{2}}|u|,
\end{align*}
We remark that the factor $\delta^{\frac{1}{2}}|u|$ comes from the norms $\|\cdot\|_{L^2(C_u)}$. Similarly, we have
\begin{align*}
\|\nablas^{k}(\tr\chi'\eta)\|_{L^2(C_u)}\lesssim|u|^{-3-k}\cdot\delta^{\frac{1}{2}}|u|.
\end{align*}
If one replaces $k$ by $k-1$, it is obvious that the above estimates also hold. We denote $i=-\chih'\cdot\eta+\frac{1}{2}\tr \chi'\eta-\Omega^{-1}\beta$, i.e. those terms in \eqref{NSE_div_chih}, we have
\begin{equation}\label{estimates for i 1}
\|\nablas^k i\|_{L^2(C_u)}+|u|^{-1}\|\nablas^{k-1}i\|_{L^2(C_u)}\lesssim\|\nablas^k\beta\|_{L^2(C_u)}+\delta^{-\frac{1}{2}}|u|^{-3-k}\cdot\delta^{\frac{1}{2}}|u|.
\end{equation}
In view of \eqref{Hodge1} and \eqref{Hodge2}, we turn to the estimates on Gauss curvature $K$. For $i=0,\cdots,k-1$, by (\ref{NSE_Gauss}),
\begin{align*}
\|\nablas^{i}K\|_{L^2(C_u)}&\lesssim\|\nablas^i\rho\|_{L^2(C_u)}+\|\nablas^i(\chih,\chibh)\|_{L^2(C_u)}+\|\nablas^i(\tr\chi\tr\chib)\|_{L^2(C_u)}\lesssim|u|^{-2-i}\cdot\delta^{\frac{1}{2}}|u|.
\end{align*}
Therefore,
\begin{align*}
\sum_{i=1}^{k-1}(\|\nablas^iK\|_{L^2(C_u)}+|u|^{-1}\|\nablas^{i-1}K\|_{L^2(C_u)})\|\nablas^{k-1-i}\chih'\|_{L^\infty(S)}\lesssim\delta^{-\frac{1}{2}}|u|^{-2-k}\cdot\delta^{\frac{1}{2}}|u|.
\end{align*}
According to (\ref{NSE_div_chih}), (\ref{Hodge1}) and the inequality $\int_0^{\ub}\|\cdot\|_{L^2(S_{\ub',u})}\leq\delta^{\frac{1}{2}}\|\cdot\|_{L^2(C_u)}$, we obtain
\begin{align}\label{chih'}
\int_0^{\ub}\|\nablas^{k+1}\chih'\|_{L^2(S_{\ub',u})}\lesssim\int_0^{\ub}\|\nablas^{k+1}\tr\chi'\|_{L^2(S_{\ub',u})}+\delta^{\frac{1}{2}}\|\nablas^{k}\beta\|_{L^2(C_u)}+\delta^{-\frac{1}{2}}|u|^{-2-k}\cdot\delta|u|.
\end{align}
We turn to (\ref{NSE_D_trchi}), by commuting derivatives, we have
\begin{align}\label{Dk_trchi}D\nablas^{k+1}\tr\chi'=&[D,\nablas^{k}]\ds\tr\chi'-\Omega\tr\chi\nablas^{k+1}\tr\chi'-2\Omega\chih\otimes\nablas^{k+1}\chih'\\\nonumber
&-\sum_{i=0}^{k-1}\nablas^i\ds(\Omega\tr\chi)\otimes\nablas^{k-1-i}\ds\tr\chi'-\sum_{i=0}^{k-1}\nablas^i\nablas(\Omega\chih)\otimes\nablas^{k-1-i}\nablas\chih'\\\nonumber
&-\sum_{\substack{i+j+h=k\\i,j,h\ge0}}\nablas^i\ds\log\Omega\otimes(\nablas^j\tr\chi\otimes\nablas^h\tr\chi+2\nablas^j\chih\otimes\nablas^h\chih)
\end{align}
In fact, the commutator $[D,\nablas^i]$ can be written as
$$[D,\nablas^i]\theta=\sum_{j=0}^{i-1}\nablas^j\nablas(\Omega\chi)\cdot\nablas^{i-1-j}\theta,$$
so
$$[D,\nablas^{k}]\ds\tr\chi'=\sum_{i=0}^{k-1}\nablas^i\nablas(\Omega\chi)\cdot\nablas^{k-1-i}\ds\tr\chi'.$$
Recall that $\nablas^i\ds\log\Omega=\frac{1}{2}\nablas^i(\eta+\etab)$, we rewrite (\ref{Dk_trchi}) as
\begin{align*}
D\nablas^{k+1}\tr\chi'=-\Omega\tr\chi\nablas^{k+1}\tr\chi'-2\Omega\chih\otimes\nablas^{k+1}\chih'+\lot
\end{align*}
The notation $\lot$ means the terms have already been estimated by induction hypothesis \eqref{induction}. Therefore,
\begin{align}\label{trchi'}
\|\nablas^{k+1}\tr\chi'\|_{L^2(S_{\ub,u})}\lesssim\delta^{-\frac{1}{2}}|u|^{-1}\int_0^{\ub}\|\nablas^{k+1}\chih'\|_{L^2(S_{\ub',u})}+|u|^{-3-k}\cdot|u|.
\end{align}
Combining (\ref{chih'}) and (\ref{trchi'}), we obtain
\begin{align*}
\|\nablas^{k+1}\tr\chi'\|_{L^2(S_{\ub,u})}\lesssim\delta^{-\frac{1}{2}}|u|^{-1}\int_0^{\ub}\|\nablas^{k+1}\tr\chi'\|_{L^2(S_{\ub',u})}+|u|^{-1}\|\nablas^{k}\beta\|_{L^2(C_u)}+|u|^{-3-k}\cdot|u|.
\end{align*}
Thus,
\begin{align}\label{estimate_trchi}
\|\nablas^{k+1}\tr\chi'\|_{L^2(S_{\ub,u})}\lesssim|u|^{-3-k}(\mathcal{R}_k[\beta]+C)\cdot|u|.
\end{align}
Finally, according to \eqref{estimates for i 1}, \eqref{NSE_div_chih} and (\ref{Hodge1}), we obtain
\begin{align}\label{estimate_chih}
\|\nablas^{k+1}\chih'\|_{L^2(C_u)}&\lesssim\|\nablas^{k+1}\tr\chi'\|_{L^2(C_u)}+\|\nablas^{k}\beta\|_{L^2(C_u)}+\delta^{-\frac{1}{2}}|u|^{-2-k}\cdot\delta^{\frac{1}{2}}|u|\notag\\
&\lesssim\delta^{-\frac{1}{2}}|u|^{-2-k}(\mathcal{R}_k[\beta]+C)\cdot\delta^{\frac{1}{2}}|u|.
\end{align}

\subsubsection{Smallness on $\nablas^k(\eta,\etab)$}
In order to $\nablas^{k+1}\chibh'$ and $\nablas^{k+1}\tr\chib'$, we must improve the estimates on  $\|\nablas^k(\eta,\etab)\|_{L^2(C_u)}$. We also mentioned this in the introduction. We can actually afford one more derivative to gain this smallness in powers of $\delta$. We also remark that at this stage, we can only prove that $\|\nablas^{k+1}(\eta,\etab)\|_{L^2(C_u)}$ is bounded without any smallness. We commute $\nablas^k$ with (\ref{NSE_D_eta}) to derive
\begin{align}\label{Dk_eta}
D\nablas^k\eta=\Omega\chi \nablas^k\etab+\sum_{i=0}^{k-1}(\nablas^i\nablas(\Omega\chi)\nablas^{k-1-i}\eta+\nablas^i\nablas(\Omega\chi)\otimes\nablas^{k-1-i}\etab)-\nablas^k(\Omega\beta),
\end{align}
Therefore,
\begin{align}\label{keta}
\|\nablas^k\eta\|_{L^2(S_{\ub,u})}\lesssim\delta^{-\frac{1}{2}}|u|^{-1}\int_0^{\ub}\|\nablas^k\etab\|_{L^2(S_{\ub',u})}+\delta^{\frac{1}{2}}\|\nablas^k\beta\|_{L^2(C_u)}+\delta|u|^{-3-k}\cdot|u|.
\end{align}
Similarly, according to (\ref{NSE_Db_etab}), we have
\begin{align}\label{Dbk_etab}
\Db\nablas^k\etab=\Omega\chib \nablas^k\eta+\sum_{i=0}^{k-1}(\nablas^i\nablas(\Omega\chib)\nablas^{k-1-i}\etab+\nablas^i\nablas(\Omega\chib)\otimes\nablas^{k-1-i}\eta)+\nablas^k(\Omega\betab),
\end{align}
which implies
\begin{align}\label{ketab}
|u|^k\|\nablas^k\etab\|_{L^2(C_u)}\lesssim&|u_0|^k\|\nablas^k\etab\|_{L^2(C_{u_0})}+\int_{u_0}^u|u'|^k\cdot|u'|^{-1}\|\nablas^k\eta\|_{L^2(C_{u'})}\\\nonumber
&+\int_{u_0}^u|u'|^k\|\nablas^k\betab\|_{L^2(C_{u'})}+\int_{u_0}^u|u'|^k\cdot\delta|u'|^{-4-k}\cdot\delta^{\frac{1}{2}}|u'|.
\end{align}
We then substitute (\ref{ketab}) into (\ref{keta}) and take supremum on $\ub$. As a result, we have
\begin{align*}
\sup_{\ub}|u|^k\|\nablas^k\eta\|_{L^2(S_{\ub,u})}\lesssim&|u_0|^k|u|^{-1}\|\nablas^k\etab\|_{L^2(C_{u_0})}
+\delta^{\frac{1}{2}}|u|^{-1}\int_{u_0}^u\sup_{\ub}|u'|^{k-1}\|\nablas^k\eta\|_{L^2(S_{\ub,u'})}\\
&+\delta^{\frac{3}{2}}|u|^{-4}(\mathcal{R}_k[\betab]+C)\cdot|u|+\delta^{\frac{1}{2}}|u|^{-2}(\mathcal{R}_k[\beta]+\delta^{\frac{1}{2}}|u|^{-1}C)\cdot|u|.
\end{align*}
Thus,
\begin{align*}
|u|^k\|\nablas^k\eta\|_{L^2(S_{\ub,u})}\lesssim|u_0|^k|u|^{-1}\|\nablas^k\etab\|_{L^2(C_{u_0})}+\delta^{\frac{1}{2}}|u|^{-2}(\mathcal{R}_k+C)\cdot|u|.
\end{align*}
Recall that $\eta=-\etab$ on $C_{u_0}$. In view of (\ref{keta}), we have
\begin{align*}
|u_0|^k\|\nablas^k\eta\|_{L^2(S_{\ub,u_0})}\lesssim\delta^{\frac{1}{2}}|u_0|^{-2}(\mathcal{R}_k[\beta]+\delta^{\frac{1}{2}}|u_0|^{-1}C)\cdot|u_0|.
\end{align*}
Combining all the estimates above, we finally obtain
\begin{align*}
\|\nablas^k\eta\|_{L^2(S_{\ub,u})},\ \delta^{-\frac{1}{2}}\|\nablas^k\etab\|_{L^2(C_u)}\lesssim\delta^{\frac{1}{2}}|u|^{-2-k}(\mathcal{R}_k+C)\cdot|u|.
\end{align*}

\subsubsection{Estimates for $\nablas^{k+1}\chibh'$ and $\nablas^{k+1}\tr\chib'$}
The estimates will be derived in a similar way as for $\nablas^{k+1}\chih'$ and $\nablas^{k+1}\tr\chi'$. By (\ref{NSE_div_chibh}) and the above improved smallness on $\nablas^k\etab$, we have
\begin{align}\label{chibh'}
\|\nablas^{k+1}\chibh'\|_{L^2(C_u)}\lesssim\|\nablas^{k+1}\tr\chib'\|_{L^2(C_u)}+\|\nablas^{k}\betab\|_{L^2(C_u)}+\delta^{\frac{1}{2}}|u|^{-3-k}(\mathcal{R}_k+C)\cdot\delta^{\frac{1}{2}}|u|.
\end{align}
We then commute $\nablas^{k+1}$ with \eqref{NSE_Db_trchib} to derive
\begin{align*}\Db\nablas^{k+1}\tr\chib'=-\Omega\tr\chib\nablas^{k+1}\tr\chib'-2\Omega\chibh\otimes\nablas^{k+1}\chibh'+(\tr\chi)^2\nablas^{k+1}\log\Omega+\lot
\end{align*}
 where $\lot$ denotes the terms appeared in \eqref{induction} (we shall always understand $\lot$ in this way).

 Although  $\nablas^{k+1}\log\Omega=\frac{1}{2}\nablas^k(\eta+\etab)$ is a lower order term (in derivative), this naive estimate will not yield the expected smallness of $\nablas^{k+1}\tr\chib'$. To remedy this defect, we introduce an auxiliary bootstrap assumption as follows,
\begin{align}\label{bootstrap_Omega}
\textbf{Auxiliary Bootstrap Assumption:} \quad \|\nablas^{k+1}\log\Omega\|_{L^2(C_u)}\leq\delta|u|^{-3-k}\Delta_k\cdot\delta^{\frac{1}{2}}|u|
\end{align}
for some large constant $\Delta_k$ depending on $\mathcal{R}_k$. Therefore,
\begin{align*}
|u|^{k+2}\|\nablas^{k+1}\tr\chib'\|&_{L^2(C_u)}\lesssim|u_0|^{k+2}\|\nablas^{k+1}\tr\chib'\|_{L^2(C_{u_0})}+\int_{u_0}^u|u'|^{k+2}\cdot\delta^{\frac{1}{2}}|u'|^{-2}\|\nablas^{k+1}\tr\chib'\|_{L^2(C_{u'})}\\\nonumber
&+\int_{u_0}^u|u'|^{k+2}\cdot\delta^{\frac{1}{2}}|u'|^{-2}\|\nablas^{k}\betab\|_{L^2(C_{u'})}+\delta|u|^{-2}(\mathcal{R}_k+\Delta_k+C)\cdot\delta^{\frac{1}{2}}|u|.
\end{align*}
According to \eqref{NSE_D_trchib}, we have $|u_0|^{k+1}\|\nablas^{k+1}\tr\chib'\|_{L^2(C_{u_0})}\lesssim\delta|u_0|^{-3}C\cdot\delta^{\frac{1}{2}}|u_0|$, so
\begin{align}\label{estimate_trchib}
\|\nablas^{k+1}\tr\chib'\|_{L^2(C_u)}\lesssim\delta|u|^{-4-k}(\mathcal{R}_k+\Delta_k+C)\cdot\delta^{\frac{1}{2}}|u|.
\end{align}
Thanks to (\ref{chibh'}), we obtain
\begin{align}\label{estimate_chibh}
\|\nablas^{k+1}\chibh'\|_{L^2(C_u)}\lesssim\delta^{\frac{1}{2}}|u|^{-3-k}(\mathcal{R}_k+C+\delta^{\frac{1}{2}}|u|^{-1}\Delta_k)\cdot\delta^{\frac{1}{2}}|u|.
\end{align}

\subsubsection{Estimates for $\nablas^{k+1}\eta$ and $\nablas^{k+1}\etab$} The estimates in this section rely on the bootstrap assumption (\ref{bootstrap_Omega}) and the estimates established in previous sections. In view of \eqref{NSE_curl_eta} and \eqref{NSE_curl_etab}, we have two Hodge systems:
\begin{equation}\label{Hodge_eta}
\divs\eta=-\rho+\frac{1}{2}(\chih,\chibh)-\mu, \quad \curls\eta=\sigma-\frac{1}{2}\chih\wedge\chibh
\end{equation}
\begin{equation}\label{Hodge_etab}
\divs\etab=-\rho+\frac{1}{2}(\chih,\chibh)-\mub,\quad \curls\etab=-\sigma+\frac{1}{2}\chih\wedge\chibh,
\end{equation}
where $\mu$, $\mub$ are the mass aspect functions (see \cite{Chr} for definitions). To justify the names, we notice that the integrals of $\mu$ or $\mub$ on $S_{\ub,u}$ are the Hawking masses of $S_{\ub,u}$. In reality, we just take the above equations as definitions of $\mu$ and $\mub$. They satisfy the following propagation equations:
\begin{align}\label{D_mu}
D\mu&=-\Omega\tr\chi\mu-\frac{1}{2}\Omega\tr\chi\mub-\frac{1}{4}\Omega\tr\chib|\chih|^2+\frac{1}{2}\Omega\tr\chi|\etab|^2+\divs(2\Omega\chih\cdot\eta-\Omega\tr\chi\etab)\\
\label{Db_mub}
\Db\mub&=-\Omega\tr\chib\mub-\frac{1}{2}\Omega\tr\chib\mu-\frac{1}{4}\Omega\tr\chi|\chibh|^2+\frac{1}{2}\Omega\tr\chib|\eta|^2+\divs(2\Omega\chibh\cdot\etab-\Omega\tr\chib\eta).
\end{align}
We commute $\nablas^k$ with \eqref{D_mu} to derive
\begin{align*}
D\nablas^k\mu=&-\Omega\tr\chi\nablas^k\mu-\frac{1}{2}\Omega\tr\chi\nablas^k\mub+2\nablas^k\divs(\Omega\chih)\cdot\eta+2\Omega\chih\cdot\nablas^k\divs\eta\\
&-\nablas^k\divs(\Omega\tr\chi)\cdot\etab-\Omega\tr\chi\nablas^k\divs\etab +\lot.
\end{align*}
Since we can control $\nablas^k\divs(\Omega\chih)$ and $\nablas^k\divs(\Omega\tr\chi)$ by (\ref{estimate_trchi}) and (\ref{estimate_chih}), and we can control $\nablas^k\divs\eta$ and $\nablas^k\divs\etab$ by (\ref{Hodge2}) and (\ref{Hodge_eta}), we have
\begin{align}\label{mu}
\|\nablas^k\mu\|_{L^2(S_{\ub,u})}\lesssim&|u|^{-1}\int_0^{\ub}\|\nablas^k\mub\|_{L^2(S_{\ub',u})}+|u|^{-3-k}(C+\delta^{\frac{1}{2}}|u|^{-1}\mathcal{R}_k)\cdot|u|.
\end{align}
Similarly, we commute $\nablas^k$ with \eqref{Db_mub} to derive
\begin{align*}
\Db\nablas^k\mub=&-\Omega\tr\chib\nablas^k\mub-\frac{1}{2}\Omega\tr\chib\nablas^k\mu+2\nablas^k\divs(\Omega\chibh)\cdot\etab+2\Omega\chibh\cdot\nablas^k\divs\etab\\
&-\nablas^k\divs(\Omega\tr\chib)\cdot\eta-\Omega\tr\chib\nablas^k\divs\eta+\lot.
\end{align*}
In view of \eqref{Hodge2}, \eqref{estimate_trchib}, \eqref{estimate_chibh} and \eqref{Hodge_etab}, we obtain
\begin{align}\label{mub}
|u|^{k+1}\|\nablas^k\mub\|_{L^2(C_u)}\lesssim&|u_0|^{k+1}\|\nablas^k\mub\|_{L^2(C_{u_0})}+\int_{u_0}^u|u'|^{k}\|\nablas^k\mu\|_{L^2(C_{u'})}\\\nonumber
&+|u|^{-2}(C+\mathcal{R}_k+\delta^{\frac{3}{2}}|u|^{-2}\Delta_k)\cdot\delta^{\frac{1}{2}}|u|.
\end{align}
Since $\mu=-2\rho + (\chih, \chibh)-\mub$ on $C_{u_0}$, \eqref{mu} yields
\begin{align*}
\|\nablas^k\mu\|_{L^2(S_{\ub,u_0})}\lesssim|u_0|^{-3-k}(C+\delta^{\frac{1}{2}}|u_0|^{-1}\mathcal{R}_k)\cdot|u_0|.
\end{align*}
Together with \eqref{mu} and \eqref{mub}, we have
\begin{align*}
\|\nablas^k\mu\|_{L^2(S_{\ub,u})}\lesssim&|u|^{-3-k}(C+\delta^{\frac{1}{2}}|u|^{-1}\mathcal{R}_k+\delta^{\frac{5}{2}}|u|^{-3}\Delta_k)\cdot|u|\\
\|\nablas^k\mub\|_{L^2(C_u)}\lesssim&|u|^{-3-k}(C+\mathcal{R}_k+\delta^{\frac{3}{2}}|u|^{-2}\Delta_k)\cdot\delta^{\frac{1}{2}}|u|.
\end{align*}
Once again, we use \eqref{Hodge2}, \eqref{Hodge_eta} and \eqref{Hodge_etab} to derive the final estimates
\begin{align}
\label{estimate_eta}\|\nablas^{k+1}\eta\|_{L^2(C_u)}\lesssim&|u|^{-3-k}(C+\mathcal{R}_k+\delta^{\frac{5}{2}}|u|^{-3}\Delta_k)\cdot\delta^{\frac{1}{2}}|u|,\\
\label{estimate_etab}\|\nablas^{k+1}\etab\|_{L^2(C_u)}\lesssim&|u|^{-3-k}(C+\mathcal{R}_k+\delta^{\frac{3}{2}}|u|^{-2}\Delta_k)\cdot\delta^{\frac{1}{2}}|u|.
\end{align}

\subsubsection{Estimates for $\nablas^{k+1}\omegab$ and $\nablas^{k+1}\omega$}

Those are the last two connection coefficients. We will first establish the estimates for $\nablas^{k+1}\omegab$ and close the auxiliary bootstrap argument based on (\ref{bootstrap_Omega}) in this section. We first introduce two auxiliary quantities $\kappab$ and $\kappa$ as follows (see \cite{Chr})
\begin{equation}
\Deltas\omegab =\kappab+\divs(\Omega\betab), \quad \label{Hodge_omega}\Deltas\omega=\kappa-\divs(\Omega\beta).
\end{equation}
They satisfy the following propagation equations:
\begin{align}
\label{D_kappab}D\kappab+\Omega\tr\chi\kappab&=-2(\Omega\chih,\nablas^2\omegab)+\underline{m},\\
\label{Db_kappa}\Db\kappa+\Omega\tr\chib\kappa&=-2(\Omega\chibh,\nablas^2\omega)+m.
\end{align}
where
\begin{align*}
\underline{m}=&-2(\divs(\Omega\chih),\ds\omegab)+\frac{1}{2}\divs(\Omega\tr\chi\cdot\Omega\betab)-(\ds(\Omega^2),\ds\rho)-(\ds(\Omega^2),{}^*\ds\sigma)-\rho\Deltas(\Omega^2)\\
&+\Deltas(\Omega^2(2(\eta,\etab)-|\eta|^2))+\divs(\Omega^2(\chih\cdot\betab-2\chibh\cdot\beta+3\etab\rho-3{}^*\etab\sigma)),
\end{align*}
and
\begin{align*}
m=&-2(\divs(\Omega\chibh),\ds\omega)-\frac{1}{2}\divs(\Omega\tr\chib\cdot\Omega\beta)-(\ds(\Omega^2),\ds\rho)+(\ds(\Omega^2),{}^*\ds\sigma)-\rho\Deltas(\Omega^2)\\
&+\Deltas(\Omega^2(2(\eta,\etab)-|\etab|^2))-\divs(\Omega^2(\chibh\cdot\beta-2\chih\cdot\betab-3\eta\rho-3{}^*\eta\sigma)).
\end{align*}
We commute $\nablas^{k-1}$ with \eqref{D_kappab} to derive
\begin{align*}
D\nablas^{k-1}\kappab+\Omega\tr\chi\nablas^{k-1}\kappab=-2\Omega\chih\cdot\nablas^{k+1}\omegab+\nablas^{k-1}\underline{m}+\lot.
\end{align*}
For $\nablas^{k+1}\omegab$, it is controlled by the definition of $\kappab$ and elliptic estimates \eqref{Hodge2}. For $\nablas^{k+1}\underline{m}$, the highest order terms are $\nablas^{k+1}\underline{m}$ are $\nablas^{k}\beta$, $\nablas^{k}\betab$, $\nablas^{k}\rho$, $\nablas^{k}\sigma$ and $\nablas^{k+1}\eta$, $\nablas^{k+1}\etab$. They can be controlled either by \eqref{induction} , \eqref{estimate_eta} and \eqref{estimate_etab}. Therefore
\begin{equation*}
\|\nablas^{k-1}\kappab\|_{L^2(S_{\ub,u})}\lesssim \delta|u|^{-4-k}(C+\mathcal{R}_k+\delta^2|u|^{-3}\Delta_k)\cdot|u|.
\end{equation*}
By the definition of $\kappab$ and standard elliptic estimates \eqref{Hodge2}, we obtain
\begin{align}\label{estimate_omegab}
\|\nablas^{k+1}\omegab\|_{L^2(C_u)}\lesssim\delta|u|^{-4-k}(C+\mathcal{R}_k+\delta^2|u|^{-3}\Delta_k)\cdot\delta^{\frac{1}{2}}|u|.
\end{align}

We now close the bootstrap assumption (\ref{bootstrap_Omega}) using this bound. Recall that $\omegab=\Db\log\Omega$, by commuting with $\nablas^{k+1}$, we have
\begin{equation*}
\Db\nablas^{k+1}\log\Omega=\nablas^{k+1}\omegab+\sum_{i=0}^{k-1}\nablas^i\nablas(\Omega\chib)\cdot\nablas^{k-1-i}\ds\log\Omega.
\end{equation*}
By integrating, we derive
\begin{align*}
|u|^k\|\nablas^{k+1}\log\Omega\|_{L^2(C_u)}\leq
C'\delta|u|^{-3}(C+\mathcal{R}_k+\delta^2|u|^{-3}\Delta_k)\cdot\delta^{\frac{1}{2}}|u|.
\end{align*}
We then $\Delta_k=4C'(C+\mathcal{R}_k)$ in \eqref{bootstrap_Omega} and choose $\delta$ sufficiently small such that $4C'\delta^2\le1$, 
then $\|\nablas^{k+1}\log\Omega\|_{L^2(C_u)}\leq\frac{1}{2}\delta|u|^{-3-k}\Delta_k\cdot\delta^{\frac{1}{2}}|u|$. Hence, the bootstrap argument is closed. By the choice of $\Delta_k$, we rewrite \eqref{estimate_trchib}, \eqref{estimate_chibh}, \eqref{estimate_eta}, \eqref{estimate_etab} and \eqref{estimate_omegab} as
\begin{align}\label{estimates_connection}
\delta^{-1}|u|^3\|\nabla^{k+1}\tr\chib'\|_{L^2(C_u)}+\delta^{-\frac{1}{2}}|u|^2\|\nablas^{k+1}\chibh'\|_{L^2(C_u)}+
|u|^2\|\nablas^{k+1}\eta\|_{L^2(C_u)}\\+\nonumber|u|^2\|\nablas^{k+1}\etab\|_{L^2(C_u)}+
\delta^{-1}|u|^3\|\nablas^{k+1}\omegab\|_{L^2(C_u)}\lesssim|u|^{-k-1}(C+\mathcal{R}_k)\cdot\delta^{\frac{1}{2}}|u|.
\end{align}

Similarly (we now have \eqref{estimates_connection} at our disposal), we also obtain the estimates for $\nablas^{k+1}\omega$:
\begin{align*}
\|\nablas^{k+1}\omega\|_{L^2(C_u)}\lesssim\delta^{-\frac{1}{2}}|u|^{-2-k}(C+\mathcal{R}_k)\cdot\delta^{\frac{1}{2}}|u|.
\end{align*}

\subsection{Estimates for Deformation tensors}
Following \cite{Chr}, we define the rotation vector fields $O_i$ ($i=1,2,3$) to satisfy $DO_i=0$ on $C_{u_0}$ and $\Db O_i=0$ on $M$, and $O_i|_{S_{0,u_0}}$ the standard rotation vector field on the round $\mathbb{S}^2$. For the sake of simplicity, we will suppress the lower index $i$ in sequel. The deformation tensor is defined by $\prescript{(O)}{}\pi=\mathcal{L}_Og$. We also define its null components $\piOs_{AB}=\prescript{(O)}{}\pi_{AB}$, and $Z^A=\Omega\gs^{AB}\prescript{(O)}{}\pi_{4B}$. In particular, $Z=0$ on $C_{u_0}$ by construction. For tensor field $\phi$, $\LieO\phi$ is defined as the projection of $\mathcal{L}_O\phi$ to $S_{\ub,u}$. By definition, on $C_{u_0}$,
\begin{equation*}
D\tr\piOs =2O(\Omega\tr\chi),\  D\piOsh-\Omega\tr\chi\piOsh=-\Omega\chih\tr\piOs+2\LieO(\Omega\chih).
\end{equation*}
We also have on $\DD$:
\begin{align*}
\Db\tr\piOs=2O(\Omega\tr\chib),&\ \Db Z=-4\LieO(\Omega^2\zeta), \Db\piOsh-\Omega\tr\chib\piOsh=-\Omega\chibh\tr\piOs+2\LieO(\Omega\chibh).
\end{align*}
The purpose of the section is to prove the following estimates :
\begin{align}
\label{estimates_O}&\|\nablas^{k+1}O\|\lesssim|u|^{-k}(C+\delta^{\frac{1}{2}}|u|^{-2}\mathcal{R}_k)\cdot\delta^{\frac{1}{2}}|u|,\\
\label{estimates_deformation}\delta^{-1}|u|^2&\|\nablas^k\tr\piOs\|+\delta^{-\frac{1}{2}}|u|\|\nablas^k\piOsh\|+|u|\|\nablas^kZ\|\lesssim|u|^{-k}(C+\mathcal{R}_k)\cdot\delta^{\frac{1}{2}}|u|.
\end{align}
where all the norms are take with respect to $\|\cdot\|_{L^2(C_u)}$.

We first prove by induction that, for $i\le k-1$, the following quantities
\begin{align*}
\|\nablas^{i+1}O\|_{L^2(C_u)},\delta^{-1}|u|^2&\|\nablas^i\tr\piOs\|_{L^2(C_u)},
\delta^{-\frac{1}{2}}|u|\|\nablas^i\piOsh\|_{L^2(C_u)},|u|\|\nablas^iZ\|_{L^2(C_u)}
\end{align*}
are bounded by $|u|^{-i}C\cdot\delta^{\frac{1}{2}}|u|$ where $C=C(F_{k-1+N})$. The case for $i=0,1,2$ have been proved in \cite{Chr}. We assume the case for all $j\le i-1$ is true. We commute $\nablas^i$ with the above equations to derive (the last three equations only hold on $C_{u_0}$)
\begin{align}
\label{Dbk+1_O}\Db\nablas^{i+1}O=&\nablas^{i+1}(\Omega\chib)\cdot O+\lot,\\
\label{Dbk_tr_piOs}\Db\nablas^i\tr\piOs=&2\nablas^i\LieO(\Omega\tr\chib)+\lot,\\
\label{Dbk_piOsh}\Db\nablas^i\piOsh-\Omega\tr\chib\nablas^i\piOsh=&-\Omega\chibh\otimes\nablas^i\tr\piOs+2\nablas^i\LieO(\Omega\chibh)+\lot,\\
\label{Dbk_Z}\Db\nablas^iZ=&-4\nablas^i\LieO(\Omega^2\zeta)+\lot.\\
\label{Dk+1_O}D\nablas^{i+1}O=&\nablas^{i+1}(\Omega\chi)\cdot O+\lot,,\\
\label{Dk_tr_piOs}D\nablas^i\tr\piOs=&2\nablas^i\LieO(\Omega\tr\chi)+\lot,\\
\label{Dk_piOsh}D\nablas^i\piOsh-\Omega\tr\chi\nablas^i\piOsh=&-\Omega\chih\otimes\nablas^i\tr\piOs+2\nablas^i\LieO(\Omega\chih)+\lot.
\end{align}
In the current situation, $\lot$ consists of the terms containing at most $(i-1)$th order derivatives of deformation tensors which are bounded by induction hypothesis, and the $i$th order derivatives of connection coefficients which are bounded by \eqref{induction}. Hence, by \eqref{Dbk+1_O} and \eqref{Dk+1_O}, we conclude $\|\nablas^{i+1}O\|_{L^2(C_u)}\lesssim|u|^{-i}C\cdot\delta^{\frac{1}{2}}|u|$.

For $\phi=\Omega\tr\chib,\Omega\tr\chi,\Omega\chibh,\Omega\chih,\Omega^2\zeta$, we have
\begin{align*}
\nablas^i\LieO\phi=\LieO\nablas^i\phi+\sum_{j=0}^{i-1}\nablas^j\nablas\piOs\cdot\nablas^{i-1-j}\phi=\LieO\nablas^i\phi+\nablas^{i+1}O\cdot\phi+\lot
\end{align*}
Since $|\LieO\nablas^i\phi|\lesssim|O||\nablas^{i+1}\phi|+|\nablas O||\nablas^i\phi|$, combined with \eqref{Dbk_tr_piOs}, \eqref{Dk_tr_piOs}, \eqref{Dbk_Z}, the induction hypothesis \eqref{induction} and the estimates for $\nablas^{k+1}O$, we conclude $$\delta^{-1}|u|^2\|\nablas^i\tr\piOs\|_{L^2(C_u)},|u|\|\nablas^iZ\|_{L^2(C_u)}\lesssim|u|^{-i}C\cdot\delta^{\frac{1}{2}}|u|.$$
Combined with \eqref{Dbk_piOsh}, \eqref{Dk_piOsh}, we also conclude
$$\delta^{-\frac{1}{2}}|u|\|\nablas^i\piOsh\|_{L^2(C_u)}\lesssim|u|^{-i}C\cdot\delta^{\frac{1}{2}}|u|.$$

For the case when $i=k$, we can proceed in a same way as to the case when $i\le k-1$. The only difference is that the terms $\phi=\Omega\tr\chib,\Omega\tr\chi,\Omega\chibh,\Omega\chih,\Omega^2\zeta$, $\nablas^{k+1}\phi$ are not bounded by \eqref{induction} but \eqref{estimate_trchi}, \eqref{estimate_chih}, \eqref{estimates_connection}. Thus, we have completed the proof of \eqref{estimates_O} and \eqref{estimates_deformation}.

\subsection{Estimates for Curvature Components}
In this section, based on the induction hypothesis \eqref{induction}, we derive energy estimates for curvature components. Together with the estimates from previous sections, this will complete the induction argument. Usually, the derivation of energy estimates is based on the Bel-Robinson tensors, see \cite{Ch-K} or \cite{Chr}. We shall proceed directly by integration by parts without Bel-Robinson tensors. This is similar to \cite{Luk}.

\subsubsection{Energy Inequalities}
We commute $\LieO^k$ with (\ref{NBE_D_beta}), $\cdots$, (\ref{NBE_Db_sigma}) and commute $\LieOh^k$ with (\ref{NBE_Db_alpha}) and (\ref{NBE_D_alphab}). From \eqref{NBE_Db_alpha} and \eqref{NBE_Db_beta}, we have
\begin{align*}
\Dbh\LieOh^k\alpha-\frac{1}{2}\Omega\tr\chib\LieOh^k\alpha-\Omega\nablas\tensor\LieO^k\beta&=E_3^k(\alpha),\\
D\LieO^k\beta-\Omega\chih\cdot\LieO^k\beta-\Omega\divs\LieOh^k\alpha&=E_4^k(\beta).
\end{align*}
where $E_3^k(\alpha)$ and $E_4^k(\beta)$ are error terms which will be expressed explicitly later. Therefore,
\begin{align*}
\Db(|\LieOh^k\alpha|^2d\mu_{\gs})+D(2|\LieO^k&\beta|^2d\mu_{\gs}) =
(4\divs(\Omega\LieOh^k\alpha\cdot\LieO^k\beta)-4\Omega\LieOh^k\alpha(\ds\log\Omega,\LieO^k\beta))d\mu_{\gs}\\&+((\LieOh^k\alpha,2E_3^k(\alpha))+2(\LieO^k\beta,2E_4^k(\beta)))d\mu_{\gs}.
\end{align*}
We can integrate this identity on $M$ to derive
\begin{align}\label{alpha}
\int_{C_{u}}|\LieOh^k\alpha|^2+&\int_{\Cb_{\ub}}2|\LieO^k\beta|^2\le\int_{C_{u_0}}|\LieOh^k\alpha|^2\\\nonumber&+\iint_{M}|-4\Omega\LieOh^k\alpha(\ds\log\Omega,\LieO^k\beta)+(\LieOh^k\alpha,2E_3^k(\alpha))+2(\LieO^k\beta,2E_4^k(\beta))|.
\end{align}
Similarly, from \eqref{NBE_Db_beta}, \eqref{NBE_D_rho} and \eqref{NBE_D_sigma}, we have
\begin{align*}
\Db\LieO^k\beta-\Omega\chibh\cdot\LieO^k\beta-\Omega(\ds\LieO^k\rho+\prescript{*}{}\ds\LieO^k\sigma)&=E_3^k(\beta),\\
D\LieO^k\rho+\frac{1}{2}\Omega\tr\chi\LieO^k\rho-\Omega\divs\LieO^k\beta&=E_4^k(\rho),\\
D\LieO^k\sigma+\frac{1}{2}\Omega\tr\chi\LieO^k\sigma+\Omega\curls\LieO^k\beta&=E_4^k(\sigma).
\end{align*}
which implies
\begin{align}\label{beta}
\int_{C_u}|u|^2|\LieO^k\beta|^2+&\int_{\Cb_{\ub}}|u|^2(|\LieO^k\rho|^2+|\LieO^k\sigma|^2)\leq\int_{C_{u_0}}|u_0|^2|\LieO^k\beta|^2\\\nonumber&+\iint_M|u|^2(-\frac{2}{|u|}|\LieOh^k\beta|^2+2\Omega(\ds\log\Omega,\LieO^k\rho\LieO^k\beta+\LieO^k\sigma\prescript{*}{}\LieO^k\beta)\\\nonumber
&+(\LieO^k\beta,2E_3^k(\beta))+(\LieO^k\rho,2E_4^k(\rho))+(\LieO^k\sigma,2E_4^k(\sigma))).
\end{align}
We remark that $|u|^2$ appears as a weight. We deal the remaining equations in a similar way. In fact, we have
\begin{align*}
\Db\LieO^k\rho+\frac{1}{2}\Omega\tr\chib\LieO^k\rho+\Omega\divs\LieO^k\betab&=E_3^k(\rho),\\
\Db\LieO^k\sigma+\frac{1}{2}\Omega\tr\chib\LieO^k\sigma+\Omega\curls\LieO^k\betab&=E_3^k(\sigma),\\
D\LieO^k\betab-\Omega\chih\cdot\LieO^k\betab+\Omega(\ds\LieO^k\rho-\prescript{*}{}\ds\LieO^k\sigma)&=E_4^k(\betab),
\end{align*}
and
\begin{align*}
\Db\LieO^k\betab-\Omega\chibh\cdot\LieO^k\betab+\Omega\divs\LieOh^k\alphab&=E_3^k(\betab),\\
\Dh\LieOh^k\alphab-\frac{1}{2}\Omega\tr\chi\LieOh^k\alphab+\Omega\nablas\tensor\LieO^k\betab&=E_4^k(\alphab).
\end{align*}
Therefore, we have
\begin{align}\label{rhosigma}
\int_{C_u}|u|^4&(|\LieO^k\rho|^2+|\LieO^k\sigma|^2)+\int_{\Cb_{\ub}}|u|^4|\LieO^k\betab|\leq\int_{C_{u_0}}|u_0|^4(|\LieO^k\rho|^2+|\LieO^k\sigma|^2)\\\nonumber
&+\iint_M|u|^4|-\frac{4}{|u|}(|\LieO^k\rho|^2+|\LieO^k\sigma|^2)+2\Omega(\ds\log\Omega,\LieO^k\rho\LieO^k\betab-\LieO^k\sigma\prescript{*}{}\LieO^k\betab)\\\nonumber
&+(\LieO^k\rho,2E_3^k(\rho))+(\LieO^k\sigma,2E_3^k(\sigma))+(\LieO^k\betab,2E_4^k(\beta))|.
\end{align}
and
\begin{align}\label{betabalphab}
&\int_{C_u}2|u|^6|\LieO^k\betab|^2+\int_{\Cb_{\ub}}|u|^6|\LieO^k\alphab|^2\leq\int_{C_{u_0}}2|u_0|^6|\LieO^k\betab|^2\\\nonumber
&+\iint_M|u|^6|-\frac{12}{|u|}|\LieO^k\betab|^2+4\Omega\LieOh^k\alphab(\ds\log\Omega,\LieO^k\betab)+2(\LieO^k\betab,2E_4^k(\betab))+(\LieOh^k\alpha,2E_3^k(\alphab))|.
\end{align}

\subsubsection{Estimates for Error Terms}
We turn to the bound of the error terms appeared in the previous sections. We first recall various commutator formulas in \cite{Chr} and they will be used to derive the exact expression of the error terms.
For a traceless symmetric two tensor $\theta$, we have
\begin{align*}
[\LieO,\divs]\theta_A&=\frac{1}{2}\piOsh^{BC}\nablas_A\theta_{BC}-\divs_B(\piOsh^{BC}\theta_{AC})-\frac{1}{2}\tr\piOs(\divs\theta)_A,\\
[\LieOh,\Dh]\theta&=\widehat{\Lie}_{Z}\theta-\piOsh(\Omega\chih,\theta)+\Omega\chih(\piOsh,\theta),\\
[\LieOh,\Dbh]\theta&=-\piOsh(\Omega\chibh,\theta)+\Omega\chibh(\piOsh,\theta).
\end{align*}
For an one form $\xi$, we have
\begin{align*}
[\LieO,\divs]\xi&=-\divs(\piOsh\cdot\xi)-\frac{1}{2}\tr\piOs\divs\xi,\quad [\LieO,\curls]\xi=-\frac{1}{2}\tr\piOs\curls\xi,\\
[\LieOh, \nablas\tensor]\xi&=-2\nablas\piOsh\cdot\xi-\piOsh\divs\xi-\frac{1}{2}(\piOsh,\nablas\tensor\xi),\\
[\LieO, D]\xi&=\Lie_{Z}\xi,\quad [\LieO,\Db]\xi=0.
\end{align*}
For a function $\phi$, we have
\begin{align*}
[\LieO,\ds]\phi&=0, \quad [\LieO, \,{}^*\ds]\phi =\piOsh\cdot{}^*\ds(O_i\phi),\\
[\LieO, D]\phi&=Z\phi, \quad [\LieO,\Db]\phi=0.
\end{align*}

We turn to the estimates for $E_3^k(\alpha)$ and $E_4^k(\beta)$. The above formulas allow us to write
\begin{equation*}
E_3^k(\alpha)=\frac{1}{2}\sum_{i=0}^{k-1}\LieO^i\LieO(\Omega\tr\chib)\LieOh^{k-1-i}\alpha+\LieOh^kE_3^0(\alpha)+[\Dbh,\LieOh^k]\alpha-(\Omega\nablas\tensor\LieO^k\beta-\LieOh^k(\Omega\nablas\tensor\beta)),
\end{equation*}
where $E_3^0(\alpha)$ is from \eqref{NBE_Db_alpha}. For a traceless symmetric two tensor $\theta$, we have
\begin{align*}
\LieOh^l\theta=\sum_{i_1+\cdots+i_l+i_{l+1}=l}\nablas^{i_1}O\cdot\nablas^{i_2}O\cdots\nablas^{i_l}O\cdot\nablas^{i_{l+1}}\theta.
\end{align*}
Then first term in $E_3^k(\alpha)$ can be written as
\begin{align*}
\sum_{i_1+\cdots+i_{k+2}=k-1}\nablas^{i_1}O\cdot\nablas^{i_2}O\cdots\nablas^{i_{k}}O\cdot\nablas^{i_{k+1}}\nablas(\Omega\tr\chib)\cdot\nablas^{i_{k+2}}\alpha.
\end{align*}
We bound this term using induction hypothesis. If $i_{k+1}=k-1$ or $i_{k+2}=k-1$, we bound the corresponding factors in $L^2(C_u)$ and the others in $L^\infty$; if $i_j=k-1$ for $j=1,\cdots,k$, $i_{k+1}=k-2$ or $i_{k+2}=k-2$, we bound the corresponding factors in $L^4(S_{\ub,u})$ (note that when $k=3$, the case $i_{k+1}=i_{k+2}=k-2$ can happen) and the others in $L^\infty$; for the other possible cases, we simply bound all factors in $L^\infty$.

For the second term in $E_3^k(\alpha)$, notice that $E_3^0(\alpha)\emph{}=-2\omegab\alpha-\Omega(-(4\eta+\zeta)\tensor\beta+3\chih\rho+3\prescript{*}{}\chih\sigma)$.
We use $\Gamma$ to be an arbitrary connection coefficient and $R$ to be an arbitrary curvature component.\footnote{\,Throughout the paper, we shall use $\Gamma$ and $R$ as schematic notations. The expression $\Gamma \cdot R$ denotes a sum of products of this form.} We can compute $\LieOh^kE_3^0(\alpha)$ by the commutator formulas. All the terms can be bounded exactly in the same way as for the first term $E_3^k(\alpha)$ except for
\begin{align*}
O^k\cdot(-2\omegab\nablas^k\alpha+\Omega(4\eta+\zeta)\cdot\nablas^k\beta-3\Omega\chih\cdot\nablas^k\rho-3\Omega\chih\cdot\nablas^k\sigma).
\end{align*}
This term can only be bounded by $\mathcal{R}_k$. In fact, this collection of terms contains all the terms involving $k$th derivatives of the curvature components while the other terms are in the following form
\begin{align*}
\sum_{i_1+\cdots+i_{k+2}=k,i_{k+2}<k}\nablas^{i_1}O\cdot\nablas^{i_2}O\cdots\nablas^{i_{k}}O\cdot\nablas^{i_{k+1}}\Gamma\cdot\nablas^{i_{k+2}}R.
\end{align*}
which can be easily estimated by induction hypothesis.

For the third term in $E_3^k(\alpha)$, since $[\Dbh,\LieOh^k]\alpha=\sum_{i=0}^{k-1}\LieOh^{k-1-i}([\Db,\LieOh]\LieOh^i\alpha)$, according to the commutator formulas, this term can be rewritten as
\begin{align*}
\sum_{i_1+\cdots+i_{k+2}=k-1}\nablas^{i_1}O\cdot\nablas^{i_2}O\cdots\nablas^{i_{k-1}}O\cdot\nablas^{i_{k}}\piOsh\cdot\nablas^{i_{k+1}}(\Omega\chibh)\cdot\nablas^{i_{k+2}}\alpha.
\end{align*}
The observation is that none of terms contains $k$th derivative of the curvature components. Therefore, $[\Dbh,\LieOh^k]\alpha$ can also be estimated by induction hypothesis as above.

For the last term in $E_3^k(\alpha)$, we can write it as
\begin{equation*}
\Omega(\nablas\tensor\LieO^k\beta-\LieOh^k\nablas\tensor\beta)=\Omega\sum_{i=0}^{k-1}\LieOh^{k-1-i}(\nablas\tensor\LieO\LieO^i\beta-\LieOh\nablas\tensor\LieO^i\beta)+\lot
\end{equation*}
where $\lot$ denotes the terms with at most $k$th derivatives on the curvature components. They can be estimated by induction hypothesis. The main term reads as
\begin{align*}
\sum_{i_1+\cdots+i_{k+1}=k,i_k+i_{k+1}\ge1}\nablas^{i_1}O\cdot\nablas^{i_2}O\cdots\nablas^{i_{k-1}}O\cdot\nablas^{i_{k}}\piOsh\cdot\nablas^{i_{k+1}}\beta.
\end{align*}
Thus, all terms can be estimated by induction hypothesis except for
\begin{align*}
O^{k-1}\cdot(\nablas^k\piOsh\cdot\beta+\piOsh\cdot\nablas^k\beta)
\end{align*}
which should be bound by $\mathcal{R}_k$ according to the definition of $\mathcal{R}_k$.

Putting the estimates together, we obtain
\begin{equation*}
\iint_M|\LieOh^k\alpha,2E_3^k(\alpha)|\lesssim\delta^{-1}\mathcal{R}_k[\alpha](C+\mathcal{R}_k).
\end{equation*}

We move on to $E_4^k(\beta)$ which can be written as
\begin{align*}
E_4^k(\beta)=\sum_{i=0}^{k-1}\LieO^i\LieO(\Omega\chih^\sharp)\cdot\LieO^{k-1-i}\beta+\LieOh^kE_4^0(\beta)+[D,\LieO^k]\beta-(\Omega\divs\LieOh^k\alpha-\LieOh^k(\Omega\divs\alpha)).
\end{align*}
Since the estimates can be derived almost in the same way as before, we only emphasize the difference. In face, the third term in $E_4^k(\beta)$ is slightly different from before because $[D,\LieO]\ne0$. This commutator term contains $\nablas^k Z$ which should be estimated by $\mathcal{R}_k$. To be more precise, this term can be written as
\begin{align*}
[D,\LieO^k]\beta&=\sum_{i=0}^{k-1}\LieO^{k-1-i}([D,\LieO]\LieO^i\beta)=\sum_{i=0}^{k-1}\LieO^{k-1-i}\Lie_Z\LieO^i\beta\\ &=\sum_{i_1+\cdots+i_{k+1}=k}\nablas^{i_1}O\cdots\nablas^{i_{k-1}}O\cdot\nablas^{i_k}Z\cdot\nablas^{i_{k+1}}\beta.
\end{align*}
All the terms can be bounded by induction hypothesis expect for $i_k=k$ and $i_{k+1}=k$. Finally, we obtain
\begin{align*}
\iint_M|\LieO^k\beta,2E_4^k(\beta)|\lesssim\delta^{-\frac{1}{2}}\mathcal{R}_k[\beta](C+\mathcal{R}_k+\mathcal{D}_{k+1}[\tr\chib]).
\end{align*}

We switch to the second group of error terms $E_3^k(\beta)-\frac{1}{|u|}\LieO^k\beta$ and $E_4^k(\rho,\sigma)$. We have
\begin{align*}
E_3^k(\beta)=&\sum_{i=0}^{k-1}\LieO^i\LieO(\Omega\chibh^\sharp)\cdot\LieO^{k-1-i}\beta+\LieO^kE_3^0(\beta)\\
&+[\Db,\LieO^k]\beta-(\Omega(\ds\LieO^k\rho+\prescript{*}{}\ds\LieO^k\sigma)-\LieO^k(\Omega(\ds\rho+\prescript{*}{}\ds\sigma))).
\end{align*}
It is necessary to observe that we need the term $-\frac{1}{|u|}\LieO^k\beta$ to cancel the term $-\frac{1}{2}\tr\chib\LieO^k\beta$ in $\LieO^kE_3^k(\beta)$. The derivation of the estimates is almost the same as before and finally we obtain
\begin{align*}
\iint_M|u|^2|\LieO^k\beta,2E_3^k(\beta)-\frac{2}{|u|}\LieO^k\beta|\lesssim\delta^{\frac{1}{2}}\mathcal{R}_k[\beta](C+\mathcal{R}_k).
\end{align*}

For $E_3^k(\rho,\sigma)$, we need to argue more carefully because it contains some borderline terms (meaning that there is not positive power of $\delta$ in the estimates). First of all, we have
\begin{align*}
E_4^k(\rho)&=-\frac{1}{2}\sum_{i=0}^{k-1}\LieO^i\LieO(\Omega\tr\chi)\LieO^{k-1-i}\rho+\LieO^kE_4^0(\rho)+[D,\LieO^k]\rho+(\Omega\divs\LieO^k\beta-\LieO^k(\Omega\divs\beta)),\\
E_4^k(\sigma)&=-\frac{1}{2}\sum_{i=0}^{k-1}\LieO^i\LieO(\Omega\tr\chi)\LieO^{k-1-i}\sigma+\LieO^kE_4^0(\sigma)+[D,\LieO^k]\sigma-(\Omega\curls\LieO^k\beta-\LieO^k(\Omega\curls\beta)).
\end{align*}
The terms in $[D,\LieO^k]\rho$ and $[D,\LieO^k]\sigma$ do not have the factor $\nablas^kZ$ because $\rho$ and $\sigma$ are functions. This is slightly different from previous cases and the estimates are even easier. The key different terms are those $\LieO^k(\chibh\cdot\alpha)$ contained in $\LieO^kE_4^0(\rho)$ and $\LieO^kE_4^0(\sigma)$. They can be bounded by $\mathcal{R}_k[\alpha]$ and without any gain in $\delta$. Thus, they contribute to borderline terms. Finally, we obtain
\begin{align*}
\iint_M|u|^2|(\LieO^k\rho,2E_3^k(\rho))+(\LieO^k\sigma,2E_3^k(\sigma))|\lesssim\mathcal{R}_k[\rho,\sigma](C+\mathcal{R}_k[\alpha]+\delta^{\frac{1}{2}}(\mathcal{R}_k+\mathcal{D}_{k+1}[\tr\chib])).
\end{align*}

For the remaining error terms, namely $E_3^k(\rho,\sigma)$, $E_4^k(\betab)$, $E_3^k(\betab)$ and $E_4^k(\alphab)$, they can be expressed as
\begin{align*}
E_3^k(\rho)=&-\frac{1}{2}\sum_{i=0}^{k-1}\LieO^i\LieO(\Omega\tr\chib)\LieO^{k-1-i}\rho+\LieO^kE_3^0(\rho)+[\Db,\LieO^k]\rho-(\Omega\divs\LieO^k\betab-\LieO^k(\Omega\divs\betab)),\\
E_3^k(\sigma)=&-\frac{1}{2}\sum_{i=0}^{k-1}\LieO^i\LieO(\Omega\tr\chib)\LieO^{k-1-i}\sigma+\LieO^kE_3^0(\sigma)+[\Db,\LieO^k]\sigma-(\Omega\curls\LieO^k\betab-\LieO^k(\Omega\curls\betab)),\\
E_4^k(\betab)=&\sum_{i=0}^{k-1}\LieO^i\LieO(\Omega\chih^\sharp)\cdot\LieO^{k-1-i}\beta+\LieO^kE_4^0(\betab)\\
&+[D,\LieO^k]\betab+(\Omega(\ds\LieO^k\rho-\prescript{*}{}\ds\LieO^k\sigma)-\LieO^k(\Omega(\ds\rho-\prescript{*}{}\ds\sigma))),\\
E_3^k(\betab)=&\sum_{i=0}^{k-1}\LieO^i\LieO(\Omega\chibh^\sharp)\cdot\LieO^{k-1-i}\betab+\LieOh^kE_3^0(\betab)+[\Db,\LieO^k]\betab+(\Omega\divs\LieOh^k\alphab-\LieOh^k(\Omega\divs\alphab)),\\
E_4^k(\alphab)=&\frac{1}{2}\sum_{i=0}^{k-1}\LieO^i\LieO(\Omega\tr\chi)\LieOh^{k-1-i}\alphab+\LieOh^kE_4^0(\alphab)+[\Dh,\LieOh^k]\alphab+(\Omega\nablas\tensor\LieO^k\betab-\LieOh^k(\Omega\nablas\tensor\betab)).
\end{align*}
We can proceed exactly as before to derive
\begin{align*}
&\iint_M|u|^4|(\LieO^k\rho,2E_3^k(\rho)-\frac{4}{|u|}\LieO^k\rho)+(\LieO^k\sigma,2E_3^k(\sigma)-\frac{4}{|u|}\LieO^k\sigma)| \lesssim\delta^2\mathcal{R}_k[\rho,\sigma](C+\mathcal{R}_k+\underline{\mathcal{R}}_k[\alphab]),\\
&\iint_M|u|^4|(\LieO^k\betab,2E_4^k(\betab))|\lesssim\delta^2\mathcal{R}_k[\betab](C+\mathcal{R}_k),\\
&\iint_M|u|^6|(\LieO^k\betab,2E_3^k(\betab)-\frac{6}{|u|}\LieO^k\betab)|\lesssim\delta^{\frac{7}{2}}\mathcal{R}_k[\betab](C+\mathcal{R}_k+\underline{\mathcal{R}}_k[\alphab]),\\
&\iint_M|u|^6|(\LieO^k\alphab,2E_4^k(\alphab))|\lesssim\delta^{3}\underline{\mathcal{R}}_k[\alphab](C+\mathcal{R}_k[\rho,\sigma]+\delta^{\frac{1}{2}}(\mathcal{R}_k+\underline{\mathcal{R}}_k[\alphab])).
\end{align*}

\subsection{Completion of the Induction Argument}
We first close the induction argument on the $L^2$ level. In fact, based on the estimates derived in the previous sections, we show that if $\delta$ is sufficiently small depending on the $C^{k+N}$ bounds of the seed data $\psi_0$, then
\begin{align*}
\mathcal{R}_k,\underline{\mathcal{R}}_k[\alpha], \mathcal{O}_{k+1}\le F_{k+N}
\end{align*}
where $F_{k+N}$ is a constant depending only on the $C^{k+N}$ bounds of the seed data.

Recall that for $1$-form $\xi$, $\LieO^k\xi-\nablas_O^k\xi=\sum_{i_{k+1}<k}\nablas^{i_1}O\cdots\nablas^{i_k}O\cdot\nablas^{i_{k+1}}\xi$, then
\begin{align*}
\int_{C_u}|u|^{2k}|\nablas^k\xi|^2\lesssim\int_{C_u}|\LieO^i\xi|^2+\int_{C_u}\sum_{i=0}^{k-1}|u|^{2i}|\nablas^i\xi|^2.
\end{align*}
Similar inequalities hold for traceless symmetric two tensors and functions.

We suppose that $\mathcal{R}_k,\underline{\mathcal{R}}_k[\alpha]\le G$ for some large constant $G$ to be fixed later. In view of the above estimates, we multiply $\delta^2$ to (\ref{alpha}), $\delta^0$ to (\ref{beta}), $\delta^{-1}$ to (\ref{rhosigma}) and $\delta^{-3}$ to (\ref{betabalphab}), then substitute the above estimates into (\ref{alpha})-(\ref{betabalphab}) and  we choose $\delta>0$ sufficiently small depending on $G$ and $F_{k-1+N}$ \eqref{induction}, so that we obtain the following inequalities,
\begin{align*}
\mathcal{R}_k[\alpha]^2&\leq\mathcal{R}_k[\alpha](u_0)^2+1,\\
\mathcal{R}_k[\beta]^2&\leq\mathcal{R}_k[\beta](u_0)^2+C\mathcal{R}_k[\rho,\sigma](1+\mathcal{R}_k[\alpha])+1,\\
\mathcal{R}_k[\rho,\sigma]^2&\leq\mathcal{R}_k[\rho,\sigma](u_0)^2+1,\\
\mathcal{R}_k[\betab]^2+\underline{\mathcal{R}}_k[\alphab]^2&\leq\mathcal{R}_k[\betab](u_0)^2+C\underline{\mathcal{R}}_k[\alphab](1+\mathcal{R}_k[\rho,\sigma])+1.
\end{align*}
We substitute the first and the third to the second to derive
\begin{align*}
\mathcal{R}_k[\beta]^2&\leq\mathcal{R}_k[\beta](u_0)^2+C\sqrt{\mathcal{R}_k[\rho,\sigma](u_0)^2+1}(1+\sqrt{\mathcal{R}_k[\alpha](u_0)^2+1})+1.
\end{align*}
We then substitute the third to the fourth to derive
\begin{align*}
\underline{\mathcal{R}}_k[\alphab]\le C(1+\sqrt{\mathcal{R}_k[\rho,\sigma](u_0)^2+1})+\sqrt{\mathcal{R}_k[\betab](u_0)^2+1}.
\end{align*}
Thus, we can choose $G$ depending on $\mathcal{R}_k(u_0)$ and hence on $C^{k+N}$ bounds of the seed data, such that $\mathcal{R}_k, \underline{\mathcal{R}}_k[\alpha]\le G/2$. By continuity argument and the estimates for connection coefficients in previous sections, we have obtained the estimates for $\mathcal{R}_k,\underline{\mathcal{R}}_k[\alpha], \mathcal{O}_{k+1}$.

We turn to the following estimates
\begin{align*}
\mathcal{R}^4_{\le{k-1}}, \mathcal{R}^\infty_{\le k-2}, \mathcal{O}^4_{\le k}, \mathcal{O}^\infty_{\le k-1}\le C(F_{k+N}).
\end{align*}
We first commute $\nablas^{k-1}$ with (\ref{NBE_D_beta}), (\ref{NBE_D_rho}), (\ref{NBE_D_sigma}) and (\ref{NBE_D_betab}) to derive
\begin{align*}
\delta^{-\frac{1}{2}}|u|^{-1}&\left(\|\delta^{\frac{1}{2}}|u|^2(\delta D)(|u|\nabla)^{k-1}\beta\|_{L^2(C_u)}+\||u|^3(\delta D)(|u|\nabla)^{k-1}(\rho,\sigma)\|_{L^2(C_u)}\right.\\
&\left.+\|\delta^{-1}|u|^4(\delta D)(|u|\nabla)^{k-1}\underline{\beta}\|_{L^2(C_u)}\right)\le C(F_{k+N}).
\end{align*}
We then apply $\LieOh^{k-1}\Dh$ to (\ref{NBE_Db_alpha}), $\LieO^{k-1}D$ to (\ref{NBE_Db_beta}), $\LieOh^{k-1}\Dbh$ to (\ref{NBE_D_alphab}) and $\LieO^{k-1}\Db$ to (\ref{NBE_D_betab}) to derive
\begin{align*}
\Dbh\LieOh^{k-1}\Dh\alpha-\frac{1}{2}\Omega\tr\chib\LieOh^{k-1}\Dh\alpha-\Omega\nablas\tensor\LieO^{k-1}D\beta&=E_3^{(k-1)4}(\alpha),\\
D\LieO^{k-1}D\beta-\Omega\chih\cdot\LieO^{k-1}D\beta-\Omega\divs\LieOh^{k-1}\Dh\alpha&=E_4^{(k-1)4}(\beta),\\
\Db\LieO^{k-1}\Db\betab-\Omega\chibh\cdot\LieO^{k-1}\Db\betab+\Omega\divs\LieOh^{k-1}\Dbh\alphab&=E_3^{(k-1)3}(\betab),\\
\Dh\LieOh^{k-1}\Dbh\alphab-\frac{1}{2}\Omega\tr\chi\LieOh^{k-1}\Dbh\alphab+\Omega\nablas\tensor\LieO^{k-1}\Db\betab&=E_4^{(k-1)3}(\alphab).
\end{align*}
Therefore, we have
\begin{align}\label{D_alpha}
\int_{C_u}|\LieOh^{k-1}\Dh\alpha|^2\leq&\int_{C_{u_0}}|\LieOh^{k-1}\Dh\alpha|^2+\iint_M|-4\Omega\LieOh^{k-1}\Dh\alpha(\ds\log\Omega,\LieO^{k-1}D\beta)\\\nonumber
&+(\LieOh^{k-1}\Dh\alpha,2E_3^{(k-1)4}(\alpha))+2(\LieO^{k-1}D\beta,2E_4^{(k-1)4}(\beta))|,
\end{align}
and
\begin{align}\label{Db_alphab}
&\int_{\Cb_{\ub}}|u|^8|\LieO^{k-1}\Dbh\alphab|^2\leq\int_{C_{u_0}}2|u_0|^8|\LieO^{k-1}\Db\betab|^2 + \iint_M|u|^8|\frac{16}{|u|}|\LieO^{k-1}\Db\betab|^2\\\nonumber
&+\iint_M|u|^8|4\Omega\LieOh^{k-1}\Dbh\alphab(\ds\log\Omega,\LieO^{k-1}\Db\betab)+2(\LieO^{k-1}\Db\betab,2E_4^{(k-1)3}(\betab))+(\LieOh^{k-1}\Dbh\alpha,2E_3^{(k-1)3}(\alphab))|.
\end{align}
This leads to
\begin{align*}
\delta^{-\frac{1}{2}}|u|\|\delta^{\frac{3}{2}}|u|(\delta D)(|u|\nablas^{k-1})\alpha\|_{L^2(C_u)}
+\||u|^{-\frac{3}{2}}\delta^{-\frac{3}{2}}|u|^{\frac{9}{2}}(|u|\Db)(|u|\nabla)^{k-1}\underline{\alpha}\|_{L^2(\underline{C}_{\underline{u}})}\le C(F_{k+N}).
\end{align*}
The Sobolev inequalities (\ref{Sobolev_L4_Linfinity}), (\ref{SobolevC_L2_L4}) and (\ref{SobolevCb_L2_L4}) yield $\mathcal{R}^4_k, \mathcal{R}^\infty_{k-1}\le C(F_{k+N})$.

Similarly, we commute $\nablas^k$ with (\ref{NSE_Dh_chih}), (\ref{NSE_D_trchi}), (\ref{NSE_D_eta}), (\ref{NSE_D_omegab}), (\ref{NSE_D_chibh}) and (\ref{NSE_D_trchib}) to derive
\begin{align*}
\delta^{-\frac{1}{2}}|u|^{-1}&\left(\|\delta^{\frac{1}{2}}|u|(\delta D)(|u|\nabla)^{k}\widehat{\chi}\|_{L^2(C_u)}
+\||u|^2(\delta D)(|u|\nabla)^{k}\mathrm{tr}\chi\|_{L^2(C_u)}\right.\\
&+\|\delta^{-\frac{1}{2}}|u|^2(\delta D)(|u|\nabla)^{k}\underline{\widehat{\chi}}\|_{L^2(C_u)}
+\|\delta^{-1}|u|^3(\delta D)(|u|\nabla)^{k}\mathrm{tr}\underline{\chi}\|_{L^2(C_u)}\\
&\left.+\||u|^2(\delta D)(|u|\nabla)^{k}\eta\|_{L^2(C_u)}+\|\delta^{-1}|u|^3(\delta D)(|u|\nabla)^{k}\underline{\omega}\|_{L^2(C_u)}\right)\le C(F_{k+N}).
\end{align*}
According to the relation $\eta+\etab=2\ds\log\Omega$, we can also derive
\begin{align*}
\delta^{-\frac{1}{2}}|u|^{-1}\||u|^2(\delta D)(|u|\nabla)^{k}\etab\|_{L^2(C_u)}\le C(F_{k+N}).
\end{align*}
Now (\ref{Sobolev_L4_Linfinity}) and (\ref{SobolevC_L2_L4}) yield $\mathcal{O}^4_k,\mathcal{O}^\infty_k\le C(F_{k+N})$ except for the component $\omega$.

In order to bound $\omega$, we commute $\nablas^k$ with (\ref{NSE_Db_omega}).
We also take $\nablas^k$ to both sides of the equation and we hope to bound $L^2(\Cb_{\ub})$ norm for $\Db\nablas^k\omega$. The commutator $[\Db,\nablas^k]\omega$ is easy to control thanks to the induction hypothesis \eqref{induction}. Because we do have $L^4(S_{\ub,u})$ bound for $\nablas^k\eta$ and $\nablas^k\etab$ at the moment, then we can bound $\||u|^{-\frac{3}{2}}|u|^2(|u|\nablas)^k((\eta,\etab)-|\etab|^2)\|_{L^2(\Cb_{\ub})}\le C(F_{k+N})$. In view of (\ref{beta}), we have
$\||u|^{-\frac{3}{2}}|u|^{\frac{5}{2}}(|u|\nablas)^k\rho\|\le C(F_{k+N})$ and we can deduce that
$$\||u|^{-\frac{3}{2}}|u|(|u|\Db)(|u|\nablas)^k\omega\|\le C(F_{k+N}),$$
Finally, (\ref{Sobolev_L4_Linfinity}) and (\ref{SobolevCb_L2_L4}) yield $\mathcal{O}^4_k[\omega], \mathcal{O}^\infty_{k-1}[\omega]\le C(F_{k+N})$.

We have completed the proof of Theorem \ref{Higherorder}.

\section{Construction of the Transition Region}\label{Construction of the Transition Region}
We have already mentioned that, in addition to Christodoulou's short pulse ansatz described, we need to impose one more condition on the initial data defined on $C_{u_0}^{[0,\delta]}$, that is \eqref{integral=m0} or \eqref{integral=m0 different}. This condition is the key to all the construction in sequel. We start from the condition \eqref{integral=m0} which is imposed on the seed data, that is,
\begin{equation}
\int_0^1\left|\frac{\partial\psi_0}{\partial s}(s,\theta)\right|^2\D s=16m_0.
\end{equation}
We show that this condition is equivalent to the more physical one \eqref{integral=m0 different} up to an error of size $\delta$ to some positive power. More precisely, we have
\begin{lemma}On the sphere $S_{0,\delta}$, for all $k$, we have
\begin{align*}
|u_0|^k|\nablas^k(\int_0^{\delta}|u_0|^2|\chih(\ub,u_0)|^2\D\ub-4m_0)|\lesssim_k\delta^{\frac{1}{2}}|u_0|^{-1}.
\end{align*}
\end{lemma}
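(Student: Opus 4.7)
My plan is to exploit the fact that along $C_{u_0}$ we have $\Omega \equiv 1$ and $b \equiv 0$, so $L = \partial_{\ub}$ and $\chi_{AB} = \tfrac12\partial_{\ub}\gs_{AB}$. Writing $\gs = \phi^{2}\hat{\gs}$ with the Christodoulou-normalization that $\Phi_{\ub}^{*}\hat{\gs}$ has the same volume form as $\gs|_{S_{0,u_0}}$, the tensor $\partial_{\ub}\hat{\gs}$ is automatically trace-free in $\hat{\gs}$, which forces $\chih_{AB} = \tfrac12\phi^{2}\partial_{\ub}\hat{\gs}_{AB}$. When we then compute $|\chih|^{2}_{\gs}$ the conformal factor $\phi$ cancels exactly, giving the clean expression
\begin{equation*}
|\chih|^{2}_{\gs} = \tfrac{1}{4}\,\hat{\gs}^{AC}\hat{\gs}^{BD}\,\partial_{\ub}\hat{\gs}_{AB}\,\partial_{\ub}\hat{\gs}_{CD}.
\end{equation*}

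Next I would insert the short pulse ansatz $\hat{\gs}_{AB} = \tfrac{|u_0|^{2}}{(1+|\theta|^{2}/4)^{2}}(\exp\psi)_{AB}$ with $\psi = \delta^{1/2}|u_0|^{-1}\psi_0(\ub/\delta,\theta)$. The explicit conformal factor cancels in the above contraction, leaving a purely matrix expression $|u_0|^{2}|\chih|^{2}_{\gs} = \tfrac{1}{4}\,m^{AC}m^{BD}\partial_{\ub}m_{AB}\partial_{\ub}m_{CD}$ where $m = \exp\psi$. Because $|\psi| = O(\delta^{1/2}|u_0|^{-1})$, both $m^{-1}$ and the Duhamel formula for $\partial_{\ub}\exp\psi$ admit expansions $m^{-1} = I + O(|\psi|)$ and $\partial_{\ub}m = \partial_{\ub}\psi + O(|\psi||\partial_{\ub}\psi|)$; substituting $\partial_{\ub}\psi = \delta^{-1/2}|u_0|^{-1}\partial_s\psi_0$ produces the pointwise identity
\begin{equation*}
|u_0|^{2}|\chih|^{2}_{\gs}(\ub,u_0,\theta) \;=\; \tfrac{1}{4\delta}\,|\partial_s\psi_0(\ub/\delta,\theta)|^{2} \;+\; O(\delta^{-1/2}|u_0|^{-1}).
\end{equation*}
Integrating in $\ub$ over $[0,\delta]$ with the substitution $s = \ub/\delta$, the leading term becomes $\tfrac{1}{4}\int_0^1|\partial_s\psi_0|^{2}\D s = 4m_0$ by the imposed normalization \eqref{integral=m0}, while the remainder picks up a $\delta$ from the length of integration, yielding the $k = 0$ bound.

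For $k \geq 1$ the point is that the main term $4m_0$ is a \emph{constant} in $\theta$, so every angular derivative annihilates it and only the error term survives. All the $\theta$-dependence of that error lives in $\psi_0 \in C^{\infty}_c$, so each application of $\nablas$ either hits $\psi_0$ or a connection coefficient. On $S_{\delta,u_0}$ the covariant derivative $\nablas$ has magnitude comparable to $|u_0|^{-1}\partial_\theta$ up to perturbations controlled in $C^{k}$ by Theorem \ref{Higherorder}, so each $\nablas$ costs $|u_0|^{-1}$. Commuting $\nablas^k$ through the $\ub$-integral and repeating the matrix-exponential expansion gives a pointwise bound of order $\delta^{1/2}|u_0|^{-1-k}$ for the $k$-th derivative of the error, which upon multiplication by $|u_0|^k$ yields the claim. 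The only real bookkeeping step is to check that each commutator $[\nablas,\partial_\ub]$ and each factor from the nonlinear matrix exponential contributes either a power of $\delta^{1/2}|u_0|^{-1}$ or at least a factor $|u_0|^{-1}$ controlled by the higher order estimates, so that the error remains strictly smaller than the main term; this is the main, but essentially routine, obstacle.
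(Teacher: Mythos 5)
Your proposal is correct and follows essentially the same route as the paper's proof: cancellation of the conformal factor, reduction to the matrix expression $m^{AC}m^{BD}\partial_{\ub}m_{AB}\partial_{\ub}m_{CD}$ via the short-pulse ansatz, expansion of $m=\exp\psi$ about the identity, and invocation of the normalization \eqref{integral=m0}. The only (cosmetic) difference is your appeal to Theorem \ref{Higherorder} for the $\nablas$-versus-$|u_0|^{-1}\partial_\theta$ comparison; the paper uses only the direct initial-data estimate $|\partial_\theta^k(m^{AB}-\delta^{AB})|\lesssim_k\delta^{1/2}|u_0|^{-1}$ from the ansatz together with the explicit closeness of $\gs|_{S_{\delta,u_0}}$ to the round metric of radius $|u_0|$, which is strictly more elementary than the evolution estimates and keeps the lemma self-contained.
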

The notation $A\lesssim_kB$ means that $A\le C_kB$ where $C_k$ is a constant depending on the $C^{k+N}$ norms of the seed data for sufficient large $N$.
\begin{proof}
In stereographic coordinates, we have
\begin{align*}
4\int_0^{\delta}|\chih(\ub,u_0)|^2\D\ub&=\int_0^{\delta}\widehat{\gs}^{AC}\widehat{\gs}^{BD}D\widehat{\gs}_{AB}D\widehat{\gs}_{CD}\D\ub=\int_0^{\delta}m^{AC}m^{BD}\partial_{\ub}m_{AB}\partial_{\ub}m_{CD}\D\ub.
\end{align*}
In view of the short pulse ansatz (\ref{shortpulse}), we have $|\partial^k_\theta(m^{AB}-\delta^{AB})|\lesssim_k\delta^{\frac{1}{2}}|u_0|^{-1}$. Now we can make use of the definition $m=\exp\psi$ to derive
\begin{align*}
|\partial^k_\theta(4\int_0^{\delta}|u_0|^2|\chih(\ub,u_0)|^2\D\ub-\int_0^{\delta}|u_0|^2|\partial_{\ub}\psi(\ub)|^2\D\ub)|\lesssim_k\delta^{\frac{1}{2}}|u_0|^{-1}.
\end{align*}
In view of (\ref{shortpulse}) and (\ref{integral=m0}), we have $\int_0^{\delta}|u_0|^2|\partial_{\ub}\psi|^2\D\ub=\int_0^1|\partial_s\psi_0|^2\D s=16m_0$,
therefore
\begin{align*}
|u_0|^k|\nablas^k(\int_0^{\delta}|u_0|^2|\chih(\ub,u_0)|^2\D\ub-4m_0)|\lesssim_k\delta^{\frac{1}{2}}|u_0|^{-1}.
\end{align*}
\end{proof}
Thanks to the lemma, we shall not differentiate conditions \eqref{integral=m0} and \eqref{integral=m0 different}. We turn to the geometric consequences of these conditions.

\subsection{Geometry on $S_{\delta,u_0}$.}\label{Section Geometry on intersecting sphere}
The purpose of this section is to prove the following lemma which, roughly speaking, says the geometry of the two sphere $S_{\delta,u_0}$ is close to the geometry of a given two sphere in the Schwarzschild space-time with mass $m_0$.
\begin{lemma}\label{geometryonS}On the sphere $S_{\delta,u_0}$, we have $\alpha \equiv 0$ and for all $k$
\begin{align*}
|u_0|^k\left(|\nablas^k(\tr\chi-(\frac{2}{|u_0|}-\frac{4m_0}{|u_0|^2}))|,|\nablas^k\beta|,
|u_0||\nablas^k(\rho+\frac{2m_0}{|u_0|^3})|,|u_0||\nablas^k\sigma|\right)\lesssim_k\delta^{\frac{1}{2}}|u_0|^{-2}.
\end{align*}
\end{lemma}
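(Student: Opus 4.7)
The plan is to integrate the null structure and Bianchi equations along $C_{u_0}$, where $\Omega \equiv 1$, starting from the Minkowski data at $\ub = 0$ and using the key identity $|u_0|^2 \int_0^\delta |\chih|^2\, d\ub = 4m_0$ (together with the angular-derivative refinements supplied by \eqref{integral=m0 different}) to extract the nontrivial Schwarzschild corrections. The vanishing $\alpha \equiv 0$ on $S_{\delta, u_0}$ will follow at once from \eqref{NSE_Dh_chih}: since $\psi_0 \in C^\infty_c((0,1) \times S_{0, u_0})$, the function $\psi$ and all its $\ub$-derivatives vanish in a neighborhood of $\ub = \delta$, and in the $(\ub, \theta)$-coordinates (in which the flow $\Phi_\ub$ is the identity) this forces $\chih$ to vanish in a full one-sided neighborhood of $\ub = \delta$ on $C_{u_0}$, so $\Dh\chih' = 0$ there.

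For $\tr\chi$, equation \eqref{NSE_D_trchi} becomes $\partial_\ub \tr\chi = -\tfrac{1}{2}(\tr\chi)^2 - |\chih|^2$ on $C_{u_0}$, equivalently $\partial_\ub(2/\tr\chi) = 1 + 2|\chih|^2/\tr\chi^2$. Integrating from $\ub = 0$ with initial value $\tr\chi(0) = 2/|u_0|$ and inserting the key identity yields $\tr\chi(\delta, u_0, \theta) = 2/|u_0| - 4m_0/|u_0|^2 + O(\delta/|u_0|^2)$, where the $O(\delta/|u_0|^2)$ error is absorbed in the allowed $\delta^{1/2}/|u_0|^2$. For $\beta$, compact support of $\psi_0$ forces $\chih$ and all its angular derivatives to vanish at $\ub = \delta$, so the Codazzi equation \eqref{NSE_div_chih} reduces on $S_{\delta, u_0}$ to $\beta = \tfrac{1}{2}\ds\tr\chi + \tfrac{1}{2}\tr\chi\,\eta$; this is $O(\delta^{1/2}/|u_0|^3)$ by the $\tr\chi$ estimate (whose gradient inherits $\delta^{1/2}$-smallness from the gradient version of \eqref{integral=m0 different}) combined with Christodoulou's bound $|\eta| \lesssim \delta^{1/2}/|u_0|^2$.

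For $\rho$, the Gauss equation \eqref{NSE_Gauss} collapses on $S_{\delta, u_0}$ (where $\chih = 0$) to $\rho = -K - \tfrac{1}{4}\tr\chi\,\tr\chib$, so I need $K$ and $\tr\chib$ separately. Because $\psi(\delta, \cdot) \equiv 0$, the induced metric takes the form $\gs|_{S_{\delta, u_0}} = \phi^2 \gs|_{S_{0, u_0}}$ with $\phi(\delta)^2 = \exp(\int_0^\delta \tr\chi\, d\ub)$, obtained by integrating $D\log\det\gs = 2\tr\chi$; the standard conformal transformation formula then produces $K = 1/|u_0|^2 - 2\delta/|u_0|^3 + O(\delta/|u_0|^4)$. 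Integration of \eqref{NSE_D_trchib} from $\tr\chib(0) = -2/|u_0|$, using Christodoulou's estimates for $\etab$, $\chih$, $\chibh$, and $\rho$ on $C_{u_0}$, yields $\tr\chib(\delta) = -2/|u_0| + 2\delta/|u_0|^2 + O(\delta^{3/2}/|u_0|^3)$. Multiplying, the $\delta/|u_0|^3$ contributions from $-K$ and from $-\tfrac{1}{4}\tr\chi\,\tr\chib$ cancel exactly, leaving $\rho + 2m_0/|u_0|^3 = O(\delta/|u_0|^3)$. For $\sigma$, \eqref{NSE_curl_eta} gives $\sigma = \curls\eta + \tfrac{1}{2}\chih \wedge \chibh$, and the second term vanishes on $S_{\delta, u_0}$, so $|\sigma| \lesssim |\nablas\eta| \lesssim \delta^{1/2}/|u_0|^3$.

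The higher-order $\nablas^k$ bounds will be obtained by commuting $\nablas^k$ through each of the above propagation and elliptic identities and controlling the commutator terms by Theorem \ref{Higherorder}. The main obstacle is to ensure that at each order the angular derivatives of the key identity \eqref{integral=m0 different} are fully exploited, converting what would otherwise be a $\delta$-independent $O(1/|u_0|^{2+k})$ loss (arising from differentiating $|\chih|^2$ pointwise, since $|\nablas^k \chih| \sim \delta^{-1/2}/|u_0|^{1+k}$) into the sharp $\delta^{1/2}/|u_0|^{2+k}$ bound; this requires tracking the cancellation that occurs only after integrating in $\ub$, which is the very reason condition \eqref{integral=m0 different} is imposed.
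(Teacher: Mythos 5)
Your handling of $\alpha$, $\tr\chi$, $\beta$, and $\sigma$ follows the paper's proof essentially step by step: vanishing of $\alpha$ from compact support of $\chih$ and \eqref{NSE_Dh_chih}; integration of \eqref{NSE_D_trchi} with the key identity for $\tr\chi$; the Codazzi equation \eqref{NSE_div_chih} together with $\chih(\delta,u_0)=0$ for $\beta$; and \eqref{NSE_curl_eta} for $\sigma$. The remarks about commuting $\nablas^k$ and exploiting angular derivatives of \eqref{integral=m0 different} also match the spirit of the paper's closing line.

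For $\rho$ you take a genuinely different, and considerably heavier, route. You go back to the raw Gauss relation $\rho=-K-\tfrac14\tr\chi\tr\chib$, reconstruct $K$ from the conformal factor $\phi^2=\exp(\int_0^\delta\tr\chi\,\D\ub)$, integrate \eqref{NSE_D_trchib} to expand $\tr\chib$ to $\delta$-order, and lean on an asserted exact cancellation of the two $\delta/|u_0|^3$ contributions. The cancellation is real, but the paper avoids the entire issue by rearranging the Gauss equation into
\[
\bigl(K-\tfrac1{|u_0|^2}\bigr)+\tfrac14\tr\chi\bigl(\tr\chib+\tfrac2{|u_0|}\bigr)
=-\bigl(\rho-\tfrac1{2|u_0|}\bigl(\tr\chi-\tfrac2{|u_0|}\bigr)\bigr),
\]
so that the Schwarzschild term $2m_0/|u_0|^3$ is carried entirely by $\tfrac1{2|u_0|}(\tr\chi-\tfrac2{|u_0|})$, which you have already controlled, while the remaining two terms are bounded directly by Christodoulou's estimates on $C_{u_0}$, namely $|u_0|^k|\nablas^k(K-\tfrac1{|u_0|^2})|\lesssim_k\delta^{1/2}|u_0|^{-3}$ and $|u_0|^k|\nablas^k(\tr\chib+\tfrac2{|u_0|})|\lesssim_k\delta|u_0|^{-2}$. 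No $\delta$-precision expansions and no cancellation are needed. Beyond being more work, your version has a concrete gap as written: you record $\tr\chi=2/|u_0|-4m_0/|u_0|^2+O(\delta/|u_0|^2)$, but the cancellation you invoke requires the explicit $-2\delta/|u_0|^2$ term inside that $O(\delta/|u_0|^2)$ (the contribution of $-\tfrac12\int_0^\delta(\tr\chi)^2\,\D\ub$); with the stated precision, $\tr\chi\tr\chib$ is determined only modulo $O(\delta/|u_0|^3)$, which is exactly the order at which you claim cancellation. The conformal-transformation computation of $K$ also implicitly needs control of the angular Laplacian of $\log\phi$, whose $\theta$-dependence comes from partial integrals $\int_0^{\ub'}|\chih|^2\,\D\ub''$ that are \emph{not} constant in $\theta$ for $\ub'<\delta$; this is manageable, but it is one more thing the paper's renormalized Gauss equation lets you skip.
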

\begin{remark}
The connection coefficients and the curvature components appearing in the lemma have improvement on ``smallness'', which means that they are controlled by $\delta^{\frac{1}{2}}$ instead of $\delta$ to some nonpositive power in \cite{Chr}. This improvement comes from the condition \eqref{integral=m0 different}. The other connection coefficients and the curvature components, which are not mentioned in the lemma, have already been bounded by $\delta$ to some positive power in \cite{Chr}. This suggests that the geometry of the sphere $S_{\delta,u_0}$ is close to the geometry of a sphere in Schwarzschild spacetime.
\end{remark}
\begin{proof}
First of all, $\alpha=0$ on $S_{\delta,u_0}$ follows immediately from \eqref{NSE_Dh_chih} and the fact that $\chih$ has compact support in $C_{u_0}^{(0,\delta)}$.

For $\tr\chi$, on $C_{u_0}$, we can write \eqref{NSE_D_trchi} as
\begin{equation*}
D\tr\chi=-\frac{1}{2}(\tr\chi)^2-|\chih|^2,
\end{equation*}
since where $\Omega=1$. We integrate this equation along $[0,\delta]$ to derive
\begin{equation*}
\tr\chi(\delta,u_0)-\frac{2}{|u_0|}=-\frac{1}{2}\int_0^{\delta}(\tr\chi(\ub,u_0))^2\D\ub-\int_0^{\delta}|\chih(\ub,u_0)|^2\D\ub.
\end{equation*}
Since $|u_0|^k|\nablas^k(\tr\chi-\frac{2}{|u_0|})|\lesssim_k|u_0|^{-2}$ (see Chapter 2 of \cite{Chr}),  taking $\nablas^k$ on the above equation yields the desired estimates for $\tr\chi$ on the sphere $S_{\delta,u_0}$.

For $\beta$, on $C_{u_0}$, we can write \eqref{NSE_div_chih} as
\begin{equation*}
\divs \chih=\frac{1}{2}\ds \tr \chi-\chih\cdot\eta+\frac{1}{2}\tr \chi\eta-\beta.
\end{equation*}
Since $|u_0|^k|\nablas^k\eta|\lesssim_k\delta^{\frac{1}{2}}|u_0|^{-2}$ (see Chapter 2 of \cite{Chr}), $\chih=0$ on $S_{\delta,u_0}$, combined with the estimates on $\tr\chi$ just derived,  we can apply $\nablas^k$ on the above equation to derived the desired estimates for $\beta$ on $S_{\delta,u_0}$.

For $\sigma$, Ton $C_{u_0}$, we can write \eqref{NSE_curl_eta} as
\begin{align*}
\curls \eta=\sigma-\frac{1}{2}\chih \wedge\chibh.
\end{align*}
Since $|u_0|^k|\nablas^k\eta|\lesssim_k\delta^{\frac{1}{2}}|u_0|^{-2}$ and $\chih=0$ on $S_{\delta,u_0}$, we obtain the desired estimates for $\sigma$ on $S_{\delta,u_0}$ by simply applying $\nablas^k$ to the above equation.

Finally, we consider $\rho$. On $C_{u_0}$, the Gauss equation \eqref{NSE_Gauss} can be written as
\begin{align*}
K=-\frac{1}{4}\tr \chi\tr\chib+\frac{1}{2}(\chih,\chibh)-\rho.
\end{align*}
We rewrite this equation in the following renormalized form
\begin{align*}
(K-\frac{1}{|u_0|^2})+\frac{1}{4}\tr\chi(\tr\chib+\frac{2}{|u_0|})=-(\rho-\frac{1}{2|u_0|}(\tr\chi-\frac{2}{|u_0|})).
\end{align*}
Since $|u_0|^k|\nablas^k(K-\frac{1}{|u_0|^2})|\lesssim_k\delta^{\frac{1}{2}}|u_0|^{-3}$ and $|u_0|^k|\nablas^k(\tr\chib+\frac{2}{|u_0|})|\lesssim_k\delta|u_0|^{-2}$ on $C_{u_0}$ (see again Chapter 2 of \cite{Chr}), combined with the estimates for $\tr\chi$ derived above, we obtain the desired estimates for $\rho$ by applying $\nablas^k$ to the above equation. This completes the proof of the lemma.
\end{proof}

\subsection{Geometry on $\underline{C}_{\delta}$.}\label{Section Geometry on incoming cone}

We shall study the geometry of $\underline{C}_{\delta}$ by integrating the equations \eqref{NSE_Db_chih}, \eqref{NSE_Db_trchi}, \eqref{NBE_Db_rho}, \eqref{NBE_Db_sigma}, \eqref{NBE_Db_beta}, \eqref{NBE_Db_alpha} , \eqref{NSE_Db_omega} along $\underline{C}_{\delta}$. These equations can be viewed as ordinary differential equations for the connection coefficients and curvature components which are not bounded by $\delta$ to some positive power in \cite{Chr}. We will see that they are actually bounded by $\delta$ to some positive power under the additional condition \eqref{integral=m0 different}. We will regain the ``smallness'' from Theorem \ref{Higherorder} and improved smallness from Lemma \ref{geometryonS}. More precisely, we prove the following lemma:
\begin{lemma}\label{geometryonCb}Suppose that $\delta>0$ has been chosen such that Theorem \ref{Higherorder} is true for some $k$. Then on the incoming cone $\Cb_{\delta}$, we have the following estimates
\begin{align*}
\|(|u|\nablas)^{k-3}(\Omega\chih)\|_{L^\infty}&\lesssim\delta^{\frac{1}{2}}|u|^{-1},\quad \text{for $i\le k-3$},\\ |u|^{-\frac{1}{2}}\|(|u|\nablas)^{k-2}&(\Omega\chih)\|_{L^4(S_{\delta,u})}+|u|^{-1}\|(|u|\nablas)^{k-1}(\Omega\chih)\|_{L^2(S_{\delta,u})}\lesssim\delta^{\frac{1}{2}}|u|^{-2};\\
\|(|u|\nablas)^i(\rho+\frac{2m_0}{|u|^3},\sigma)\|&_{L^\infty}\lesssim\delta^{\frac{1}{2}}|u|^{-3}\quad \text{for $i\le k-3$},\\ |u|^{-\frac{1}{2}}\|(|u|\nablas)^{k-2}&(\rho,\sigma)\|_{L^4(S_{\delta,u})}+ |u|^{-1}\|(|u|\nablas)^{k-1}(\rho,\sigma)\|_{L^2(S_{\delta,u})}\lesssim\delta^{\frac{1}{2}}|u|^{-3};\\
\|(|u|\nablas)^i\beta\|_{L^\infty}&\lesssim\delta^{\frac{1}{2}}|u|^{-2}\quad \text{for $i\le k-4$},\\ |u|^{-\frac{1}{2}}\|(|u|\nablas)^{k-3}&\beta\|_{L^4(S_{\delta,u})}+ |u|^{-1}\|(|u|\nablas)^{k-2}\beta\|_{L^2(S_{\delta,u})}\lesssim\delta^{\frac{1}{2}}|u|^{-2};\\
\|(|u|\nablas)^i\alpha\|_{L^\infty}&\lesssim\delta^{\frac{1}{2}}|u|^{-2}\quad \text{for $i\le k-5$},\\ |u|^{-\frac{1}{2}}\|(|u|\nablas)^{k-4}&\alpha\|_{L^4(S_{\delta,u})}+ |u|^{-1}\|(|u|\nablas)^{k-3}\alpha\|_{L^2(S_{\delta,u})}\lesssim\delta^{\frac{1}{2}}|u|^{-2};\\
\|(|u|\nablas)^i(\tr\chi-(\frac{2}{|u|}&-\frac{4m_0}{|u|^2}),\omega-\frac{m_0}{|u|^2})\|_{L^\infty}\lesssim\delta^{\frac{1}{2}}|u|^{-3}\quad \text{for $i\le k-3$},\\
|u|^{-\frac{1}{2}}\|(|u|\nablas)^{k-4}&(\tr\chi,\omega)\|_{L^4(S_{\delta,u})}+|u|^{-1}\|(|u|\nablas)^{k-3}(\tr\chi,\omega)\|_{L^2(S_{\delta,u})}\lesssim\delta^{\frac{1}{2}}|u|^{-3}.
\end{align*}
\end{lemma}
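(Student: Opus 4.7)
My plan is to integrate the $\Db$-transport equations for the components listed in the lemma along the incoming cone $\Cb_\delta$, using Lemma \ref{geometryonS} to control the data on the starting sphere $S_{\delta,u_0}$ and invoking the Gronwall-type estimate recorded at the end of Section 3.1. The relevant equations \eqref{NSE_Db_chih}, \eqref{NSE_Db_trchi}, \eqref{NBE_Db_rho}, \eqref{NBE_Db_sigma}, \eqref{NBE_Db_beta}, \eqref{NBE_Db_alpha}, \eqref{NSE_Db_omega} are coupled, so the argument must be arranged so that each source term is supplied either by Theorem \ref{Higherorder} (with its improved $\delta^{1/2}$ smallness) or by a previously treated component.

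I would proceed in the following order. First, integrate \eqref{NSE_Db_chih} for $\Omega\chih$: it vanishes on $S_{\delta,u_0}$ (since the seed data has compact support in $\ub\in(0,\delta)$), and the source $\Omega^2(\nablas\tensor\eta+\eta\tensor\eta+\tfrac12\tr\chib\chih-\tfrac12\tr\chi\chibh)$ has the requisite $\delta^{1/2}$ smallness by Theorem \ref{Higherorder}. Next, integrate \eqref{NSE_Db_trchi} for the difference $\Omega\tr\chi-(2/|u|-4m_0/|u|^2)$, matching against the Schwarzschild transport and taking initial data from Lemma \ref{geometryonS}. Then treat the renormalized Bianchi equations \eqref{NBE_Db_rho} and \eqref{NBE_Db_sigma} for $\rho+2m_0/|u|^3$ and $\sigma$, followed by \eqref{NBE_Db_beta} for $\beta$ and \eqref{NBE_Db_alpha} for $\alpha$; at each step the sources are quadratic in previously controlled quantities plus curvature terms Theorem \ref{Higherorder} bounds with the required factor. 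Finally, integrate \eqref{NSE_Db_omega} for $\omega-m_0/|u|^2$, whose right-hand side $\Omega^2(2(\eta,\etab)-|\eta|^2-\rho)$ has the correct Schwarzschild value built in once $\rho+2m_0/|u|^3$ has been estimated.

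For each equation the top-order bound is produced by commuting $\nablas^{k-1}$ through, collecting commutator terms of schematic form $\nablas^i\Gamma\cdot\nablas^{k-1-i}\phi$ which are controlled by Theorem \ref{Higherorder} and (for the initial values on $S_{\delta,u_0}$) by Lemma \ref{geometryonS}, and then integrating in $u$ along $\Cb_\delta$ with the incoming Gronwall inequality. The $L^4$ and $L^\infty$ statements at lower orders then follow from the top-order $L^2(S_{\delta,u})$ bounds via the Sobolev embeddings \eqref{Sobolev_L2_L4}--\eqref{Sobolev_L4_Linfinity}; the loss of one or two derivatives in the lemma's hierarchy reflects precisely the Sobolev losses, combined with the fact that the transport equations for $\tr\chi$, $\omega$, $\beta$, $\alpha$ cost an additional derivative on the source side.

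The main obstacle is the first step, the bound on $\Omega\chih$. The initial data on $S_{\delta,u_0}$ vanishes, so the entire $\delta^{1/2}|u|^{-1}$ bound must be produced from the source $\Omega^2\nablas\tensor\eta$ after integration in $u$. Christodoulou's original estimates, which yield only boundedness (not smallness) of $\nablas^2\eta$, would fail here; it is exactly the improved smallness of $\nablas^{k-1}\eta$ afforded by Theorem \ref{Higherorder}, at the cost of one extra derivative, that closes this step. Once $\Omega\chih$ is under control on $\Cb_\delta$ with the correct $\delta^{1/2}$ factor, the remaining equations admit a uniform pattern: every $\Db\phi=\tfrac{\nu}{2}\Omega\tr\chib\,\phi+\gamma\cdot\phi+\xi$ has source $\xi$ controlled in $L^2(S_{\delta,u})$ with the desired smallness, and the Gronwall step converts this into the stated estimate on $\Cb_\delta$.
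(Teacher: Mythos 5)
Your plan follows the same skeleton as the paper: integrate the $\Db$-transport equations along $\Cb_\delta$ from the sphere $S_{\delta,u_0}$ (where Lemma \ref{geometryonS} supplies the improved data), absorb the $\tr\chib$ and $\chibh$ terms by the Gronwall estimate of Section 3.1, and draw the crucial $\delta^{1/2}$ smallness of the source terms from Theorem \ref{Higherorder}. Your handling of $\Omega\chih$, of the Bianchi chain $\rho,\sigma \to \beta \to \alpha$, and of $\omega-m_0/|u|^2$ matches the paper, and your diagnosis of why the improved $\eta$ estimates are indispensable for the first step is exactly right.

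There is, however, a genuine ordering error that would break the argument. You propose to integrate \eqref{NSE_Db_trchi} for $\Omega\tr\chi - (2/|u|-4m_0/|u|^2)$ immediately after $\Omega\chih$ and \emph{before} treating $\rho$ and $\sigma$. But the source of \eqref{NSE_Db_trchi} contains $+2\Omega^2\rho$, and $\rho$ is \emph{not} among the quantities that Theorem \ref{Higherorder} controls with a $\delta^{1/2}$ factor: it yields only $|\rho|\lesssim|u|^{-3}$, which is of order one. Matching against the Schwarzschild transport converts this into the renormalized term $2\Omega^2\rho + 4m_0/|u|^3$, and the needed bound $|\Omega^2\rho + 2m_0/|u|^3|\lesssim\delta^{1/2}|u|^{-3}$ is precisely the estimate for $\rho$ that you have postponed. (You noticed this dependence for $\omega$, but the same remark applies to $\tr\chi$; in fact, both $\tr\chi$ and $\omega$ need both the improved $\chih$ bound for the $(\chih,\chibh)$ term and the renormalized $\rho$ bound.) The paper therefore defers $\tr\chi$ and $\omega$ to the very end, with the order $\chih$, then $\rho,\sigma$ from \eqref{NBE_Db_rho}--\eqref{NBE_Db_sigma} (whose sources involve only $\betab,\alphab$ and hence carry the right powers of $\delta$ already), then $\beta$ from \eqref{NBE_Db_beta}, then $\alpha$ from \eqref{NBE_Db_alpha}, and finally $\tr\chi$ and $\omega$ from \eqref{NSE_Db_trchi}, \eqref{NSE_Db_omega}. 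With that reordering, the remainder of your argument -- the commutator terms, the Gronwall step, and the Sobolev losses producing the $L^4$ and $L^\infty$ statements -- goes through as you describe.
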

\begin{remark}
The smallness gained in the lemma comes from Lemma \ref{geometryonS} therefore from the condition \eqref{integral=m0 different}. Heuristically, the worst control for the geometric quantities on $\Cb_{\ub}$ are propagated from the initial surface $C_{u_0}$. This is a very important technical feature of the work of Christodoulou \cite{Chr}. Therefore, the condition \eqref{integral=m0 different} allows one to improve the geometry on $S_{\delta, u_0}$ hence the geometry on $\Cb_{\delta}$.
\end{remark}
\begin{proof} We will derive the estimates for $\chih$, $\rho$, $\sigma$, $\beta$, $\alpha$, $\omega$ and $\tr\chi$ one by one. The proof for each component is by induction on the number of derivatives. The parameter $\delta>0$ is chosen to be sufficiently small so that Theorem \ref{Higherorder} holds for $k \gg 5$.

For $\chih$, we rewrite \eqref{NSE_Db_chih} as
\begin{align}\label{DbOmegachih}
\Db(\Omega\chih)=\frac{1}{2}\Omega\tr\chib(\Omega\chih)+2(\Omega\chibh,\Omega\chih)\gs+\Omega^2(\nablas \tensor \eta + \eta \tensor \eta -\frac{1}{2}\tr\chi \chibh),
\end{align}
where the second term is the trace of $\Db(\Omega\chih)$.
\begin{remark} In Christodoulou's work, we only obtained the bounds for the second derivatives of the curvature components which implied the $L^\infty$ estimates $|\nablas\tensor\eta|\lesssim{|u|^{-3}}$. If we have estimates for the third derivatives of the curvature components, as stated in Theorem \ref{Higherorder}, we have $|\nablas\tensor\eta|\lesssim\delta^{\frac{1}{2}}|u|^{-3}$. Therefore, we can bound $\Omega^2(\nablas \tensor \eta + \eta \tensor \eta -\frac{1}{2}\tr\chi \chibh)$ in $L^\infty$ by $\delta$ to some positive power. So the right hand side of the above equation can be treated as a small term. This illustrate the idea of the proof and the importance of higher order energy estimates.
\end{remark}
We go back to the proof and integrate $\Db(\Omega\chih)$ along $\Cb_{\delta}$. The first two terms on the right hand side can be absorbed by Gronwall's inequality. In view of the fact that $\chih(\delta,u_0)=0$, according to Theorem \ref{Higherorder}, we have
\begin{align*}
|u||\Omega\chih(\delta,u)|\lesssim\int_{u_0}^u|u'||\Omega^2(\nablas \tensor \eta + \eta \tensor \eta -\frac{1}{2}\tr\chi \chibh)|\D u'\lesssim\delta^{\frac{1}{2}}|u|^{-1}.
\end{align*}
We then use an induction argument to obtain control for higher order derivatives. Suppose that for all $j<i\le k-3$, $|u|^{j}\|\nablas^{j}(\Omega\chih)\|_{L^\infty(\delta,u)}\lesssim\delta^{\frac{1}{2}}|u|^{-2}$, then we apply $\nablas^i$ to \eqref{DbOmegachih} to derive
 \begin{align*}
 \Db\nablas^i(\Omega\chih)&=\frac{1}{2}\Omega\tr\chib\nablas^i(\Omega\chih)+2(\Omega\chibh,\nablas^i(\Omega\chih))\\
 &+\nablas^i(\Omega^2(\nablas \tensor \eta + \eta \tensor \eta -\frac{1}{2}\tr\chi \chibh))+\lot.
 \end{align*}
 Therefore, the third term on the right hand side is bounded by $|u|^i|\nablas^i(\Omega^2(\nablas \tensor \eta + \eta \tensor \eta -\frac{1}{2}\tr\chi \chibh))|\lesssim\delta^{\frac{1}{2}}|u|^{-2}$ thanks to Theorem \ref{Higherorder}. The $\lot$ is of the form
\begin{equation*}
\sum_{j=0}^{i-1}\nablas^j\nablas(\Omega\tr\chib)\cdot\nablas^{i-j-1}(\Omega\chih)+
 \sum_{j=0}^{i-1}\nablas^j\nablas(\Omega\chibh)\cdot\nablas^{i-j-1}(\Omega\chih).
 \end{equation*}
We can bound $\nablas^j\nablas(\Omega\tr\chib)$ and $\nablas^j\nablas(\Omega\chibh)$ by Theorem \ref{Higherorder} and bound $\nablas^{i-j-1}(\Omega\chih)$ by the induction hypothesis. By integrating the above equation and using the Gronwall's inequality to absorb the first two terms, we obtain that
 \begin{align}\label{improve_chih}
\|(|u|\nablas)^i(\Omega\chih)\|_{L^\infty}\lesssim\delta^{\frac{1}{2}}|u|^{-1}
\end{align}
for $i\le k-3$.

For $i=k-2$, we only expect that the $L^4(S_{\delta,u})$ norm on $\chih$ is controlled by $\delta$ to some positive power, and for $i=k-1$, only the $L^2(S_{\delta,u})$ norm is small. This is because we only have information on $\|(|u|\nablas)^{k-1}\eta\|_{L^4(S_{\delta,u})},\|(|u|\nablas)^k\eta\|_{L^2(S_{\delta,u})}\lesssim\delta^{\frac{1}{2}}|u|^{-2}$ from Theorem \ref{Higherorder} but no information for higher derivatives. By a similar argument and integrating along $\Cb_{\delta}$, we obtain
\begin{align}\label{improve_chih1}
|u|^{-\frac{1}{2}}\|(|u|\nablas)^{k-2}(\Omega\chih)\|_{L^4(S_{\delta,u})}+|u|^{-1}\|(|u|\nablas)^{k-1}(\Omega\chih)\|_{L^2(S_{\delta,u})}\lesssim\delta^{\frac{1}{2}}|u|^{-2}.
\end{align}
It is easy to see that \eqref{improve_chih} and \eqref{improve_chih1} also hold if we replace $\Omega\chih$ by $\chih$.

We perform a similar argument for the other components. We now bound $\rho$ and $\sigma$. We recall \eqref{NBE_Db_rho} and \eqref{NBE_Db_sigma},
\begin{align*}
\Db\rho+\frac{3}{2}\Omega\tr\chib \rho&=-\Omega\{\divs \betab+(2\eta-\zeta,\betab)-\frac{1}{2}(\chih,\alphab)\},\\
\Db\sigma+\frac{3}{2}\Omega\tr\chib\sigma&=-\Omega\{\curls\betab+(2\etab-\zeta,{}^*\betab)+\frac{1}{2}\chih\wedge\alphab\}.
\end{align*}
Because $k\ge3$, we have $|\nablas\betab|\lesssim\delta|u|^{-5}$. Thus,
\begin{align*}
|-\Omega\{\divs \betab+(2\eta-\zeta,\betab)-\frac{1}{2}(\chih,\alphab)\}|+|-\Omega\{\curls\betab+(2\etab-\zeta,{}^*\betab)+\frac{1}{2}\chih\wedge\alphab\}|\lesssim\delta|u|^{-5}.
\end{align*}
We renormalize the first equation in the following form
\begin{align*}
\Db(\rho+\frac{2m_0}{|u|^3})+\frac{3}{2}\Omega\tr\chib(\rho+\frac{2m_0}{|u|^3})=\frac{3m_0}{|u|^3}(\Omega\tr\chib+\frac{2}{|u|})-\Omega\{\divs \betab+(2\eta-\zeta,\betab)-\frac{1}{2}(\chih,\alphab)\}.
\end{align*}
In view of the facts that $|\Omega\tr\chib+\frac{2}{|u|}|\lesssim\delta|u|^{-2}$ and the improved smallness for initial data on $S_{\delta,u_0}$ in Lemma \ref{geometryonS}, we integrate the above equations along $\Cb_{\delta}$ and we obtain
\begin{align*}
|\rho+\frac{2m_0}{|u|^3}| \lesssim \delta^{\frac{1}{2}}|u|^{-3}\quad \text{and} \quad |\sigma|\lesssim\delta^{\frac{1}{2}}|u|^{-3}.
\end{align*}
Similarly, we can also run an induction argument as above to obtain
\begin{equation}\label{improve_rhosigma}
\begin{split}
&\|(|u|\nablas)^i(\rho+\frac{2m_0}{|u|^3},\sigma)\|_{L^\infty}\lesssim\delta^{\frac{1}{2}}|u|^{-3}\ \text{for $i\le k-3$},\\ |u|^{-\frac{1}{2}}&\|(|u|\nablas)^{k-2}(\rho,\sigma)\|_{L^4(S_{\delta,u})}, |u|^{-1}\|(|u|\nablas)^{k-1}(\rho,\sigma)\|_{L^2(S_{\delta,u})}\lesssim\delta^{\frac{1}{2}}|u|^{-3}.
\end{split}
\end{equation}

We now bound $\beta$. We examine the terms in \eqref{NBE_Db_beta},
\begin{align*}
\Db\beta+\frac{1}{2}\Omega\tr\chib\beta-\Omega\chibh \cdot \beta=-\omegab \beta+\Omega\{\ds \rho+{}^*\ds \sigma+3\eta\rho+3{}^*\eta\sigma+2\chih\cdot\betab\}.
\end{align*}
If $k\ge 4$, we have the bound $|\ds \rho+{}^*\ds \sigma|\lesssim\delta^{\frac{1}{2}}|u|^{-4}$ thanks to \eqref{improve_rhosigma} and the bound $|-\omegab \beta+\Omega(3\eta\rho+3{}^*\eta\sigma+2\chih\cdot\betab)|\lesssim\delta^{\frac{1}{2}}|u|^{-5}$ thanks to Theorem \ref{Higherorder}. We remark that there is a loss of derivatives here because we control $\beta$ in terms of  derivatives of $\rho$ and $\sigma$ via Bianchi equations. We then integrate the equation along $\Cb_{\delta}$, combined with the estimates on the initial data on $S_{\delta,u_0}$ obtained in Lemma \ref{geometryonS},  we have
\begin{equation}\label{improve_beta}
\begin{split}
&\|(|u|\nablas)^i\beta\|_{L^\infty(S_{\delta,u})}\lesssim\delta^{\frac{1}{2}}|u|^{-2}\ \text{for $i\le k-4$},\\ |u|^{-\frac{1}{2}}&\|(|u|\nablas)^{k-3}\beta\|_{L^4(S_{\delta,u})}+ |u|^{-1}\|(|u|\nablas)^{k-2}\beta\|_{L^2(S_{\delta,u})}\lesssim\delta^{\frac{1}{2}}|u|^{-2}.
\end{split}
\end{equation}
Notice that we only have $L^\infty$ estimates on $(k-4)$-th derivatives on $\beta$ instead of $(k-3)$-th derivatives due to the derivative loss we just mentioned.

We now bound $\alpha$. We rewrite\eqref{NBE_Db_alpha} as follows,
\begin{align*}
\Db\alpha-\frac{1}{2}\Omega\tr\chib \alpha+2\omegab\alpha-2(\Omega\chibh,\alpha)\gs=-\Omega\{-\nablas\tensor\beta -(4\eta+\zeta)\tensor \beta+3\chih \rho+3{}^*\chih \sigma\}
\end{align*}
where $2(\Omega\chibh,\alpha)$ is the trace of $\Db\alpha$. If $k\ge5$, by \eqref{improve_beta}, we have $|\nablas\tensor\beta|\lesssim\delta^{\frac{1}{2}}|u|^{-3}$. We remark that at this point there is a loss of derivatives here because we use the derivatives of $\beta$.  We must also use $|\chih|+|\beta|\lesssim\delta^{\frac{1}{2}}|u|^{-2}$ by \eqref{improve_chih} and \eqref{improve_beta} to get $|-\Omega\{-\nablas\tensor\beta -(4\eta+\zeta)\tensor \beta+3\chih \rho+3{}^*\chih \sigma\}|\lesssim\delta^{\frac{1}{2}}|u|^{-3}$. Therefore, in view of the fact that $\alpha(\delta,u_0)=0$ on $S_{\delta,u_0}$, by integrating the equation along $\Cb_{\delta}$, we obtain
\begin{equation}\label{improve_alpha}
\begin{split}
&\|(|u|\nablas)^i\alpha\|_{L^\infty(S_{\delta,u})}\lesssim\delta^{\frac{1}{2}}|u|^{-2}\ \text{for $i\le k-5$},\\ |u|^{-\frac{1}{2}}&\|(|u|\nablas)^{k-4}\alpha\|_{L^4(S_{\delta,u})}, |u|^{-1}\|(|u|\nablas)^{k-3}\alpha\|_{L^2(S_{\delta,u})}\lesssim\delta^{\frac{1}{2}}|u|^{-2}.
\end{split}
\end{equation}
We remark that the terms $-\frac{1}{2}\Omega\tr\chib \alpha+2\omegab\alpha-2(\Omega\chibh,\alpha)\gs$ on the left hand side are absorbed by Gronwall's inequality. We also want to point out that we will not use the $L^2(S_{\delta,u})$ estimates on $\nablas^{k-3}\alpha$ in sequel. Notice also that we only have $L^\infty$ estimates on $(k-5)$-th derivatives on $\alpha$ due to the derivative loss we just mentioned.

Finally, we bound $\tr\chi$ and $\omega$. We renormalize \eqref{NSE_Db_trchi} and \eqref{NSE_Db_omega} as follows
\begin{align*}
&\quad \Db(\Omega\tr\chi-(\frac{2}{|u|}-\frac{4m_0}{|u|^2}))+\frac{1}{2}\Omega\tr\chib(\Omega\tr\chi-(\frac{2}{|u|}-\frac{4m_0}{|u|^2}))\\
&=-\frac{1}{|u|}(1-\frac{2m_0}{|u|})(\Omega\tr\chib+\frac{2}{|u|})+(2\Omega^2\rho+\frac{4m_0}{|u|^3})+\Omega^2(2\divs\eta+2|\eta|^2-(\chih,\chibh)),\\
&\quad \Db  (\omega-\frac{m_0}{|u|^2}) =-(\Omega^2\rho+\frac{2m_0}{|u|^3})+\Omega^2(2(\eta,\etab)-|\etab|^2).
\end{align*}
We have already obtained the following estimates: $|\Omega\tr\chib+\frac{2}{|u|}|\lesssim\delta|u|^{-2}$, $|\Omega^2\rho+\frac{2m_0}{|u|^3}|\lesssim\delta^{\frac{1}{2}}|u|^{-3}$,  $|\Omega-1|\lesssim\delta|u|^{-2}$, $|\Omega^2(2\divs\eta+2|\eta|^2-(\chih,\chibh)|+|\Omega^2(2(\eta,\etab)-|\etab|^2)|\lesssim\delta^{\frac{1}{2}}|u|^{-3}$ and $|\Omega\tr\chi(\delta,u_0)-(\frac{2}{|u|}-\frac{4m_0}{|u|^2})|\lesssim\delta^{\frac{1}{2}}|u|^{-2}$. In view of the fact that $\omega(\delta,u_0)=0$, by integrating the equation along $\Cb_{\delta}$, we obtain
\begin{align*}
\|\tr\chi-(\frac{2}{|u|}-\frac{4m_0}{|u|^2})\|_{L^\infty}+\|\omega-\frac{m_0}{|u|^2}\|_{L^\infty}\lesssim\delta^{\frac{1}{2}}|u|^{-2}.
\end{align*}
where $\Omega\tr\chi$ has been replaced by $\tr\chi$ thanks again to $|\Omega-1|\lesssim\delta|u|^{-2}$. Similarly, we can derive estimates for higher order derivatives. Finally, we have
\begin{equation}\label{improve_trchiomega}
\begin{split}
&\|(|u|\nablas)^i(\tr\chi-(\frac{2}{|u|}-\frac{4m_0}{|u|^2}),\omega-\frac{m_0}{|u|^2})\|_{L^\infty(S_{\delta,u})}\lesssim\delta^{\frac{1}{2}}|u|^{-3}\ \text{for $i\le k-3$},\\ |u|^{-\frac{1}{2}}&\|(|u|\nablas)^{k-2}(\tr\chi,\omega)\|_{L^4(S_{\delta,u})}, |u|^{-1}\|(|u|\nablas)^{k-1}(\tr\chi,\omega)\|_{L^2(S_{\delta,u})}\lesssim\delta^{\frac{1}{2}}|u|^{-3}.
\end{split}
\end{equation}
We complete the proof of the lemma.
\end{proof}

\subsection{Geometry on $C_{u_0}^{[\delta, \delta+1]}$.}\label{Section Geometry on outgoing cone}
We extend the initial data on $C_{u_0}^{[0,\delta]}$ to $C_{u_0}^{[0,\delta+1]}$ by setting $\chih \equiv 0$ for $\ub\in[\delta,\delta+1]$ and we will derive the estimates on $C_{u_0}^{[\delta,\delta+1]}$ for all the connection coefficients and curvature components. For the sake of simplicity, we denote $C_{u_0}=C_{u_0}^{[\delta,\delta+1]}$ in this subsection.

We first give another initial data on $C_{u_0} \cup \Cb_{\delta}$ and it is in fact an initial data set for the Schwarzschild space-time with mass $m_0$. To distinguish this new data from the old one,  we shall use a lower index $m_0$ for its connection coefficients and curvature components. The new initial data set consists of the following quantities: a $\gs_{m_0}(\delta,u_0)$ of $S_{\delta,u_0}$ which is the round metric of the sphere with radius $|u_0|$, the torsion $\zeta_{m_0}(\delta,u_0) \equiv 0$ on $S_{\delta,u_0}$, two null expansions $\tr\chi_{m_0}(\delta,u_0)\equiv\frac{2}{|u_0|}-\frac{4m_0}{|u_0|^2}$ and $\tr\chib_{m_0}(\delta,u_0) \equiv -\frac{2}{|u_0|}$ on $S_{\delta,u_0}$, the lapse function $\Omega_{m_0} \equiv 1$ on $C_{u_0}\cup\Cb_{\delta}$ and two shears $\chih_{m_0} \equiv 0$ on $C_{u_0}$ and $\chibh_{m_0} \equiv 0$ on $\Cb_\delta$.

We also impose the the condition $|u_0|>2m_0$. This condition guarantees that the future development of the data is in the domain of outer communication of the Schwarzschild space-time. We remark that the only nonzero null components for the new data on $C_{u_0}$ are $\tr\chi_{m_0}$, $\tr\chib_{m_0}$, $\omegab_{m_0}$ and $\rho_{m_0}$ with $\tr\chi_{m_0} >0$, and the only nonzero null components for the new data on $\Cb_{\delta}$ are $\tr\chi_{m_0}$, $\tr\chib_{m_0}$, $\omegab_{m_0}$ and $\rho_{m_0}$.

We now restate the estimates on the sphere $S_{\delta,u_0}$ derived Lemma \ref{geometryonS} as well as the estimates given in Chapter 2 of \cite{Chr}: we have $\chih\equiv0$ $\alpha\equiv0$, $\omega\equiv0$, and for all $k$,
\begin{align}\label{geometryonS1}
|\nablas^k(\chibh,\eta,\etab,\omegab-\omegab_{m_0},
\tr\chi-\tr\chi_{m_0},\tr\chib-\tr\chib_{m_0},\beta,\rho-\rho_{m_0},\sigma,\betab,\alphab)|\lesssim_k\delta^{\frac{1}{2}}.
\end{align}
Notice that we have dropped the weights $|u_0|$. We will not use the weights in sequel since we can fix a $|u_0| > 2m_0$ now and they are not relevant from now on. We shall prove that the above estimates actually hold on $C_{u_0}$:
\begin{lemma}\label{geometry on Cu0 delta to delta plus 1}On the incoming cone $C_{u_0}^{[\delta, \delta+1]}$, we have $\chih\equiv0$ $\alpha\equiv0$, $\omega\equiv0$, and for all $k$, we have
\begin{equation*}
|\nablas^k(\chibh,\eta,\etab,\omegab-\omegab_{m_0},
\tr\chi-\tr\chi_{m_0},\tr\chib-\tr\chib_{m_0},\beta,\rho-\rho_{m_0},\sigma,\betab,\alphab)|\lesssim_k\delta^{\frac{1}{2}}.
\end{equation*}
\end{lemma}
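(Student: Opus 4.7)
The plan is to integrate a hierarchy of transport equations in the $L$-direction along $C_{u_0}^{[\delta,\delta+1]}$, starting from the initial sphere $S_{\delta,u_0}$, where the estimates \eqref{geometryonS1} provide data that is $\delta^{1/2}$-close to the Schwarzschild data. The length of the $\ub$-integration is bounded by $1$, so no weights in $\delta$ can be lost; the argument is essentially a Gronwall-type comparison with the Schwarzschild solution on the same cone.

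First I would record the trivial cases: since by construction $\chih\equiv 0$ on $C_{u_0}^{[\delta,\delta+1]}$, equation \eqref{NSE_Dh_chih} forces $\alpha\equiv 0$, and since $\Omega\equiv 1$ on $C_{u_0}$ we have $\omega=D\log\Omega\equiv 0$, as well as $\eta+\etab=2\ds\log\Omega=0$, i.e.\ $\etab=-\eta$. After setting $\Omega=1$ and $\chih=0$ in the null structure and Bianchi equations, one obtains a closed system of $D$-transport equations for the remaining quantities; the key observation is that this system is \emph{triangular} and each equation reduces to a linear ODE in $\ub$ driven by quantities already controlled at the previous stage:
\begin{enumerate}
\item $D\tr\chi=-\tfrac12(\tr\chi)^2$ is a scalar ODE; subtracting the Schwarzschild equation gives $D(\tr\chi-\tr\chi_{m_0})=-\tfrac12(\tr\chi+\tr\chi_{m_0})(\tr\chi-\tr\chi_{m_0})$.
\item $D\beta=-\tfrac32\tr\chi\,\beta$ from \eqref{NBE_D_beta} (all source terms vanish); $\beta$ stays $O(\delta^{1/2})$ by Gronwall.
\item $D\eta=-\tfrac12\tr\chi\,\eta-\beta$ from \eqref{NSE_D_eta} (using $\etab=-\eta$, $\chih=0$), driven by $\beta$.
\item $D(\rho-\rho_{m_0})$ and $D\sigma$ from \eqref{NBE_D_rho}, \eqref{NBE_D_sigma}, driven by $\divs\beta$, $\eta$, and $\tr\chi-\tr\chi_{m_0}$.
\item $\Dh\chibh$, $D\tr\chib$, $D\omegab$ from \eqref{NSE_D_chibh}, \eqref{NSE_D_trchib}, \eqref{NSE_D_omegab}, each driven by $\eta$, $\rho$, and the already-controlled differences.
\item $D\betab$ from \eqref{NBE_D_betab}, driven by $\ds\rho$, $\ds\sigma$, $\chibh$, $\etab$.
\item $\Dh\alphab$ from \eqref{NBE_D_alphab}, driven by $\nablas\tensor\betab$, $\chibh$, $\rho$, $\sigma$.
\end{enumerate}
At each step I integrate from $\ub=\delta$, using the initial estimates from \eqref{geometryonS1} and absorbing any linear terms in the unknown on the left-hand side by Gronwall's inequality; since $\ub-\delta\le 1$ and the coefficients are uniformly bounded on $C_{u_0}$, every inequality produces a final bound of the form $|\cdot|\lesssim\delta^{1/2}$.

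To obtain the higher-order estimates I would proceed by induction on $k$: commuting $\nablas^k$ with each equation in the hierarchy, all commutator terms $[\nablas^k,D]$ and $[\nablas^k,\Dh]$ produce only lower-order angular derivatives of the connection coefficients on $C_{u_0}$, which are already bounded by \eqref{geometryonS1} for $k=0$ and by the induction hypothesis for higher $k$. Thus at each order the structure is identical to the $k=0$ case, and the bootstrap/comparison argument closes uniformly in $k$.

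The only mild subtlety, and the point where I would be most careful, is the coupling between $\tr\chib,\chibh,\omegab,\betab,\alphab$ and the other quantities through the Bianchi equation for $\betab$ and $\alphab$: because $\betab$ appears with a derivative in the $\alphab$-equation, I need to confirm that the $(k+1)$-th angular derivatives that would appear as sources do not cause a loss, which is handled simply by running the induction on $\alphab$ \emph{after} that on $\betab$ at the same order, exactly as in Lemma \ref{geometryonCb}. Apart from this bookkeeping, the proof is considerably easier than that of Lemma \ref{geometryonCb} because there are no $|u|$-weights and the $\ub$-interval has fixed finite length.
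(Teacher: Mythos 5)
Your proposal is correct, but it takes a genuinely different route through the equations than the paper does. The paper avoids integrating most of the $L$-direction Bianchi equations: it determines $\beta$ and $\betab$ algebraically on each sphere from the Codazzi equations \eqref{NSE_div_chih}--\eqref{NSE_div_chibh} (with $\chih=0$, $\chih'=0$), determines $\rho$ from the Gauss equation \eqref{NSE_Gauss} after a separate transport estimate for the Gauss curvature $K$, and implicitly determines $\sigma$ from $\curls\eta$. It also eliminates $\beta$ from the $D\eta$/Codazzi pair and $\rho$ from the $D\tr\chib$/Gauss pair, obtaining self-contained transport equations; the only Bianchi $D$-equation it actually integrates is the one for $\alphab$. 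Your plan instead integrates the Bianchi transport equations directly for $\beta$, $\rho$, $\sigma$, $\betab$, $\alphab$ (together with the null structure equations for the connection coefficients). This works, and exploits a nice observation that the paper does not use: with $\Omega\equiv1$, $\omega\equiv0$, $\chih\equiv0$, $\alpha\equiv0$, the source in \eqref{NBE_D_beta} vanishes identically, so $\beta$ is controlled at every order with essentially no work. The price is the derivative lag you correctly flag: each step of the Bianchi cascade $\beta\to(\rho,\sigma)\to\betab\to\alphab$ consumes one angular derivative, so proving the order-$k$ bound for $\alphab$ requires the order-$(k+3)$ bound for $\beta$; this is harmless because the initial bounds on $S_{\delta,u_0}$ hold for all $k$, but the induction must be nested (finish each component at all orders before moving down the hierarchy), not a simultaneous induction on $k$ across all components. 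The paper's elliptic route keeps the derivative bookkeeping more local (though it, too, loses derivatives via the $K$-equation and the Codazzi relations) and matches the style used elsewhere in Christodoulou's framework; your transport route is more elementary and arguably more systematic, and even fills in the $\sigma$ estimate that the paper's proof leaves implicit.
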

\begin{proof}By construction, we also have $\chih=0$, $\alpha=0$ and $\omega=0$ on $C_{u_0}$.  The idea to prove the estimate is to use null structure equations in a correct order.

We first control $\tr\chi$. According to \eqref{geometryonS1} and choosing $\delta$ sufficiently small, we have $\tr\chi(\delta,u_0)>0$ so that $\tr\chi$ can be solved for $\ub\in[\delta,\delta+1]$ and $0<\tr\chi(\ub,u_0)<\tr\chi(\delta,u_0)$. We then rewrite \eqref{NSE_D_trchi} as
\begin{align}\label{D_trchi-trchim0}
D(\tr\chi-\tr\chi_{m_0})=-\frac{1}{2}(\tr\chi+\tr\chi_{m_0})(\tr\chi-\tr\chi_{m_0}).
\end{align}
Because $(\tr\chi+\tr\chi_{m_0})$ is bounded pointwisely, thanks to Gronwall's inequality and \eqref{geometryonS1}, we have
\begin{align*}
|\tr\chi-\tr\chi_{m_0}|\lesssim |\tr\chi(\delta,u_0)-(\frac{2}{|u_0|}-\frac{4}{|u_0|^2})|\lesssim\delta^{\frac{1}{2}}.
\end{align*}
We can also apply $\nablas^k$ to the above argument and use an induction argument as before to derive
\begin{align*}
|\nablas^k(\tr\chi-\tr\chi_{m_0})|\lesssim |\nablas^k(\tr\chi(\delta,u_0)-(\frac{2}{|u_0|}-\frac{4}{|u_0|^2}))|\lesssim_k\delta^{\frac{1}{2}}.
\end{align*}

We now bound $\eta=-\etab$. We eliminate $\beta$ in (\ref{NSE_D_eta}) and (\ref{NSE_div_chih}) to derive
\begin{align*}
D\eta+\tr\chi\eta=-\frac{1}{2}\ds\tr\chi.
\end{align*}
Therefore, by Gronwall's inequality and an induction argument, we obtain
\begin{align*}
|\nablas^k(\eta,\etab)(\ub,u_0)|\lesssim_k\delta^{\frac{1}{2}}.
\end{align*}

Before going to the next component, we have to digress to the estimates for the Gauss curvature $K$ on $S_{\ub,u_0}$. Recall that $K$ satisfies the following propagation equations (see Chapter 5 of \cite{Chr}):
\begin{align*}
DK+\Omega\tr\chi K=\divs\divs(\Omega\chih)-\frac{1}{2}\Deltas(\Omega\tr\chi),
\end{align*}
We rewrite the equation on $C_{u_0}$ as
\begin{align*}
D(K-K_{m_0})+\tr\chi(K-K_{m_0})=-(\tr\chi-\tr\chi_{m_0})K_{m_0}-\frac{1}{2}\Deltas\tr\chi,
\end{align*}
which yields $|\nablas^k(K(\ub,u_0)-K_{m_0}(\ub,u_0))|\lesssim_k\delta^{\frac{1}{2}}$.

To bound $\chib$ and $\chibh$, we eliminate $\rho$ in (\ref{NSE_D_trchib}) and (\ref{NSE_Gauss}) to derive the following propagation equation for $\tr\chib-\tr\chib_{m_0}$:
\begin{align*}
D(\tr\chib-\tr\chib_{m_0})+\tr\chi(\tr\chib-\tr\chib_{m_0})
=2(K_{m_0}-K)-\divs\eta+|\eta|^2,
\end{align*}
and we rewrite \eqref{NSE_D_chibh} as
\begin{align*}
D\chibh-\frac{1}{2}\tr\chi\chibh=-\nablas\tensor \eta+\eta\tensor\eta.
\end{align*}
These equations yield $|\nablas^k(\tr\chib(\ub,u_0)-\tr\chib_{m_0}(\ub,u_0))| + |\nablas^k\chibh(\ub,u_0)|\lesssim_k\delta^{\frac{1}{2}}$.

We then use (\ref{NSE_div_chih}), (\ref{NSE_div_chibh}) and the estimates above to derive $|\nablas^k(\beta,\betab)(\ub,u_0)|\lesssim_k\delta^{\frac{1}{2}}$.

We rewrite (\ref{NSE_Gauss}) as
\begin{align*}
K-K_{m_0}=-\frac{1}{4}(\tr\chi\tr\chib-\tr\chi_{m_0}\tr\chib_{m_0})-(\rho-\rho_{m_0}),
\end{align*}
which implies that $|\nablas^k(\rho(\ub,u_0)-\rho_{m_0}(\ub,u_0))|\lesssim_k\delta^{\frac{1}{2}}$.

We rewrite (\ref{NSE_D_omegab}) as
\begin{align*}
D(\omegab-\omegab_{m_0})=-3|\eta|^2-(\rho-\rho_{m_0}),
\end{align*}
which implies $|\nablas^k(\omegab(\ub,u_0)-\omegab_{m_0}(\ub,u_0))|\lesssim_k\delta^{\frac{1}{2}}$.

Finally, (\ref{NBE_D_alphab}) implies $|\nablas^k\alphab(\ub,u_0)|\lesssim_k\delta^{\frac{1}{2}}$. We complete the proof of the lemma.
\end{proof}

\subsection{Construction of the Transition Region.}\label{Section transition slice}

We consider a characteristic initial data problem on $C_{u_0}^{[\delta,\delta+1]}\cup \Cb_{\delta}^{[u_0,-1-\delta]}$ where the data are given by with $\chih=0$ on $C_{u_0}^{[\delta,\delta+1]}$ and the data on $\Cb_{\delta}^{[u_0,-1-\delta]}$ is induced by the solution constructed by Christodoulou in \cite{Chr} under the condition \eqref{integral=m0} with $|u_0|>2m_0$. We continue to use the space-time metric $g$ to denote the solution of this problem.

Adapted to this problem, we introduce the following norms:
\begin{align*}
\mathscr{R}_k^{\ub}(u)&=\|\nablas^k(\alpha,\beta,\rho-\rho_{m_0},\sigma,\betab)\|_{L^2(C_u^{\ub})}, \,\,\, \underline{\mathscr{R}}_k^u(\ub)=\|\nablas^k(\beta,\rho-\rho_{m_0},\sigma,\betab,\alphab)\|_{L^2(\Cb_{\ub}^u)},\\
\mathscr{O}_k(\ub,u)&=\|\nablas^k(\chih,\chibh,\eta,\etab,\tr\chi-\tr\chi_{m_0},\tr\chib-\tr\chib_{m_0},\omega-\omega_{m_0},\omegab-\omegab_{m_0})\|_{L^2(S_{\ub,u})}.
\end{align*}
We use short hand notations $C_u^{\ub}$ referring to $C_u^{[\delta,\ub]}$ and $\Cb_{\ub}^u$ referring to $\Cb_{\ub}^{[u_0,u]}$. We take a small parameter $\epsilon$ to be proportional to $\delta^\frac{1}{2}$, then the results in Section \ref{Section Geometry on intersecting sphere}, Section \ref{Section Geometry on incoming cone} and Section \ref{Section Geometry on outgoing cone} can be summarized in the following proposition:
\begin{proposition}\label{initialb}
Fix $k \gg 5$. For any $\epsilon>0$, there exists $\delta_0>0$ depending only on the $C^{k+N}$ bounds of the seed data for some sufficient large $N$ and on $\epsilon$, such that for all $\delta<\delta_0$ and for all $(\ub,u) \in [\delta, \delta+1] \times [u_0,-1-\delta]$, we have,
\begin{align*}
\mathscr{R}_{\le k-2}^{\delta+1}(u_0)+\underline{\mathscr{R}}_{\le k-2}^{-1-\delta}(\delta)+\mathscr{O}_{\le k-2}(\ub,u_0)+\mathscr{O}_{\le k-2}(\delta,u)\leq\epsilon,\\
\|\nablas^{\le k-4}\alpha\|_{L^4(S_{\delta,u})}+ \|\nablas^{\le k}\alphab\|_{L^\infty(S_{\ub,u_0})}\le\epsilon.
\end{align*}
\end{proposition}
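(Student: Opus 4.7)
The plan is to assemble the pointwise and $L^p$-bounds already furnished by Lemmas \ref{geometryonS}, \ref{geometryonCb} and \ref{geometry on Cu0 delta to delta plus 1} into the integrated norms $\mathscr{R}, \underline{\mathscr{R}}, \mathscr{O}$. Because $|u_0|$ is a fixed number (with $|u_0| > 2m_0$), all $|u|$-weights on the two relevant characteristic cones are uniformly bounded above and below, and the two cones have bounded advanced-time and retarded-time extent, so no weight chasing is needed. I would fix $k$ above a threshold (say $k > N+5$ with $N$ as in Theorem \ref{Higherorder}) and shrink $\delta_0$ so that Theorem \ref{Higherorder} applies at that $k$; then the entire claim reduces to the observation that every relevant bound already proved is $\lesssim_k \delta^{1/2}$, so a further shrinking of $\delta_0$ yields $\le \epsilon$.

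First I would handle the sphere term $\mathscr{O}_{\le k-2}(\delta,u_0)$. This is immediate from Lemma \ref{geometryonS} (for the components $\tr\chi - \tr\chi_{m_0}$, $\beta$, $\rho + 2m_0/|u_0|^3$, $\sigma$), from $\chih \equiv \alpha \equiv \omega \equiv 0$ on $C_{u_0}$, and from the bounds $|\nablas^j(\chibh,\eta,\etab,\omegab-\omegab_{m_0},\tr\chib-\tr\chib_{m_0},\betab,\alphab)|\lesssim_j \delta^{1/2}$ already recorded in \eqref{geometryonS1} (and in Chapter~2 of \cite{Chr}).

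Next I would address the outgoing side $\mathscr{R}_{\le k-2}^{\delta+1}(u_0)$, $\mathscr{O}_{\le k-2}(\ub,u_0)$, and the $L^\infty$ bound on $\alphab$. Lemma \ref{geometry on Cu0 delta to delta plus 1} furnishes pointwise control $|\nablas^j(\cdots)| \lesssim_j \delta^{1/2}$ for every relevant component on $C_{u_0}^{[\delta,\delta+1]}$; integrating over a cone of bounded extent promotes this to the required $L^2(S_{\ub,u_0})$ and $L^2(C_{u_0}^{\delta+1})$ bounds, while the $L^\infty$ bound on $\alphab$ is literally a special case of that lemma.

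Finally, the incoming-side estimates $\underline{\mathscr{R}}_{\le k-2}^{-1-\delta}(\delta)$, $\mathscr{O}_{\le k-2}(\delta,u)$, and the $L^4$ bound on $\alpha$ on $S_{\delta,u}$ all descend from Lemma \ref{geometryonCb}. The only subtlety is bookkeeping of the derivative count: Lemma \ref{geometryonCb} controls up to $k-1$ derivatives in $L^2(S_{\delta,u})$, up to $k-2$ in $L^4(S_{\delta,u})$ and up to $k-3$ in $L^\infty(S_{\delta,u})$, with a further loss of one derivative for $\beta$ and two for $\alpha$ coming from Bianchi. Taking the base $k$ in Theorem \ref{Higherorder} large enough leaves comfortable slack for the $k-2$ derivatives demanded in the $\mathscr{O}$ and $\underline{\mathscr{R}}$ norms and for the $k-4$ derivatives demanded for $\|\nablas^{\le k-4}\alpha\|_{L^4(S_{\delta,u})}$; integrating the sphere bounds over $u \in [u_0,-1-\delta]$ yields the cone norms. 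I expect no real obstacle here: the hard analytic work is already contained in the three lemmas (with Theorem \ref{Higherorder} supplying the required regularity), and this proposition is a packaging step whose only content is aligning derivative counts and absorbing bounded $|u|$-weights.
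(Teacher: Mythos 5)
Your proposal is correct and matches the paper's approach: the paper offers no separate proof, stating only that the proposition ``summarizes'' Lemmas~\ref{geometryonS}, \ref{geometryonCb}, and \ref{geometry on Cu0 delta to delta plus 1}, and your derivative-count and weight bookkeeping is precisely the argument the authors intend. The one detail worth making explicit is that on $\Cb_\delta$, Lemma~\ref{geometryonCb} only treats the components ($\chih$, $\tr\chi$, $\omega$, $\alpha$, $\beta$, $\rho$, $\sigma$) that need the extra condition \eqref{integral=m0 different} to gain smallness; the remaining entries of $\mathscr{O}(\delta,u)$ and $\underline{\mathscr{R}}^{-1-\delta}(\delta)$ (namely $\chibh$, $\eta$, $\etab$, $\tr\chib-\tr\chib_{m_0}$, $\omegab-\omegab_{m_0}$, $\betab$, $\alphab$) are already bounded by a positive power of $\delta$ via Theorem~\ref{Higherorder} and Christodoulou's weighted norms, together with the observation that the Schwarzschild reference values differ from Christodoulou's Minkowskian reference values by $O(\delta)$.
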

Based on this proposition, we will prove the following theorem in this section.
\begin{theorem}\label{Schwarzschild}
Let $k \gg 5$ and $\epsilon>0$.  If $\delta > 0$ is sufficiently small depending on $C^{k+N}$ bounds of the seed data for sufficient large $N$ and on $\epsilon>0$, then there is a unique smooth solution $(M_{\epsilon_0},g)$ of vacuum Einstein field equations to the characteristic initial data problem described above. The space-time $M_{\epsilon_0}$ corresponds to the region $\delta\le\ub\le\delta+\epsilon_0$ and $u_0\le u\le u_0+\epsilon_0$. The parameter  $\epsilon_0>0$ depends only on $m_0$ and $u_0$ and it is independent of $\delta$ when $\delta$ is sufficiently small. Moreover, the space-time  $(M_{\epsilon_0},g)$ is $\epsilon$-close to the Schwarzschild space-time with mass $m_0$ in $C^{k-3}$ norms, namely,
 \begin{equation*}
 \|g-g_{m_0}\|_{C^{k-3}(M_{\epsilon_0},g_{m_0})}\lesssim\epsilon.
 \end{equation*}
\end{theorem}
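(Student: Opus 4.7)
The plan is to treat $(M_{\epsilon_0},g)$ as a perturbation of the Schwarzschild space-time $(M_{\epsilon_0},g_{m_0})$ equipped with the same double null foliation. Writing $\delta\Gamma := \Gamma - \Gamma_{m_0}$ and $\delta R := R - R_{m_0}$ for every connection coefficient and curvature component (many of the Schwarzschild values $\Gamma_{m_0},R_{m_0}$ vanishing by spherical symmetry), Proposition \ref{initialb} asserts that these differences are of size $\epsilon$ on the two initial null hypersurfaces in the norms $\mathscr{R}_{\le k-2}$, $\underline{\mathscr{R}}_{\le k-2}$, $\mathscr{O}_{\le k-2}$. The goal is to propagate this smallness into a slab whose size $\epsilon_0$ depends only on $m_0$ and $u_0$, not on $\delta$. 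First, a characteristic local existence result of Rendall/Luk type produces a smooth vacuum solution in some neighborhood of $C_{u_0}^{[\delta,\delta+1]}\cup\Cb_\delta^{[u_0,-1-\delta]}$, a priori of size that could depend on $\delta$.

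On the part of that slab with $\ub-\delta,u-u_0\le\epsilon_0$ I impose the bootstrap assumption
\begin{equation*}
\sup_u\mathscr{R}_{\le k-2}^{\ub}(u)+\sup_{\ub}\underline{\mathscr{R}}_{\le k-2}^{u}(\ub)+\sup_{\ub,u}\mathscr{O}_{\le k-2}(\ub,u)\le A\epsilon,
\end{equation*}
with a constant $A$ to be fixed in terms of $m_0,u_0$ alone. Subtracting the Schwarzschild versions of the null structure equations \eqref{NSE_Dh_chih}--\eqref{NSE_Db_trchi} yields schematic transport equations $D(\delta\Gamma)=\Gamma_{m_0}\cdot\delta\Gamma+\delta R+\delta\Gamma\cdot\delta\Gamma$ (and analogously for $\Db(\delta\Gamma)$), while the Bianchi equations \eqref{NBE_Db_alpha}--\eqref{NBE_Db_sigma} give coupled hyperbolic equations for $\delta R$ whose principal part is identical to that of the nonlinear system and whose remainder is quadratic in $(\delta\Gamma,\delta R)$, hence bounded by $(A\epsilon)^2$.

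I then perform energy estimates for $\delta R$ commuted with $\nablas^j$ and rotation vector fields for $j\le k-2$, following the integration-by-parts scheme of Section \ref{Higher Order Energy Estimates}. Because $g_{m_0}$ solves Einstein and Bianchi identically, every Schwarzschild background contribution cancels, so the error integrals over $M_{\epsilon_0}$ carry an extra factor of either $\epsilon_0$ from the slab width or $A\epsilon$ from the perturbation size. Integrating the structure equations for $\delta\Gamma$ similarly controls $\mathscr{O}_{\le k-2}$ in terms of initial data plus $\epsilon_0\cdot\mathscr{R}_{\le k-2}$ plus $(A\epsilon)^2$. Combining these one obtains an inequality of the form
\begin{equation*}
\mathscr{R}_{\le k-2}^{\ub}(u)+\underline{\mathscr{R}}_{\le k-2}^{u}(\ub)+\mathscr{O}_{\le k-2}(\ub,u)\le C(m_0,u_0)\bigl(\epsilon+\sqrt{\epsilon_0}\,A\epsilon+A^2\epsilon^2\bigr).
\end{equation*}
Choosing $A=4C(m_0,u_0)$, then $\epsilon_0$ small depending only on $m_0,u_0$, and finally $\epsilon$ so small that $A^2\epsilon\le1$, improves the bootstrap by a factor of $\tfrac12$; a continuity argument extends the solution to the full slab $M_{\epsilon_0}$. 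Sobolev embedding on each $S_{\ub,u}$ together with transport in the null directions then converts the $L^2$ difference bounds to $\|g-g_{m_0}\|_{C^{k-3}(M_{\epsilon_0},g_{m_0})}\lesssim\epsilon$.

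The main obstacle is that the standard characteristic existence time depends on high Sobolev norms of the data, which are $\delta$-dependent; in particular the short-pulse shear on $\Cb_\delta$ propagated from Christodoulou's region carries only the scalings of Theorem \ref{Higherorder}, which involve $\delta$-independent constants but large absolute size in certain components. The perturbative framework above circumvents this: all energy estimates are performed on the differences $\delta\Gamma,\delta R$, whose initial sizes are the $\epsilon$ of Proposition \ref{initialb} (a parameter chosen independently of $\delta$), and all coefficients entering the estimates are those of the Schwarzschild background, which is smooth, spherically symmetric, and uniformly bounded on the chosen region $|u|>2m_0$. Hence once $A$ and $\epsilon_0$ are fixed by $m_0,u_0$ alone, the bootstrap closes uniformly in all sufficiently small $\delta$, and the uniform size $\epsilon_0$ claimed in the theorem is attained.
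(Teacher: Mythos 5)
Your proposal is in the right spirit and would ultimately succeed, but it takes a different and somewhat longer route than the paper at the key point: how to obtain a $\delta$-independent slab size $\epsilon_0$. You treat the Rendall/Luk existence time as ``a priori of size that could depend on $\delta$'' and then rely on a bootstrap argument plus a continuity extension to reach the uniform slab. The paper instead observes that the quantities controlling Luk's existence time --- the norms $\mathscr{R}^{\delta+1}(u_0)$, $\underline{\mathscr{R}}^{-1-\delta}(\delta)$, $\mathscr{O}(\ub,u_0)$, $\mathscr{O}(\delta,u)$ on the new initial hypersurfaces --- are bounded by a constant $C(m_0,u_0,\epsilon_0)$ that is already $\delta$-independent (Proposition \ref{bound_luk}), so Luk's theorem \emph{directly} yields existence on a slab of fixed size $\epsilon_0$, with bounds, and no bootstrap is needed for existence. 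The paper then runs a \emph{linear} estimate scheme on the same slab: Lemma \ref{connectionb} controls $\mathscr{O}$ in terms of data and curvature $L^2$ norms, and the renormalized Bianchi identities (subtracting the $\rho_{m_0}$, $\tr\chi_{m_0}$ background transport equations) are estimated by Gronwall to give Proposition \ref{smallness}, yielding $\mathscr{R},\underline{\mathscr{R}},\mathscr{O}\lesssim\epsilon$ with constants $C(m_0,u_0,\epsilon_0)$ --- again no bootstrap. Your version, which instead posits an $A\epsilon$-bootstrap and closes it by first choosing $A$, then $\epsilon_0$ small, then $\epsilon$ small, arrives at the same conclusion but at the cost of an extra (and, given Luk's a priori bounds, unnecessary) smallness condition on $\epsilon_0$; the paper keeps $\epsilon_0$ fixed by the local existence theory alone and lets Gronwall absorb the $e^{C\epsilon_0}$ factor. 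Your high-level remarks on passing from $L^2$ to $C^{k-3}$ are correct but gloss over the delicate terms $D(\omega-\omega_{m_0})$, $\Db(\omegab-\omegab_{m_0})$, $\nablas^2(\Omega\chib)$ for which the paper needs extra transversal-derivative energy estimates and variants of \eqref{SobolevC_L2_L4}, \eqref{SobolevCb_L2_L4}; these should be spelled out if you pursue your version.
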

\begin{remark}
We would like to emphasize that once $m_0$ and $u_0$ is fixed, the size of $\epsilon_0$ is also fixed. In the introduction, we mentioned that in order to use the Corvino-Schoen construction, the gluing region must have a fixed size when $\delta$ goes to zero. The size of $\epsilon_0$ meets this requirement.
\end{remark}
\begin{remark}It is not hard to see that $g_{m_0}$ is isometric to the Schwarzschild space-time with mass $m_0$. In fact, the initial data for $g_{m_0}$ on $C_{u_0}^{[\delta,\delta+\epsilon_0]}\cup\Cb_{\delta}$ is spherical symmetric, then $g_{m_0}$ is also spherical symmetric. The orbits of the isometry group $SO(3)$ are simply $S_{\ub,u}$. Therefore by Birkhoff Theorem, $g_{m_0}$ is isometric to the Schwarzschild space-time. The mass parameter $m_0$ can be figured out by computing the Hawking mass of $S_{\delta,u_0}$ in $g_{m_0}$.
\end{remark}

\begin{proof} The existence of local solutions is due to the classical result \cite{Ren} of Rendall. We make use of a recent improvement \cite{Luk} due to Luk. Rendall's original proof only gave the local existence around the intersection of two initial null hypersurface and Luk showed that the existence region can be enlarged to a full neighborhood of initial hypersurfaces in their future development. The main advantage of \cite{Luk} is that it also gives the estimates on the solution which depend only on the initial data. We now rephrase the results of \cite{Luk} in our situation: there exists $\epsilon_0$ which depends only on $m_0$ and $u_0$ and is independent of $\delta$ hence $\epsilon$, such that we can solve the metric $g$ for $(\ub,u) \in [\delta, \delta+\epsilon_0] \times [u_0,u_0+\epsilon_0]$. Moreover, we have the following bounds:
\begin{proposition}\label{bound_luk}
We have a smooth solution $(M_{\epsilon_0},g)$ where the $M_{\epsilon_0}$ corresponds to  $(\ub,u) \in [\delta, \delta+1] \times [u_0,-1-\delta]$. Moreover, we have the following estimates:
\begin{align*}
\mathscr{R}^{\delta+\epsilon_0}(u)+\underline{\mathscr{R}}^{u_0+\epsilon_0}(\ub)+\mathscr{O}(\ub,u_0)+\mathscr{O}(\delta,u)\le C(m_0,u_0,\epsilon_0).
\end{align*}
We also have the Sobolev inequalities:
\begin{align*}
\|\phi\|_{L^4(S_{\ub,u})}&\le C(m_0,u_0,\epsilon_0)(\|\nablas\phi\|_{L^2(S_{\ub,u})}+\|\phi\|_{L^2(S_{\ub,u})}),\\
\|\phi\|_{L^\infty(S_{\ub,u})}&\le C(m_0,u_0,\epsilon_0)\sum_{i=0}^2\|\nablas^i\phi\|_{L^2(S_{\ub,u})}.
\end{align*}
\end{proposition}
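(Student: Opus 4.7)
The plan is to apply Luk's local existence theorem \cite{Luk}, which was designed precisely to produce a fixed-size neighborhood of two transversally intersecting characteristic hypersurfaces in the future development, with estimates depending only on the initial data. The input it demands is exactly what Proposition \ref{initialb} provides: uniform $L^2$-type bounds on curvature over the two initial cones together with $L^2$ bounds on connection coefficients and metric components on the intersecting sphere $S_{\delta,u_0}$. My first task is therefore to check that the seven quantities defining $\mathscr{R},\underline{\mathscr{R}},\mathscr{O}$ (in the renormalized form $\tr\chi-\tr\chi_{m_0}$, $\rho-\rho_{m_0}$, etc.) play the role of Luk's geometric quantities once the Schwarzschild background is subtracted; the subtraction only contributes smooth, bounded source terms to the structure and Bianchi equations and so does not affect the applicability of Luk's scheme.

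Second, I would set up a bootstrap on $D_{\ub^*,u^*}=[\delta,\ub^*]\times[u_0,u^*]$ with $(\ub^*,u^*)\in[\delta,\delta+\epsilon_0]\times[u_0,u_0+\epsilon_0]$, assuming
\begin{equation*}
\sup_u\mathscr{R}^{\ub^*}(u)+\sup_{\ub}\underline{\mathscr{R}}^{u^*}(\ub)+\sup_{\ub,u}\mathscr{O}(\ub,u)\le 2C_0,
\end{equation*}
with $C_0=C_0(m_0,u_0)$ chosen larger than the initial data bound in Proposition \ref{initialb}. The energy inequalities of Section \ref{Higher Order Energy Estimates}, specialized here with unit weights in $\delta$ and $|u|$, when applied to the paired Bianchi systems $(\alpha,\beta)$, $(\beta,\rho-\rho_{m_0},\sigma)$, $(\rho-\rho_{m_0},\sigma,\betab)$, $(\betab,\alphab)$, yield an estimate of the form
\begin{equation*}
\mathscr{R}^{\ub^*}(u)^2+\underline{\mathscr{R}}^{u^*}(\ub)^2\le C_0^2+C(C_0)\,\epsilon_0,
\end{equation*}
where all error integrals absorb a factor of $|u^*-u_0|+|\ub^*-\delta|\le 2\epsilon_0$. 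Transport equations along $L$ and $\Lb$ for the connection coefficients, combined with Hodge-elliptic estimates on $S_{\ub,u}$ exactly as in Sections 3.1--3.3, similarly give $\mathscr{O}(\ub,u)^2\le C_0^2+C(C_0)\epsilon_0$. Choosing $\epsilon_0=\epsilon_0(m_0,u_0,C_0)$ sufficiently small, the bootstrap is closed with constant $\frac{3}{2}C_0$, producing the asserted estimate with $C(m_0,u_0,\epsilon_0)=2C_0$. Local existence on this fixed region then follows by the standard continuation/continuity argument from \cite{Ren,Luk}.

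Third, the Sobolev inequalities on $S_{\ub,u}$ follow from the uniform equivalence of the induced metric $\gs(\ub,u)$ with the round metric $\gs_{m_0}(\delta,u_0)$ of radius $|u_0|$. Since $\tr\chi-\tr\chi_{m_0}$, $\chih$, $\tr\chib-\tr\chib_{m_0}$, $\chibh$ are bounded in $L^\infty$ on $D_{\ub^*,u^*}$ (via Sobolev embedding from the previous step, shrinking $\epsilon_0$ once more if necessary), the pullback of $\gs(\ub,u)$ under the flows of $L$ and $\Lb$ stays within a fixed multiplicative factor of the round metric; the isoperimetric constant is therefore uniformly controlled, and the Gagliardo-Nirenberg-Sobolev inequalities on $(S,\gs)$ are uniform with constant depending only on $m_0,u_0,\epsilon_0$.

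The principal obstacle is the $\delta$-independence of $\epsilon_0$. This is what makes Proposition \ref{bound_luk} useful for Step 3 of the overall argument, and it is precisely the reason the renormalization against the Schwarzschild reference $g_{m_0}$ is introduced: without it, the initial data on $\Cb_\delta$ inherited from \cite{Chr} would carry $\delta$-negative weights, and no bootstrap argument on a fixed region could tolerate them. With the renormalization in place, all initial norms are $O(1)$ uniformly in $\delta$ (indeed $O(\epsilon)$, by Proposition \ref{initialb}), so $\epsilon_0$ only has to satisfy $C(C_0)\epsilon_0\le\tfrac{1}{4}C_0^2$, a condition depending only on $m_0$ and $u_0$. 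This is where the careful choice \eqref{integral=m0 different} on the seed data pays off.
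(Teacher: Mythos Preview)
Your approach is essentially correct and matches the paper's: invoke Luk's local existence theorem \cite{Luk}. The paper, however, treats Proposition~\ref{bound_luk} as a direct black-box restatement of Luk's result and supplies no further argument---in fact it remarks explicitly that ``Luk's result \cite{Luk} is extremely convenient to use at this stage: we can avoid a long bootstrap argument and we can take the existence for granted.'' So your steps 2--3 (the bootstrap sketch and the Sobolev argument) are additional detail the paper deliberately omits; they are not wrong, but they are a sketch of Luk's proof rather than a proof of the proposition as stated.

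One conceptual point deserves correction. You attribute the $\delta$-independence of $\epsilon_0$ to the Schwarzschild renormalization, writing that without it ``the initial data on $\Cb_\delta$ inherited from \cite{Chr} would carry $\delta$-negative weights.'' That is not quite right. Luk's theorem requires only \emph{boundedness} of the (un-renormalized) curvature and connection norms on the two initial cones. On $\Cb_\delta$ and on $C_{u_0}^{[\delta,\delta+1]}$, Lemmas~\ref{geometryonCb} and~\ref{geometry on Cu0 delta to delta plus 1} already show these quantities are $\delta$-close to their Schwarzschild values, hence uniformly bounded by constants depending only on $m_0,u_0$; this is where condition~\eqref{integral=m0} pays off, and no renormalization is needed to feed the data into Luk. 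The Schwarzschild subtraction enters in the \emph{next} step, Proposition~\ref{smallness}, where one wants not mere boundedness but $\epsilon$-smallness of the solution relative to $g_{m_0}$. So the renormalization is essential to Theorem~\ref{Schwarzschild}, but not to this proposition.
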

For the sake of simplicity, we will suppress the lower index $\le3$ in the norms, e.g. $\mathscr{R}=\mathscr{R}_{\le3}$. In the remaining part of the current section, $C$ refers to various constants depending only on $m_0$, $u_0$ and $\epsilon_0$ and $A \lesssim B$ refers to $A\le CB$.

Since we expect the solution is close to some Schwarzschild space-time, we have to derive more precise energy estimates on $M_{\epsilon_0}$. Combined with Rendall's result \cite{Ren} and usual bootstrap arguments, our estimates will be good enough to prove existence. We remark that Luk's result \cite{Luk} is extremely convenient to use at this stage:  we can avoid a long bootstrap argument and we can take the existence for granted.

We start to prove Theorem \ref{Schwarzschild} by deriving the estimates only for $\mathscr{R}_k$, $\underline{\mathscr{R}}_k$ and $\mathscr{O}_k$ for $k=0,1,2,3$. In this case, we can take $K=5$.  The estimates for higher order derivatives can be obtained by a routine induction argument. Since it is similar to the higher order energy estimates derived in Section \ref{Higher Order Energy Estimates}, we will omit the proof.

We first prove the following lemma which claims that the connection coefficients can be controlled by the initial data and the curvature components.
\begin{lemma}\label{connectionb}
$\mathscr{O}(\ub,u)\lesssim\sup_{\delta\le\ub'\le\ub}(\mathscr{O}(\ub',u_0)+\underline{\mathscr{R}}^u(\ub'))+\sup_{u_0\le u'\le u}(\mathscr{O}(\delta,u')+\mathscr{R}^{\ub}(u')).$
\end{lemma}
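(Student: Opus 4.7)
The strategy is to integrate each null structure equation in the appropriate null direction, subtracting the Schwarzschild background where relevant, and closing estimates with Gronwall's inequality and the Sobolev inequalities from Proposition \ref{bound_luk}. The key dichotomy is as follows: a $D$-equation $D\phi = F$ integrated from $\ub' = \delta$ at fixed $u$ yields $\|\phi(\ub,u)\|_{L^2(S)}$ bounded, via Cauchy--Schwarz on the parameter interval of length $\lesssim\epsilon_0$, by $\|\phi(\delta,u)\|_{L^2(S)} + \|F\|_{L^2(C_u^\ub)}$; a $\Db$-equation integrated from $u' = u_0$ at fixed $\ub$ gives the analogue with $\|\phi(\ub,u_0)\|_{L^2(S)}$ and $\|F\|_{L^2(\Cb_\ub^u)}$. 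Each alternative matches one of the two suprema on the right-hand side of the lemma, and the freedom to choose allows us to place $\alpha$ into $\mathscr{R}^\ub$ and $\alphab$ into $\underline{\mathscr{R}}^u$ rather than the other way round.

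My plan is to estimate the eight components of $\mathscr{O}$ sequentially. First, $\chih$ is obtained from \eqref{NSE_Dh_chih} integrated from $\ub' = \delta$, whose RHS contains only $\alpha \in \mathscr{R}^\ub$; symmetrically, $\chibh$ from \eqref{NSE_Dbh_chibh} integrated from $u' = u_0$ gives $\alphab \in \underline{\mathscr{R}}^u$. Next, I form the difference equations from \eqref{NSE_D_trchi} and \eqref{NSE_Db_trchib}, so that $\tr\chi - \tr\chi_{m_0}$ and $\tr\chib - \tr\chib_{m_0}$ satisfy transport equations whose linear parts involve $(\tr\chi' + \tr\chi'_{m_0})(\tr\chi' - \tr\chi'_{m_0})$ and are handled by Gronwall, while $|\chih|^2$ and the $\Omega - 1$ corrections have already been controlled. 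Third, $\eta$ and $\etab$ are treated via \eqref{NSE_D_eta} and \eqref{NSE_Db_etab}; the couplings $\Omega\chi\cdot\etab$ and $\Omega\chib\cdot\eta$ are handled by a second Gronwall argument applied to the linear $2\times 2$ system for $(\|\nablas^{\le 3}\eta\|_{L^2(S)},\|\nablas^{\le 3}\etab\|_{L^2(S)})$, using $\|\chi\|_{L^\infty} + \|\chib\|_{L^\infty} \lesssim 1$ from Proposition \ref{bound_luk}. Finally, $\omegab - \omegab_{m_0}$ and $\omega - \omega_{m_0}$ follow from the differences of \eqref{NSE_D_omegab} and \eqref{NSE_Db_omega}, whose right-hand sides contain $\rho - \rho_{m_0}$ (in $\mathscr{R}^\ub$ or $\underline{\mathscr{R}}^u$) together with quadratic $\eta,\etab$ terms estimated in the previous step.

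At each stage I would commute $\nablas^{\le 3}$ through the transport equation. The commutators are schematically of the form $\nablas^j(\Omega\chi)\cdot\nablas^{3-j}\phi$ or its $\chib$ analogue; each such product is treated by placing one factor in $L^\infty(S)$ (via the Sobolev embedding in Proposition \ref{bound_luk}, which costs two derivatives) and the other in $L^2(S)$, the $L^2$-factor then being either absorbed by Gronwall or drawn from an earlier step. The sequential ordering is essential: the bound for $\tr\chi - \tr\chi_{m_0}$ uses $\chih$, the bounds for $\eta,\etab$ use $\chi,\chib$, and the bounds for $\omega,\omegab$ use $\eta,\etab$.

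The main obstacle I anticipate is the derivative book-keeping at $k = 3$: since the Sobolev embedding $H^2(S) \hookrightarrow L^\infty(S)$ consumes two derivatives, an $L^\infty$-factor in a commutator uses up two of the three available derivatives, leaving only one for the companion $L^2(S)$-factor. This budget turns out to be sufficient because in each product one of the two factors is the quantity being estimated and can be placed in $L^2(S)$, but there is no slack and the accounting must be verified for every one of the eight equations. Once this is done, taking the supremum over $\ub\in[\delta,\ub]$ and $u'\in[u_0,u]$ in the integrated estimates and combining the eight bounds produces the stated inequality.
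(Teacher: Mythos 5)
Your overall strategy matches the paper's: estimate the eight connection coefficients sequentially by integrating the appropriate $D$- or $\Db$-transport equation, subtract the Schwarzschild background, close with Gronwall, and use the dichotomy to keep $\alpha$ in $\mathscr{R}^{\ub}$ and $\alphab$ in $\underline{\mathscr{R}}^u$. The treatment of $\chih,\chibh$ via \eqref{NSE_Dh_chih}, \eqref{NSE_Dbh_chibh} and the coupled Gronwall for $(\eta,\etab)$ are exactly what the paper does.

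However, there is a genuine gap in your ordering. You place the estimate of $\tr\chi - \tr\chi_{m_0}$ (and $\tr\chib-\tr\chib_{m_0}$) at step~2, before $\eta,\etab,\omega,\omegab$, and assert that ``the $\Omega-1$ corrections have already been controlled.'' They have not. The difference form of \eqref{NSE_D_trchi} reads
\begin{align*}
D(\tr\chi'-\tr\chi'_{m_0}) = -\tfrac{1}{2}(\tr\chi+\tr\chi_{m_0})\bigl(\Omega(\tr\chi'-\tr\chi'_{m_0})+(\Omega-\Omega_{m_0})\tr\chi'_{m_0}\bigr)-|\chih|^2,
\end{align*}
so an inhomogeneous term proportional to $\Omega-\Omega_{m_0}$ appears and is not small by itself, since Proposition~\ref{bound_luk} only gives boundedness of $\Omega$, not proximity to $\Omega_{m_0}$. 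Controlling $\Omega-\Omega_{m_0}$ requires integrating $\Db\log\Omega=\omegab$, which in turn requires an estimate for $\omegab-\omegab_{m_0}$. This makes your step~2 depend on your step~4, a circular scheduling. The paper resolves this by estimating $\eta,\etab$ first, then running a cross-direction Gronwall that couples $\omegab-\omegab_{m_0}$ (integrated in $\ub$ from \eqref{NSE_D_omegab}) with $\Omega-\Omega_{m_0}$ (integrated in $u$ from $\Db\log\Omega=\omegab$), and only \emph{then} treating $\tr\chi'-\tr\chi'_{m_0}$, finally converting to $\tr\chi-\tr\chi_{m_0}$ via the now-available bound on $\Omega-\Omega_{m_0}$. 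The same omission affects your step~4: after differencing \eqref{NSE_D_omegab}, the right-hand side also contains $(\Omega^2-\Omega_{m_0}^2)\rho_{m_0}$, not merely ``$\rho-\rho_{m_0}$ plus quadratic $\eta,\etab$'' as you state, so the coupled Gronwall with $\Omega-\Omega_{m_0}$ is unavoidable there as well. Reordering to $\chih,\chibh \to \eta,\etab \to \omegab,\Omega-\Omega_{m_0},\omega \to \tr\chi,\tr\chib$, and incorporating the $\omegab$--$\Omega$ cross-direction Gronwall, repairs the argument.
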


\begin{proof} The proof is once again by integrating the null structure equations (\ref{NSE_Dh_chih})-(\ref{NSE_Db_omega}). At the beginning, we remark that for a tensor $\phi$, we have
\begin{align*}
C^{-1}\sum_{i=0}^3\|\Omega\phi\|_{L^2(S_{\ub,u})}\le\sum_{i=0}^3\|\phi\|_{L^2(S_{\ub,u})}\le C\sum_{i=0}^3\|\Omega\phi\|_{L^2(S_{\ub,u})}.
\end{align*}
The can be derived by the bounds for $\Omega$ and $\ds\log\Omega=(\eta+\etab)/2$. Similar inequalities also hold for $L^4(S_{\ub,u})$ and $L^\infty$ norms.

We start with the bounds on $\chih$. In view of (\ref{NSE_Dh_chih}), we have
\begin{equation*}
D|\chih'|^2+2\Omega\tr\chi|\chih'|^2\le2|\chih'||\alpha|.
\end{equation*}
We then use the $L^\infty$ bounds for $\Omega$ and $\tr\chi$ in Proposition \ref{bound_luk} as well as the Sobolev inequalities to derive
\begin{align*}
\|\chih'\|_{L^\infty(S_{\ub,u})}&\lesssim\|\chih'\|_{L^\infty(S_{\delta,u})}+\int_\delta^{\ub}\|\alpha\|_{L^\infty(S_{\ub',u})}\\
&\lesssim\sum_{i=0}^2\left(\|\nablas^i\chih'\|_{L^2(S_{\delta,u})}+\int_\delta^{\ub}\|\nablas^i\alpha\|_{L^2(S_{\ub',u})}\right) \lesssim\mathscr{O}(\delta,u)+\mathscr{R}^{\ub}(u).
\end{align*}
We now turn to the derivatives of $\chih$. We first rewrite (\ref{NSE_Dh_chih}) as $D\chih'=\Omega^2|\chih'|^2\gs-\alpha$ and then commute with $\nablas$ to derive
\begin{align*}
D\nablas\chih'=\nablas(\Omega\chi)\cdot\chih'+\nablas(\Omega^2)|\chih'|^2\gs+\Omega^2(\chih',\nablas\chih')\gs-\nablas\alpha.
\end{align*}
The last term $\nablas\alpha$ can be bounded by $\mathscr{R}^{\ub}(u)$. For the remaining terms, each of them contains a factor $|\chih'|$. We can bound the other factors by $C$ and bound $|\chih'(\ub,u)|$ by $\mathscr{O}(\delta,u)+\mathscr{R}^{\ub}(u)$. Thus, we obtain
\begin{align*}
\|\nablas\chih'\|_{L^\infty(S_{\ub,u})}\lesssim\mathscr{O}(\delta,u)+\mathscr{R}^{\ub}(u).
\end{align*}
The estimates on higher order derivatives are similar. For second and third derivatives of $\chih$, we have
\begin{align*}
D\nablas^{k}\chih'=\sum_{i=0}^{k-1}\nablas^i\nablas(\Omega\chi)\cdot\nablas^{k-1-i}\chih'+\sum_{i=0}^k\nablas^i(\Omega^2)\cdot\nablas^{k-i}|\chih'|^2\gs-\nablas^k\alpha.
\end{align*}
 When $k=2$, for the first and the second terms, we can use H\"{o}lder's inequality by placing the factors with highest order derivatives in $L^4(S_{\ub,u})$ and the others in $L^\infty$. In this way,  for each product, one factor is bounded by $\mathscr{O}(\delta,u)+\mathscr{R}^{\ub}(u)$ and the other one is bounded by $C$. The last term $\nablas^k\alpha$ is also bounded by $\mathscr{R}^{\ub}(u)$. When $k=3$, there are two cases: if there are factors containing third order derivatives, we can bound it in $L^2(S_{\ub,u})$ and bound the other factors in $L^\infty$; if not, we bound the factors containing second order derivatives in $L^4(S_{\ub,u})$ and bound the other in $L^\infty$. Therefore, we obtain
\begin{align*}
\|\nablas^2\chih'\|_{L^4(S_{\ub,u})} + \|\nablas^3\chih'\|_{L^2(S_{\ub,u})}\lesssim\mathscr{O}(\delta,u)+\mathscr{R}^{\ub}(u).
\end{align*}

For $\chibh$, in a similar manner, we can use (\ref{NSE_Dbh_chibh}) to derive
\begin{align*}
\|\chibh\|_{L^\infty}+\|\nablas\chibh\|_{L^\infty}+\|\nablas^2\chibh\|_{L^4(S_{\ub,u})}+\|\nablas^3\chibh\|_{L^2(S_{\ub,u})}\lesssim\mathscr{O}(\ub,u_0)+\underline{\mathscr{R}}^{u}(\ub).
\end{align*}

For $\eta$ and $\etab$, we consider (\ref{NSE_D_eta}) and (\ref{NSE_Db_etab}) as a coupled system and we derive
\begin{align*}
\|\eta\|_{L^\infty(S_{\ub,u})}\lesssim\|\eta\|_{L^\infty(S_{\delta,u})}+\int_\delta^{\ub}\|\etab\|_{L^\infty(S_{\ub',u})}+\int_\delta^{\ub}\|\beta\|_{L^\infty(S_{\ub',u})},\\
\|\etab\|_{L^\infty(S_{\ub,u})}\lesssim\|\etab\|_{L^\infty(S_{\ub,u_0})}+\int_{u_0}^{u}\|\eta\|_{L^\infty(S_{\ub,u'})}+\int_{u_0}^{u}\|\betab\|_{L^\infty(S_{\ub,u'})}.
\end{align*}
We substitute the second inequality to the first one and take $\sup_{\delta\le\ub'\le\ub}$ on both sides, we have
\begin{align*}
\sup_{\delta\le\ub'\le\ub}\|\eta\|_{L^\infty(S_{\ub',u})}&\lesssim\|\eta\|_{L^\infty(S_{\delta,u})}+\sup_{\delta\le\ub'\le\ub}\|\etab\|_{L^\infty(S_{\ub',u_0})}+\int_{u_0}^{u}\sup_{\delta\le\ub'\le\ub}\|\eta\|_{L^\infty(S_{\ub',u'})}\\
&+\int_{u_0}^{u}\sup_{\delta\le\ub'\le\ub}\|\betab\|_{L^\infty(S_{\ub',u'})}+\int_\delta^{\ub}\|\beta\|_{L^\infty(S_{\ub',u})}.
\end{align*}
Thanks to Gronwall and Sobolev inequalities, we have
\begin{align*}
\sup_{\delta\le\ub'\le\ub}&\|\eta\|_{L^\infty(S_{\ub',u})}\lesssim\|\eta\|_{L^\infty(S_{\delta,u})}+\sup_{\delta\le\ub'\le\ub}\|\etab\|_{L^\infty(S_{\ub,u_0})}+\int_{u_0}^{u}\sup_{\delta\le\ub'\le\ub}\|\betab\|_{L^\infty(S_{\ub',u'})}\\
&+\int_\delta^{\ub}\|\beta\|_{L^\infty(S_{\ub',u})} \lesssim\mathscr{O}(\delta,u)+\sup_{\delta\le\ub'\le\ub}\mathscr{O}(\ub',u_0)+\sup_{\delta\le\ub'\le\ub}\underline{\mathscr{R}}^u(\ub')+\mathscr{R}^{\ub}(u).
\end{align*}
We then define $\mathscr{P}(\ub,u)$ and rewrite the above estimates as
\begin{align*}
\sup_{\delta\le\ub'\le\ub}\|\etab\|_{L^\infty(\ub',u)}\lesssim\mathscr{P}(\ub,u)\triangleq\sup_{\delta\le\ub'\le\ub}(\mathscr{O}(\ub',u_0)+\underline{\mathscr{R}}^u(\ub'))+\sup_{u_0\le u'\le u}(\mathscr{O}(\delta,u')+\mathscr{R}^{\ub}(u')).
\end{align*}
We turn to the derivatives of $\eta$ and $\etab$. Recall that we can commute derivatives with (\ref{NSE_D_eta}) and (\ref{NSE_Db_etab}) to derive (\ref{Dk_eta}) and (\ref{Dbk_etab}). When $k=1$, we have
\begin{align*}
\|\nablas\eta\|_{L^\infty(S_{\ub,u})}\lesssim\|\nablas\eta\|_{L^\infty(S_{\delta,u})}+\int_\delta^{\ub}\|\nablas\etab\|_{L^\infty(S_{\ub',u})}+\int_\delta^{\ub}\|\nablas\beta\|_{L^\infty(S_{\ub',u})}+\mathscr{P}(\ub,u),\\
\|\nablas\etab\|_{L^\infty(S_{\ub,u})}\lesssim\|\nablas\etab\|_{L^\infty(S_{\ub,u_0})}+\int_{u_0}^{u}\|\nablas\eta\|_{L^\infty(S_{\ub,u'})}+\int_{u_0}^{u}\|\nablas\betab\|_{L^\infty(S_{\ub,u'})}+\mathscr{P}(\ub,u).
\end{align*}
By a similar argument as above, we obtain
\begin{align*}
\|\nablas\eta\|_{L^\infty(S_{\ub,u})},\|\nablas\etab\|_{L^\infty(S_{\ub,u})}\lesssim\mathscr{P}(\ub,u).
\end{align*}
Similar arguments also apply to the cases when $k=2,3$. We finally obtain
\begin{align*}
\|\nablas^2(\eta,\etab)\|_{L^4(S_{\ub,u})},\|\nablas^3(\eta,\etab)\|_{L^2(S_{\ub,u})}\lesssim\mathscr{P}(\ub,u).
\end{align*}

For $\omega$ and $\omegab$, the estimates rely on (\ref{NSE_D_omegab}) and (\ref{NSE_Db_omega}). We rewrite (\ref{NSE_D_omegab}) as
\begin{equation}\label{D omegab minus omegab m0}
D(\omegab-\omegab_{m_0})=(\Omega^2-\Omega_{m_0}^2)(2(\eta,\etab)-|\eta|^2-\rho)+\Omega_{m_0}^2(\rho-\rho_{m_0}).
\end{equation}
In view of the facts that $\Db\log\Omega=\omegab$, $|\Omega|+|\Omega^{-1}| \le C$,  we have
\begin{align*}
\|\Omega-\Omega_{m_0}\|_{L^\infty(S_{\ub,u})}&\lesssim\|\log\Omega-\log\Omega_{m_0}\|_{L^\infty(S_{\ub,u})}\\
&\lesssim\int_{u_0}^u\|\omegab-\omegab_{m_0}\|_{L^\infty(S_{\ub,u'})}\lesssim\sup_{u_0\le u'\le u}\|\omegab-\omegab_{m_0}\|_{L^\infty(S_{\ub,u'})}.
\end{align*}
We can integrate \eqref{D omegab minus omegab m0} to derive
\begin{align*}
\|\omegab-\omegab_{m_0}\|_{L^\infty(S_{\ub,u})}&\lesssim\|\omegab-\omegab_{m_0}\|_{L^\infty(S_{\delta,u})}+\int_0^{\ub}\|\Omega-\Omega_{m_0}\|_{L^\infty(S_{\ub',u})}+\mathscr{R}^{\ub}(u)\\
&\lesssim\int_0^{\ub}\sup_{u_0\le u'\le u}\|\omegab-\omegab_{m_0}\|_{L^\infty(S_{\ub',u'})}+\mathscr{R}^{\ub}(u).
\end{align*}
Therefore, Gronwall's inequality yields
\begin{align*}
\sup_{u_0\le u'\le u}\|\omegab-\omegab_{m_0}\|_{L^\infty(S_{\ub,u'})}\lesssim\sup_{u_0\le u'\le u}(\mathscr{O}(\delta,u')+\mathscr{R}^{\ub}(u')).
\end{align*}
As a byproduct, we also have
\begin{align*}
\|\Omega-\Omega_{m_0}\|_{L^\infty(S_{\ub,u})}\lesssim\sup_{u_0\le u'\le u}(\mathscr{O}(\delta,u')+\mathscr{R}^{\ub}(u')).
\end{align*}
We then rewrite (\ref{NSE_Db_omega}) as
\begin{align*}
\Db(\omega-\omega_{m_0})=(\Omega^2-\Omega_{m_0}^2)(2(\eta,\etab)-|\etab|^2-\rho)+\Omega_{m_0}^2(\rho-\rho_{m_0}).
\end{align*}
We can integrate this equations and use the bounds for $|\Omega-\Omega_{m_0}|$ to derive
\begin{align*}
\|\omega-\omega_{m_0}\|_{L^\infty(S_{\ub,u})}\lesssim\sup_{u_0\le u'\le u}(\mathscr{O}(\delta,u')+\mathscr{R}^{\ub}(u'))+\underline{\mathscr{R}}^u(\ub).
\end{align*}
We turn to the derivatives of $\omega$ and $\omegab$. Since $\omega_{m_0}$ and $\omegab_{m_0}$ are constants on $S_{\ub,u}$, applying $\nablas^k$ on (\ref{NSE_D_omegab}) and (\ref{NSE_Db_omega}) will kill those terms. We also notice that $\rho_{m_0}$ is constant on $S_{\ub,u}$ and $\ds\log\Omega=(\eta+\etab)/2$ has already been bounded by $\mathscr{P}$. We then take derivatives
\begin{align*}
\Db\nablas^k\omega&=\sum_{i=0}^{k-2}\nablas^i\nablas(\Omega\chib)\cdot\nablas^{k-2-i}\ds\omega+\nablas^k(\Omega^2(2(\eta,\etab)-|\etab|^2-\rho)),\\
\Db\nablas^k\omega&=\sum_{i=0}^{k-2}\nablas^i\nablas(\Omega\chib)\cdot\nablas^{k-2-i}\ds\omega+\nablas^k(\Omega^2(2(\eta,\etab)-|\etab|^2-\rho)).
\end{align*}
For $k=1,2,3$, we can integrate as before to derive
\begin{align*}
\|\ds(\omega,\omegab)\|_{L^\infty(S_{\ub,u})}+\|\nablas^2(\omega,\omegab)\|_{L^4(S_{\ub,u})}+\|\nablas^3(\omega,\omegab)\|_{L^2(S_{\ub,u})}\lesssim\mathscr{P}(\ub,u).
\end{align*}

Finally, we consider $\tr\chi'$ and $\tr\chib'$. We rewrite (\ref{NSE_D_trchi}) as
\begin{align*}
D(\tr\chi'-\tr\chi'_{m_0})=-\frac{1}{2}(\tr\chi+\tr\chi_{m_0})(\Omega\tr\chi'-\Omega_{m_0}\tr\chi'_{m_0})-|\chih|^2.
\end{align*}
We can bound $|\Omega\tr\chi-\Omega_{m_0}\tr\chi'_{m_0}|\lesssim|\Omega-\Omega_{m_0}|+|\tr\chi'-\tr\chi'_{m_0}|$. We then integrate the above equation to derive
\begin{align*}
\|\tr\chi'-\tr\chi'_{m_0}\|_{L^\infty(S_{\ub,u})}\lesssim&\|\tr\chi'-\tr\chi'_{m_0}\|_{L^\infty(S_{\delta,u})}+\int_\delta^{\ub}\|\tr\chi'-\tr\chi'_{m_0}\|_{L^\infty(S_{\ub',u})}\\
&+\sup_{u_0\le u'\le u}(\mathscr{O}(\delta,u')+\mathscr{R}^{\ub}(u')).
\end{align*}
Thanks to Gronwall's inequality, we have
\begin{align*}
\|\tr\chi'-\tr\chi'_{m_0}\|_{L^\infty(S_{\ub,u})}\lesssim\sup_{u_0\le u'\le u}(\mathscr{O}(\delta,u')+\mathscr{R}^{\ub}(u')).
\end{align*}
In a similar manner. we have the following bound for $\tr\chib'$:
\begin{align*}
\|\tr\chib'-\tr\chib'_{m_0}\|_{L^\infty(S_{\ub,u})}\lesssim\mathscr{O}(\ub,u_0)+\underline{\mathscr{R}}^u(\ub)+\sup_{u_0\le u'\le u}(\mathscr{O}(\delta,u')+\mathscr{R}^{\ub}(u')).
\end{align*}
We turn to the derivatives of $\tr\chi'$ and $\tr\chib'$. We apply $\ds$ to (\ref{NSE_D_trchi}) to derive
\begin{align*}
D\ds\tr\chi'=-\tr\chi\ds(\Omega\tr\chi')-\ds(|\chih|^2)=-\Omega\tr\chi\ds\tr\chi'-\tr\chi\tr\chi'\ds\Omega-\ds(|\chih|^2).
\end{align*}
The last two terms can be bounded by $\mathscr{P}(\ub,u)$ and the first term can be absorbed by Gronwall's inequality. Thus, we obtain
\begin{align*}
\|\ds\tr\chi'\|_{L^\infty(\ub,u)}\lesssim\mathscr{P}(\ub,u).
\end{align*}
For higher order derivatives, we have
\begin{align*}
D\nablas^k\tr\chi'=-\Omega\tr\chi\nablas^k\tr\chi'+\lot,
\end{align*}
where $\lot$ can be bounded directly by $\mathscr{P}(\ub,u)$. Thanks to Gronwall's inequality, we obtain
\begin{align*}
\|\nablas^2\tr\chi'\|_{L^4(\ub,u)}+\|\nablas^3\tr\chi'\|_{L^2(\ub,u)}\lesssim\mathscr{P}(\ub,u).
\end{align*}
Similarly, we have the following estimates for $\tr\chib'$:
\begin{align*}
\|\ds\tr\chib'\|_{L^\infty(\ub,u)}+\|\nablas^2\tr\chib'\|_{L^4(\ub,u)}+\|\nablas^3\tr\chib'\|_{L^2(\ub,u)}\lesssim\mathscr{P}(\ub,u).
\end{align*}
Finally, using the bound of $\nablas^i(\Omega-\Omega_{m_0})$ in terms of $\mathscr{P}(\ub,u)$ for $i=0,1,2,3$, the above estimates for $\nablas^i(\tr\chi'-\tr\chi'_{m_0})$ and $\nablas^i(\tr\chib'-\tr\chib'_{m_0})$ also hold for $\nablas^i(\tr\chi-\tr\chi_{m_0})$ and $\nablas^i(\tr\chib-\tr\chib_{m_0})$, or $\nablas^i(\Omega\tr\chi-\Omega_{m_0}\tr\chi_{m_0})$ and $\nablas^i(\Omega\tr\chib-\Omega_{m_0}\tr\chib_{m_0})$, for $i=0,1,2,3$.

Therefore, we have completed the proof of Lemma \ref{connectionb}.
\end{proof}

We rewrite (\ref{NBE_D_rho}) and (\ref{NBE_Db_rho}) in terms of the renormalized quantities such as $\rho-\rho_{m_0}$. Recall that $\rho_{m_0}$ satisfies the following equations
\begin{align*}
D\rho_{m_0}+\frac{3}{2}\Omega_{m_0}\tr\chi_{m_0}\rho_{m_0}=0, \quad \Db\rho_{m_0}+\frac{3}{2}\Omega_{m_0}\tr\chib_{m_0}\rho_{m_0}=0.
\end{align*}
We can subtract these two equations from (\ref{NBE_D_rho}) and (\ref{NBE_Db_rho}) to derive
\begin{align}
\label{nNBE_D_rho-rho0}
D(\rho-\rho_{m_0})+\frac{3}{2}(\Omega\tr\chi-\Omega_{m_0}\tr\chi_{m_0})\rho&+\frac{3}{2}\Omega_{m_0}\tr\chi_{m_0}(\rho-\rho_{m_0})\\\nonumber&-\Omega\{\divs \beta+(2\etab+\zeta,\beta)-\frac{1}{2}(\chibh,\alpha)\}=0,\\
\label{nNBE_Db_rho-rho0}
\Db(\rho-\rho_{m_0})+\frac{3}{2}(\Omega\tr\chib-\Omega_{m_0}\tr\chib_{m_0})\rho&+\frac{3}{2}\Omega_{m_0}\tr\chib_{m_0}(\rho-\rho_{m_0})\\\nonumber&+\Omega\{\divs \betab+(2\eta-\zeta,\betab)-\frac{1}{2}(\chih,\alphab)\}=0.
\end{align}
Because $\rho_{m_0}$ is constant on each $S_{\ub,u}$, we can also rewrite (\ref{NBE_Db_beta}) and (\ref{NBE_D_betab}) as
\begin{align}
\label{nNBE_Db_beta}\Db\beta+\frac{1}{2}\Omega\tr\chib\beta-\Omega\chibh \cdot \beta+\omegab \beta-\Omega\{\ds (\rho-\rho_{m_0})+{}^*\ds \sigma+3\eta\rho+3{}^*\eta\sigma+2\chih\cdot\betab\}&=0,\\
\label{nNBE_D_betab}D\betab+\frac{1}{2}\Omega\tr\chi\betab-\Omega\chih \cdot \betab+\omega \betab+\Omega\{\ds (\rho-\rho_{m_0})-{}^*\ds \sigma+3\etab\rho-3{}^*\etab\sigma-2\chibh\cdot\beta\}&=0.
\end{align}

We call \eqref{nNBE_D_rho-rho0}, \eqref{nNBE_Db_rho-rho0}, \eqref{nNBE_Db_beta} and \eqref{nNBE_D_betab} together with the following six original Bianchi equations \eqref{NBE_Db_alpha}, \eqref{NBE_D_alphab}, \eqref{NBE_D_beta}, \eqref{NBE_Db_betab}, \eqref{NBE_D_sigma} and \eqref{NBE_Db_sigma} \emph{the renormalized Bianchi equations}. In a similar manner as we derived energy inequalities (\ref{alpha}), (\ref{beta}), (\ref{rhosigma}) and (\ref{betabalphab}), we have the following energy inequalities
\begin{align*}
\sum_{R\in\{\alpha,\beta,\rho-\rho_{m_0},\sigma,\betab\}}&\int_{C_u^{\ub}}|R|^2+\sum_{\underline{R}\in\{\beta,\rho-\rho_{m_0},\sigma,\betab,\alphab\}}\int_{\Cb_{\ub}^u}|\underline{R}|^2\\
\lesssim\sum_{R\in\{\alpha,\beta,\rho-\rho_{m_0},\sigma,\betab\}}&\int_{C_{u_0}^{\ub}}|R|^2+\sum_{\underline{R}\in\{\beta,\rho-\rho_{m_0},\sigma,\betab,\alphab\}}\int_{\Cb_{\delta}^u}|\underline{R}|^2\\
+\sum_{R\in\{\alpha,\beta,\rho-\rho_{m_0},\sigma,\betab\}\atop { \underline{R}\in\{\beta,\rho-\rho_{m_0},\sigma,\betab,\alphab\}\atop
\mathscr{R}_1,\mathscr{R}_2\in\{\alpha,\beta,\rho-\rho_{m_0},\sigma,\betab,\alphab\}}}\iint_{M_{\ub,u}}&|\Omega\Gamma\cdot\mathscr{R}_1\cdot\mathscr{R}_2|\\
+&|\rho||(\Omega\tr\chi-\Omega_{m_0}\tr\chi_{m_0})(\rho-\rho_{m_0})+\chih\cdot\alpha+\eta\cdot\beta+\chibh\cdot\alphab+\etab\cdot\betab|.
\end{align*}
where $M_{\ub,u}$ refers to the region corresponding to $\delta\le\ub'\le\ub$, $u_0\le u'\le u$. We also have
\begin{align*}
\sum_{R\in\{\alpha,\beta,\rho-\rho_{m_0},\sigma,\betab\}}&\int_{C_u^{\ub}}|\nablas^i R|^2+\sum_{\underline{R}\in\{\beta,\rho-\rho_{m_0},\sigma,\betab,\alphab\}}\int_{\Cb_{\ub}^u}|\nablas^i\underline{R}|^2\\
\lesssim\sum_{R\in\{\alpha,\beta,\rho-\rho_{m_0},\sigma,\betab\}}&\int_{C_{u_0}^{\ub}}|\nablas^i R|^2+\sum_{\underline{R}\in\{\beta,\rho-\rho_{m_0},\sigma,\betab,\alphab\}}\int_{\Cb_{\delta}^u}|\nablas^i\underline{R}|^2\\
+\sum_{R\in\{\alpha,\beta,\rho-\rho_{m_0},\sigma,\betab\}\atop { \underline{R}\in\{\beta,\rho-\rho_{m_0},\sigma,\betab,\alphab\}\atop
\mathscr{R}_1,\mathscr{R}_2\in\{\alpha,\beta,\rho-\rho_{m_0},\sigma,\betab,\alphab\}}}\iint_{M_{\ub,u}}&\sum_{j=0}^{i-1}|\nablas^j\nablas(\Omega\chi)\cdot\nablas^{i-1-j}\Rb\cdot\nablas^i \Rb+\nablas^j\nablas(\Omega\chib)\cdot\nablas^{i-1-j} R\cdot\nablas^i R|\\
&+\sum_{j=0}^{i-1}|\nablas^jK\cdot\nablas^{i-1-j}R\cdot\nablas^i R+\nablas^jK\cdot\nablas^{i-1-j}\Rb\cdot \nablas^i\Rb|\\
&+|\nablas^i((\Omega\Gamma)\cdot\mathscr{R}_1)\cdot\nablas^i\mathscr{R}_2|+|\rho||\nablas^i(\Omega\tr\chi-\Omega_{m_0}\tr\chi_{m_0})\cdot\nablas^i\rho|\\
&+|\rho||\nablas^i\chih\cdot\nablas^i\alpha+\nablas^i\eta\cdot\nablas^i\beta+\nablas^i\chibh\cdot\nablas^i\alphab+\nablas^i\etab\cdot\nablas^i\betab|,
\end{align*}
for $i=1,2,3$.

We remark that in terms $\mathscr{R}_1\cdot\mathscr{R}_2$, the term $\alpha\cdot\alphab$ do not appear. Thus, we can regard $\mathscr{R}_1\cdot\mathscr{R}_2$ as either $R_1\cdot R_2$ or $\Rb_1\cdot\Rb_2$. We then have the following estimates:
\begin{align*}
\sum_{j=0}^1\iint_{M_{\ub,u}}|\nablas^j(\Omega\Gamma)\cdot \nablas^{i-j} R_1|^2\lesssim\int_{u_0}^u\int_{C_{u'}^{\ub}}\sum_{j=0}^{i}|\nablas^jR_1|^2\ \ \textrm{for}\ \ i\le3,
\end{align*}
\begin{align*}
\iint_{M_{\ub,u}}|\nablas^2(\Omega\Gamma)\cdot \nablas^iR_1|^2&\lesssim\int_{u_0}^u\int_{\delta}^{\ub}\|\nablas^iR_1\|_{L^4(\ub',u')}^2\lesssim\int_{u_0}^u\int_{\delta}^{\ub}\sum_{j=0}^{i+1}\|\nablas^jR_1\|_{L^2(\ub',u')}^2\\
&\lesssim\int_{C_{u'}^{\ub}}\sum_{j=0}^{i+1}|\nablas^j R_1|^2\ \ \textrm{for}\ \ i\le1,
\end{align*}
\begin{align*}
\iint_{M_{\ub,u}}|\nablas^3(\Omega\Gamma)\cdot R_1|^2&\lesssim\int_{u_0}^u\int_{\delta}^{\ub}\|R_1\|_{L^\infty(\ub',u')}^2 \lesssim\int_{u_0}^u\int_{\delta}^{\ub}\sum_{j=0}^{2}\|\nablas^jR_1\|_{L^2(\ub',u')}^2\\
&\lesssim\int_{C_{u'}^{\ub}}\sum_{j=0}^{2}|\nablas^j R_1|^2.
\end{align*}
Therefore, by Cauchy-Schwartz inequality, we have
\begin{align*}
\sum_{i=0}^3\iint_{M_{\ub,u}}|\nablas^i(\Omega\Gamma\cdot R_1)\cdot \nablas^i R_2|\lesssim\sum_{i=0}^3\int_{u_0}^u\int_{C_{u'}^{\ub}}(|\nablas^iR_1|^2+|\nablas^iR_2|^2).
\end{align*}
Similarly, we have
\begin{align*}
\sum_{i=0}^3\iint_{M_{\ub,u}}|\nablas^i(\Omega\Gamma\cdot \Rb_1)\cdot \nablas^i \Rb_2|\lesssim\sum_{i=0}^3\int_{\delta}^{\ub}\int_{\Cb_{\ub'}^u}(|\nablas^i\Rb_1|^2+|\nablas^i\Rb_2|^2).
\end{align*}

We turn to the following terms
\begin{align*}
\sum_{i=0}^2\iint_{M_{\ub,u}}|\nablas^i(\nablas(\Omega\chi)\cdot\Rb)\cdot\nablas \Rb+\nablas^i(\nablas(\Omega\chib)\cdot R)\cdot\nablas R|+|\nablas^i(KR\cdot\nablas R+K\Rb\cdot \Rb)|.
\end{align*}
They are treated in the same way as above by virtue of the bounds on $K$. They are eventually bounded by
\begin{align*}
\sum_{i=0}^3\left(\int_{u_0}^u\int_{C_{u'}^{\ub}}|\nablas^iR|^2+\int_{\delta}^{\ub}\int_{\Cb_{\ub'}^u}|\nablas^i\Rb|^2\right).
\end{align*}

Finally, we consider the following terms:
\begin{align*}
\sum_{i=0}^3\iint_{M_{\ub,u}} &|\rho||\nablas^i(\Omega\tr\chi-\Omega_{m_0}\tr\chi_{m_0})\cdot\nablas^i(\rho-\rho_{m_0})\\
&+\nablas^i\chih\cdot\nablas^i\alpha+\nablas^i\eta\cdot\nablas^i\beta+\nablas^i\chibh\cdot\nablas^i\alphab+\nablas^i\etab\cdot\nablas^i\betab|.
\end{align*}
According to Lemma \ref{connectionb}, they are bounded by
\begin{align*}
\sup&_{\delta\le\ub\le\delta+\epsilon_0}\mathscr{O}(\ub,u_0)^2 +\sup_{u_0\le u\le -1-\ub}\mathscr{O}(\delta,u)^2\\
&+\sum_{i=0}^3\left(\int_{u_0}^u\sup_{u_0\le u''\le u'}\int_{C_{u''}^{\ub}}|\nablas^iR|^2+\int_{\delta}^{\ub}\sup_{\delta\le u''\le u'}\int_{\Cb_{\ub''}^{u}}|\nablas^i\Rb|^2\right).
\end{align*}

We define
\begin{align*}
\mathcal{E}(u)=\sum_{R\in\{\alpha,\beta,\rho-\rho_{m_0},\sigma,\betab\}}\sup_{u_0\le u'\le u}\sum_{i=0}^3\int_{C_{u'}}|\nablas^iR|^2,\\
\mathcal{F}(\ub)=\sum_{\Rb\in\{,\beta,\rho-\rho_{m_0},\sigma,\betab,\alphab\}}\sup_{\delta\le \ub'\le\ub}\sum_{i=0}^3\int_{\Cb_{\ub}}|\nablas^i\Rb|^2.
\end{align*}
Therefore, the above estimates can be summarized as
\begin{align*}
\mathcal{E}(u)+\mathcal{F}(\ub)&\lesssim \mathcal{E}(u_0)+\mathcal{F}(\delta) \\
&+\sup_{\delta\le\ub\le\delta+\epsilon_0}\mathscr{O}(\ub,u_0)^2+\sup_{u_0\le u\le -1-\ub}\mathscr{O}(\delta,u)^2+\left(\int_{u_0}^u\mathcal{E}(u')+\int_{\delta}^{\ub}\mathcal{F}(\ub')\right).
\end{align*}
Thanks to Gronwall's inequality, we have proved:
\begin{proposition}\label{smallness}
If we have the following smallness on initial data
\begin{align*}
\mathscr{R}^{1+\delta}(u_0),\underline{\mathscr{R}}^{-1-\delta}(\delta),\mathscr{O}(\ub,u_0),\mathscr{O}(\delta,u)\le\epsilon,
\end{align*}
then for  $(\ub,u) \in [\delta, \delta+\epsilon_0] \times [u_0, u_0+\epsilon_0]$, we have
\begin{align*}
\mathscr{R}^{\ub}(u),\underline{\mathscr{R}}^u(\ub),\mathscr{O}(\ub,u),\mathscr{O}(\ub,u)\lesssim\epsilon.
\end{align*}
\end{proposition}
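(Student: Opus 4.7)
The plan is to read the preceding derivation as setting up a closed Gronwall system for the curvature energies $\mathcal{E}(u)$ and $\mathcal{F}(\ub)$, and then feed the resulting bounds into Lemma \ref{connectionb} to recover the connection estimates. Concretely, the renormalized Bianchi equations \eqref{nNBE_D_rho-rho0}, \eqref{nNBE_Db_rho-rho0}, \eqref{nNBE_Db_beta}, \eqref{nNBE_D_betab} together with the five original equations \eqref{NBE_Db_alpha}, \eqref{NBE_D_alphab}, \eqref{NBE_D_beta}, \eqref{NBE_Db_betab}, \eqref{NBE_D_sigma}, \eqref{NBE_Db_sigma} form Bianchi pairs of the schematic form $DR = \divs\Rb + \text{l.o.t.}$ and $\Db\Rb = \nablas \tensor R + \text{l.o.t.}$; pairing each pair, integrating over $M_{\ub,u}$, and using the divergence theorem cancels the top-order transport terms and produces the displayed inequality for $\mathcal{E}(u)+\mathcal{F}(\ub)$ after summing over $i=0,1,2,3$ commuted derivatives.

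The bulk error terms split into three groups, all of which are essentially handled in the calculation preceding the proposition. The purely cubic terms $\iint |\Omega\Gamma \cdot \mathscr{R}_1 \cdot \mathscr{R}_2|$ (including their commuted analogues) are absorbed by H\"older and the Sobolev embeddings of Proposition \ref{bound_luk}, placing one curvature factor in $L^2$ on the appropriate cone and the other, together with $\Gamma$, in $L^4(S)$ or $L^\infty$; this yields a bound of the form $\int_{u_0}^u \mathcal{E}(u') + \int_{\delta}^{\ub} \mathcal{F}(\ub')$. The commutator terms involving $\nablas(\Omega\chi)$, $\nablas(\Omega\chib)$ and $K$ are treated identically. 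The genuinely delicate ones are the borderline terms with a factor $\rho$ (which is \emph{not} small, only $\rho-\rho_{m_0}$ is), namely
\begin{equation*}
\iint |\rho|\,|\nablas^i(\Omega\tr\chi - \Omega_{m_0}\tr\chi_{m_0})|\,|\nablas^i(\rho-\rho_{m_0})| \quad \text{and} \quad \iint |\rho|\,|\nablas^i\Gamma|\,|\nablas^i R|.
\end{equation*}
Here Lemma \ref{connectionb} is indispensable: it converts each $\nablas^i\Gamma$ into a quantity controlled by $\mathscr{O}(\delta,u) + \mathscr{O}(\ub,u_0) + \mathcal{E}(u)^{1/2} + \mathcal{F}(\ub)^{1/2}$, so by Cauchy--Schwarz these terms produce a contribution $\lesssim \sup \mathscr{O}(\ub,u_0)^2 + \sup \mathscr{O}(\delta,u)^2 + \int \mathcal{E} + \int \mathcal{F}$.

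Summing everything, one obtains the integral inequality displayed just before the proposition statement, and a double Gronwall argument in $u$ and $\ub$ gives
\begin{equation*}
\mathcal{E}(u) + \mathcal{F}(\ub) \lesssim \mathcal{E}(u_0) + \mathcal{F}(\delta) + \sup \mathscr{O}(\ub,u_0)^2 + \sup \mathscr{O}(\delta,u)^2 \lesssim \epsilon^2
\end{equation*}
for $(\ub,u) \in [\delta,\delta+\epsilon_0]\times[u_0,u_0+\epsilon_0]$, which yields $\mathscr{R}^{\ub}(u), \underline{\mathscr{R}}^u(\ub) \lesssim \epsilon$. Substituting these bounds back into Lemma \ref{connectionb} produces $\mathscr{O}(\ub,u) \lesssim \epsilon$, completing the proof.

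The step I expect to be the main obstacle is the top-order commutator estimate at $i=3$: the term $\nablas^3(\Omega\Gamma)\cdot R$ cannot afford any derivative on $R$, so one must place $R$ in $L^\infty(S)$ via the Sobolev embedding of Proposition \ref{bound_luk} at the cost of losing one factor of $\mathcal{E}^{1/2}$. The cancellation that makes this work is precisely that the $\nablas^3 \Gamma$ factor is bounded in $L^2$ by $\mathcal{E}^{1/2} + \mathcal{F}^{1/2}$ via Lemma \ref{connectionb}, so no derivative loss occurs on the curvature side. Keeping track of this book-keeping --- and checking that the $\alpha \cdot \alphab$ pairing never appears (so that $\mathscr{R}_1 \cdot \mathscr{R}_2$ is always either $R \cdot R$ or $\Rb \cdot \Rb$ and therefore controllable by $\mathcal{E}$ or $\mathcal{F}$ alone) --- is the only real subtlety, since it is exactly what lets the Gronwall loop close.
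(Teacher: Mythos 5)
Your proposal matches the paper's own proof essentially step for step: renormalize the Bianchi equations around $\rho_{m_0}$, pair them into energy identities for $\mathcal{E}(u)$ and $\mathcal{F}(\ub)$, use Lemma \ref{connectionb} to feed $\nablas^i\Gamma$ back in terms of initial data and the curvature energies (noting that $\alpha\cdot\alphab$ never occurs so each bilinear term is controlled by $\mathcal{E}$ or $\mathcal{F}$ alone), close with a double Gronwall, and substitute back into Lemma \ref{connectionb} to recover $\mathscr{O}\lesssim\epsilon$. Your identification of the borderline $|\rho|\cdot\nablas^i\Gamma\cdot\nablas^iR$ terms and the top-order $i=3$ commutator bookkeeping is precisely where the paper's argument also concentrates its care, so the proposal is correct and follows the same route.
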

The smallness of the data is of course guaranteed by Proposition \ref{initialb}.

Equipped with Proposition \ref{smallness}, we are able to control $C^k$ norms. More precisely, we shall prove that
\begin{equation}\label{control the C2 norm}
\|g-g_{m_0}\|_{C^2(M_{\epsilon_0},g_{m_0})}\lesssim\epsilon.
\end{equation}

We start with the $C^0$ norms. We write $g$ in canonical double null coordinates as follows
\begin{align*}
g=-2\Omega^2(\D\ub\otimes\D u+\D u\otimes\D \ub)+\gs_{AB}(\D\theta^A-b^A\D\ub)\otimes(\D\theta^B-b^B\D\ub).
\end{align*}
Since $\Db(\Omega^2)=2\Omega^2\omegab$, $\Db\gs=2\Omega\chib$ and $\Db b=4\Omega^2\zeta$, by virtue of the $L^\infty$ bounds of $\omegab-\omegab_{m_0}$, $\chib-\chib_{m_0}$ and $\zeta$ in Proposition \ref{smallness} as well as the Sobolev inequalities, we have
\begin{equation*}
\|g-g_{m_0}\|_{C^0(M_{\epsilon_0},g_{m_0})}\lesssim\epsilon.
\end{equation*}

For $C^1$ norms of $g-g_{m_0}$, we use $\nabla_{m_0}(g-g_{m_0})=(\nabla-\nabla_{m_0})g$. We also need the $L^\infty$ bounds on $\nabla-\nabla_{m_0}$, which can be obtained by $\Gamma-\Gamma_{m_0}$ and $\Gammas-\Gammas_{m_0}$ where $\Gamma$ refers to the null connection coefficients and $\Gammas$ refers to the Christoffel symbols of $\gs$. We have already derived
\begin{equation*}
|\Gamma-\Gamma_{m_0}|\lesssim\epsilon
\end{equation*}
thanks to Proposition \ref{smallness} and the Sobolev inequalities. To estimate $\Gammas-\Gammas_{m_0}$, we need the following propagation equation
\begin{align*}
\Db(\Gammas-\Gammas_{m_0})_{AB}^C=\frac{1}{2}\gs^{CD}(\nablas_A(\Omega\chib)_{BD}+\nablas_B(\Omega\chib)_{AD}-\nablas_D(\Omega\chib)_{AB}).
\end{align*}
Since we have already obtained the $L^\infty$ bounds on $\nablas(\Omega\chib)$,  by directly integrating the above equation, we obtain $|\Gammas-\Gammas_{m_0}|\lesssim\epsilon$. Finally, we have
\begin{equation*}
\|g-g_{m_0}\|_{C^1(M_{\epsilon_0},g_{m_0})}\lesssim\epsilon.
\end{equation*}

For $C^2$ bounds on $g-g_{m_0}$, we write
\begin{equation*}
\nabla_{m_0}^2(g-g_{m_0})=(\nabla^2-\nabla_{m_0}^2)g=\nabla((\nabla-\nabla_{m_0})g)+(\nabla-\nabla_{m_0})\nabla_{m_0}g.
\end{equation*}
The last term $(\nabla-\nabla_{m_0})\nabla_{m_0}g$ has already been controlled. Thus, we need the $L^\infty$ bounds of the following quantities:
\begin{align*}
D(\Gamma-\Gamma_{m_0}),\ D(\Gammas-\Gammas_{m_0}),\ \Db(\Gamma-\Gamma_{m_0}),\ \Db(\Gammas-\Gammas_{m_0}),\ \nablas(\Gamma-\Gamma_{m_0}),\ \nablas(\Gammas-\Gammas_{m_0}).
\end{align*}
The estimates for the first four quantities can be obtained using the null structure equations provided we have $L^\infty$ bounds of all first derivatives of null curvature components, but we do not have control on
\begin{align*}
D(\omega-\omega_{m_0}),\ \Db(\omegab-\omegab_{m_0});
\end{align*}
The estimates for the fifth quantity are obtained directly by Proposition \ref{smallness}; The bounds for the last quantity are obtained also by the propagation equation of $\Gammas-\Gammas_{m_0}$ provided we have $L^\infty$ bound of
\begin{equation*}
\nablas^2(\Omega\chib).
\end{equation*}

The strategy is clear now: we have to control $\|(D(\omega-\omega_{m_0})\|_{L^\infty}$, $\|\Db(\omegab-\omegab_{m_0})\|_{L^\infty}$ and $\|\nablas^2(\Omega\chib) )\|_{L^\infty}$. For this purpose, we first claim that
\begin{equation*}
\|\alpha,\beta,\rho-\rho_{m_0},\sigma,\betab,\alphab\|_{L^\infty}\lesssim\epsilon.
\end{equation*}

This can be obtained by using a variation of (\ref{SobolevC_L2_L4}) and (\ref{SobolevCb_L2_L4}) by setting the constants depending on $m_0$, $\epsilon_0$, $u_0$ and the fact that $\|\nablas^{\le1}\alpha\|_{L^4(S_{\delta,u})}\lesssim\epsilon$ which is stated in Proposition \ref{initialb}. To apply (\ref{SobolevC_L2_L4}) and (\ref{SobolevCb_L2_L4}), besides $\mathscr{R}_2, \underline{\mathscr{R}}_2\lesssim\epsilon$, we also need $\|\nablas D\alpha\|_{L^2(C_u)}\lesssim\epsilon$ and $\|\nablas\Db(\beta,\rho-\rho_{m_0},\sigma,\betab, \alphab)\|_{L^2(\Cb_{\ub})}\lesssim\epsilon$. For $\beta$, $\rho-\rho_{m_0}$, $\sigma$ and $\betab$ this is done by simply taking $\nablas$ to both sides of the renormalized Bianchi equations. For $\|\nablas D\alpha\|_{L^2(C_u)}$ and $\|\nablas\Db\alphab\|_{L^2(\Cb_{\ub})}$, they are obtained using a variation of (\ref{D_alpha}) and (\ref{Db_alphab}) by replacing Lie derivatives by $\nablas$.

We then claim
\begin{equation*}
\|\nablas(\beta,\betab)\|_{L^\infty} \lesssim \epsilon.
\end{equation*}

The proof is similar to the above, we first apply $\nablas^2$ to (\ref{NBE_Db_beta}) and (\ref{NBE_Db_betab}) and then apply (\ref{SobolevCb_L2_L4}) and Sobolev inequalities on $S_{\ub,u}$. This yields $\|\nablas(\beta,\betab)\|_{L^\infty} \lesssim \epsilon$. As a consequence, we have
\begin{equation*}
\|D(\omega-\omega_{m_0}),\Db(\omegab-\omegab_{m_0})\|_{L^\infty}\lesssim\epsilon.
\end{equation*}
In fact, we can apply $D$ to(\ref{NSE_Db_omega}) to derive a propagation equation for $D(\omega-\omega_{m_0})$ which looks like
\begin{equation*}
\Db D(\omega-\omega_0) =  D(\rho-\rho_0) + \lot = \nablas\beta + \lot.
\end{equation*}
We then integrate this equation to conclude. Similarly, we have estimates for $\Db(\omegab-\omegab_{m_0})$.

To prove $\|\nablas^2(\Omega\chib)\|_{L^\infty}\lesssim\epsilon$,  using bounds of $\|\nablas^3\betab\|_{L^2(\Cb_{\ub})}$ given by Proposition \ref{smallness}, we couple (\ref{NSE_div_chibh}) with (\ref{NSE_Db_trchib}) to derive bounds of $\|\nablas^4\chib\|_{L^2(\Cb_{\ub})}$ and $\|\nablas^4\tr\chib\|_{L^2(S_{\ub,u})}$.

Therefore, we have proved
\begin{equation*}
\|g-g_{m_0}\|_{C^2(M_{\epsilon_0},g_{m_0})}\lesssim\epsilon.
\end{equation*}

For higher order estimates on $C^k$ norms, as we remarked before, we can simply use a routine induction argument and we omit the proof. Therefore, we have completed the proof of Theorem \ref{Schwarzschild}.
\end{proof}

\section{Gluing Construction}\label{Gluing Construction}

\subsection{Preparation}
We summarize some key properties of the metric $g$ constructed in the previous sections:
\begin{enumerate}
\item In the region $(\ub, u) \in [u_0,0] \times [u_0, 0]$, $g$ coincides with the standard Minkowski metric;
\item In the region where $\ub \in [0,\delta]$ and $u \in [u_0, -1-\ub]$, the metric $g$ is constructed by Christodoulou in his work \cite{Chr}. In addition, we have imposed condition \eqref{integral=m0} with $|u_0|>2m_0$, so that on the incoming cone $\Cb_{\delta}$, the estimates in Section \ref{Section Geometry on incoming cone} hold. For a sufficiently small $\delta$, $\tr\chi(\delta,u_0)>0$ and hence $S_{\delta,u_0}$ is not a trapped surface, and if $2m_0>1$, according to Theorem 17.1 in \cite{Chr}, $S_{\delta,-1-\delta}$ is a trapped surface.

    We remark that, for $0<m_0<\frac{1}{2}$, it is straightforward from the proof of \cite{Chr} that $g$ actually exists up to $\ub+u=-2m_0$ and $S_{\delta,-2m_0-\delta}$ is a trapped surface, once we choose a sufficiently small $\delta$.
\item In the region $(\ub, u) \in [\delta,\delta+\epsilon_0] \times [u_0,u_0+\epsilon_0]$, the metric $g$ is $\epsilon$-close (in $C^{k-3}$ norms) to the Schwarzschild metric $g_{m_0}$.
\end{enumerate}

We fix a sphere $S_1=S_{\delta,u_1}$ near $S_{\delta,u_0}$ by choosing $u_1$ close to $u_0$ in such a way that $\tr\chi(\delta,u_1)>0$ and $\tr\chi_{m_0}(\delta,u_1)>0$. We emphasize that the choice of $S_1$ does not depend on $\delta$ if $\delta$ is sufficiently small. Because $g_{m_0}$ is the Schwarzschild metric with mass $m_0$, we can choose the time function $t$ of $g_{m_0}$ in the Boyer-Lindquist coordinates (we regard the Schwarzschild metric as a member in the Kerr family) such that $t(\delta,u_1)=0$. We also regard $t$ as a smooth function in the $\ub$-$u$ plane  $(\ub, u) \in [\delta,\delta+\epsilon_0] \times [u_0,u_0+\epsilon_0]$.

Now in the region $(\ub, u) \in [\delta,\delta+\epsilon_0] \times [u_0,u_0+\epsilon_0]$, we choose a three dimensional hypersurface
\begin{equation*}
H=\sum_{t(\ub,u)=0}S_{\ub,u},
\end{equation*}
and we use $S_2$ to denote the intersection of $H$ with $C_{u_0}\cup\Cb_{\delta+\epsilon_0}$. We define the following two radii $r_1=r|_{S_1}$ and $r_2=r|_{S_2}$ where $r(\ub,u)=r_{m_0}(\ub,u)$ is the radius of $S_{\ub,u}$ in Schwarzschild metric $g_{m_0}$. By construction, $2m_0<r_1<r_2$ and $H$ is space-like with respect to $g_{m_0}$. Thus, if $\epsilon$ is sufficiently small, $H$ is also space-like with respect to the metric $g$.

Let $(\bar{g}_{m_0},\bar{k}_{m_0})$ be the induced metric and the second fundamental form of $H$ as a submanifold in the Schwarzschild space-time with mass ${m_0}$. Therefore, for $r_1\le r\le r_2$, we have
\begin{equation*}
\bar{g}_{m_0}=(1-\frac{2m_0}{r})^{-1}\mathrm{d}r^2+r^2\mathrm{d}\sigma_{S^2}^2,
\end{equation*}
and
\begin{equation*}
\bar{k}_{m_0} \equiv 0.
\end{equation*}
 Let $(\bar{g},\bar{k})$ be the metric and the second fundamental form of $H$ induced by $g$, since $g$ is $\epsilon$-close to the Schwarzschild metric $g_{m_0}$ in $C^{k-3}$ norms, we have
\begin{equation*}
\|\bar{g}-\bar{g}_{m_0}\|_{C^{k-3}(H,\bar{g}_{m_0})}\lesssim\epsilon,
\end{equation*}
and
\begin{equation*}
\|\bar{k}\|_{C^{k-4}(H,\bar{g}_{m_0})}\lesssim\epsilon.
\end{equation*}

We recall the form of the Kerr metric in the Boyer-Lindquist coordinates
\begin{align*}
g_{m,(0,0,a)}=&(-1+\frac{2mr}{r^2+a^2\cos^2\theta})\D t^2-\frac{2mra\sin^2\theta}{r^2+a^2\cos^2\theta}\D t\D\varphi+\frac{r^2+a^2\cos^2\theta}{r^2-2ma+a^2}\D r^2\\
&+(r^2+a^2\cos^2\theta)^2\D\theta^2+\sin^2\theta(r^2+a^2+\frac{2mra^2\sin^2\theta}{r^2+a^2\cos^2\theta})\D\varphi^2.
\end{align*}
where $m > a \geq 0$. The lower index $(0,0,a)$ specifies the angular momentum vector: if we use the spherical polar coordinates $x^1=r\cos\varphi\sin\theta$, $x^2=r\sin\varphi\sin\theta$ and $x^3=r\cos\theta$, then the vector $(0,0,a)$ is the same as the axis (of the rotation of the Kerr black hole) and the norm of $(0,0,a)$ is the angular momentum $a$. For an arbitrary vector $\mathbf{a}=(a_1,a_2,a_3)\in\mathbb{R}^3$ with $|\mathbf{a}|=a$  and an isometry $\Omega_{\mathbf{a}}\in SO(3)$ mapping $\mathbf{a}$ to $(0,0,a)$, we can also define a family of Kerr metric $g_{m,\mathbf{a}}=(\mathrm{id}_{\mathbb{R}}\times\Omega_{\mathbf{a}})^* \,g_{m,(0,0,a)}$ where $\mathrm{id}_{\mathbb{R}}$ is the identity map of $t$ axis. We remark that this definition does not depend on the choice of $\Omega_{\mathbf{a}}$.

We choose the slice $H$ to be $t=0$ in the above Kerr space-time and we use $(x^1,x^2,x^3)$ and $(r, \theta, \varphi)$ as coordinates on $H$. Let  $(\bar{g}_{m,(0,0,a)},\bar{k}_{m,(0,0,a)})$ be the induced metric and the second fundamental form of $H$, thus, we have
\begin{align*}
\bar{g}_{m,(0,0,a)}=\frac{r^2+a^2\cos^2\theta}{r^2-2mr+a^2}\mathrm{d}r^2+(r^2+a^2\cos^2\theta)\mathrm{d}\theta^2+(r^2+a^2+\frac{2mra^2\sin\theta}{r^2+a^2\cos^2\theta})\sin^2\theta\mathrm{d}\varphi,
\end{align*}
and
\begin{align*}
\bar{k}_{m,(0,0,a)}=\frac{2}{r^2+a^2+\frac{2mra^2\sin^2\theta}{r^2+a^2\cos^2\theta}}&(\frac{ma}{r^2+a^2\cos^2\theta}(r^2-a^2)\mathrm{d}r\mathrm{d}\varphi\\
&+\frac{2m^2r^2a^3\sin2\theta\sin^2\theta}{(r^2+a^2\cos^2\theta)^2}\mathrm{d}\theta\mathrm{d}\varphi).
\end{align*}
For an arbitrary vector $\mathbf{a}=(a_1,a_2,a_3)\in\mathbb{R}^3$ with $|\mathbf{a}|=a$, we can also define the corresponding of initial data set by $\bar{g}_{m,\mathbf{a}}=\Omega_{\mathbf{a}}^*\bar{g}_{m,(0,0,a)}$ and $\bar{k}_{m,\mathbf{a}}=\Omega_{\mathbf{a}}^*\bar{k}_{m,(0,0,a)}$.  They also correspond to the $t=0$ slices.

We now recall the definition for the constraint map $\Phi$, that is,
$$\Phi(\bar{g},\bar{\pi})=(\mathcal{H}(\bar{g},\bar{\pi}),\mathrm{div}\bar{\pi}),$$
where $\bar{\pi}=\bar{k}-\tr \bar{k}\bar{g}$  and  $\mathcal{H}(\bar{g},\bar{\pi})=R(\bar{g})+\frac{1}{2}(\tr\bar{\pi})^2-|\bar{\pi}|^2$. Let $D\Phi^*_{(\bar{g},\bar{\pi})}$ be the formal $L^2$-adjoint of the linearized operator of $\Phi$ at $(\bar{g},\bar{\pi})$. When $(\bar{g},\bar{\pi}) = (\bar{g}_{m_0}, 0)$, we use $K$ to denote the null space $\mathrm{Ker}D\Phi^*_{(\bar{g}_{m_0},0)}$. We will explicitly write down $K$. In $(x^1,x^2,x^3)$ coordinates, we use $\Omega_i (i=1,2,3)$ to denote the following vectors fields
$$\Omega_1=x^2\frac{\partial}{\partial x_3}-x^3\frac{\partial}{\partial x_2}, \Omega_2=x^3\frac{\partial}{\partial x_1}-x^1\frac{\partial}{\partial x_3},\Omega_3=x^1\frac{\partial}{\partial x_2}-x^2\frac{\partial}{\partial x_1},$$
and in $(r,\theta,\varphi)$ coordinates, they can be written as
\begin{align}\label{Omega}
\Omega_1=-\sin\varphi\partial_{\theta}-\cos\varphi\cot\theta\partial_{\varphi},\ \Omega_2=\cos\varphi\partial_{\theta}-\sin\varphi\cot\theta\partial_\varphi,\ \Omega_3=\partial_{\varphi}.
\end{align}
Then we have the following description on $K$:
\begin{lemma}
$K$ is a $4$ dimensional real vector space. In fact, $K$ is spanned by the following four elements: $((1-\frac{2m_0}{r})^{\frac{1}{2}},0)$,$(0,\Omega_1)$, $(0,\Omega_2)$ and $(0,\Omega_3)$.
\end{lemma}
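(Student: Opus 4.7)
The plan is to exploit the vanishing of the momentum tensor $\bar{\pi}_{m_0} \equiv 0$ on the Schwarzschild slice $H$ to decouple the kernel equation. Substituting $\bar{\pi} = 0$ into the explicit formula \eqref{adjoint_of_linearization} annihilates every term carrying a $\bar{\pi}$-factor, so the two components of $D\Phi^*_{(\bar{g}_{m_0}, 0)}(f, X) = 0$ collapse to the two independent, uncoupled equations
\begin{equation*}
L^*_{\bar{g}_{m_0}} f = 0 \qquad \text{and} \qquad \mathcal{L}_X \bar{g}_{m_0} = 0.
\end{equation*}
Hence $K$ splits as a direct sum of the space of \emph{static potentials} of $\bar{g}_{m_0}$ and the space of Killing fields of $\bar{g}_{m_0}$, and the problem reduces to identifying each factor.

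For the Killing piece, the metric $\bar{g}_{m_0} = (1-2m_0/r)^{-1}\D r^2 + r^2 \D\sigma_{S^2}^2$ is a warped product over $(r_1, r_2)$ whose warping factor $r^2$ and lapse $(1 - 2m_0/r)^{-1/2}$ are non-constant in $r$. A short argument shows that any Killing field must preserve the level sets of $r$ and therefore restrict to a Killing field of the round metric on each sphere; this forces $X \in \mathrm{span}\{\Omega_1, \Omega_2, \Omega_3\}$, accounting for three dimensions of $K$.

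For the static-potential piece, the function $N := (1 - 2m_0/r)^{1/2}$ lies in $\mathrm{Ker}\,L^*_{\bar{g}_{m_0}}$: this is immediate either by direct computation using the formula $L^*_{\bar{g}} f = -(\Delta_{\bar{g}} f)\bar{g} + \bar{\nabla}^2 f - f\cdot\mathrm{Ric}(\bar{g})$ together with the Schwarzschild curvature, or by recognizing $N$ as the lapse of the static Killing field $\partial_t$ of the ambient Schwarzschild space-time (the static vacuum equations are exactly $L^*_{\bar{g}} N = 0$). To see that $\mathrm{Ker}\,L^*_{\bar{g}_{m_0}}$ is no larger than $\mathbb{R}\cdot N$, I would expand $f = \sum_{\ell, m} f_{\ell m}(r) Y_{\ell m}(\theta, \varphi)$ and exploit the tensorial nature of the equation: since $\mathrm{Ric}(\bar{g}_{m_0})$ is diagonal in the $(r, \theta, \varphi)$ frame, the mixed $(r,\theta)$ and $(r,\varphi)$ components of $L^*_{\bar{g}_{m_0}} f = 0$ are first-order equations in the angular harmonics that force every mode with $\ell \geq 1$ to vanish, while for the $\ell = 0$ radial sector the trace-free and pure-trace parts of the tensorial identity combine into an over-determined system with only a one-dimensional joint solution space spanned by $N$.

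The main obstacle is this last uniqueness step: an isolated second-order scalar ODE on a bounded annulus $r_1 \le r \le r_2$ would have a two-dimensional kernel with no boundary condition to cut it down, so the reduction to $\mathbb{R}\cdot N$ must genuinely use the over-determinedness of $L^*$ (a six-scalar equation for one unknown). The mode-by-mode argument outlined above is the direct route; alternatively one can import the static-potential classification established by Corvino in \cite{C}, where precisely this uniqueness on a Schwarzschild annulus is the technical core of the gluing construction. Once this is in hand, the four elements $((1-\tfrac{2m_0}{r})^{1/2}, 0)$ and $(0, \Omega_i)$ for $i = 1, 2, 3$ are linearly independent and span $K$, yielding $\dim K = 4$ as claimed.
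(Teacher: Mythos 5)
Your proposal follows essentially the same route as the paper: decouple $D\Phi^*_{(\bar g_{m_0},0)}(f,X)=0$ into the static-potential equation $L^*_{\bar g_{m_0}}f=0$ and the Killing equation $\mathcal L_X\bar g_{m_0}=0$, identify the Killing fields as the three rotations, and use the over-determined nature of $\mathrm{Hess}\,f=f\,\mathrm{Ric}$ (together with $m_0\neq0$) to pin down $f$ as a multiple of $(1-2m_0/r)^{1/2}$ — which is exactly what the paper does, without a spherical-harmonic expansion but via the same component-by-component reasoning.

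One small inaccuracy in your sketch: the mixed $(r,\theta)$ and $(r,\varphi)$ components of $\mathrm{Hess}\,f = f\,\mathrm{Ric}$ do not by themselves force the $\ell\ge1$ modes to vanish. Since $\mathrm{Ric}$ has no $(r,\theta)$ or $(r,\varphi)$ components, those equations read $\partial_r\partial_A f - \frac1r\partial_A f = 0$, which only give $\partial_A f = \alpha_A(\theta,\varphi)\,r$ — a one-dimensional solution per angular mode, not zero. To kill these you must feed $\partial_A f=\alpha_A r$ back into another component (the paper differentiates the $(r,r)$ component in $\theta$), whereupon the $\ell\ge1$ coefficients are forced to vanish precisely because $m_0\neq0$. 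You do gesture at the over-determinedness in the next sentence, so the plan recovers, but the quoted claim as stated is too strong.
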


\begin{proof}
We first recall the form of the linear operator $D\Phi_{(\bar{g},\bar{\pi})}^*$ (see also Section \ref{The Work of Corvino-Schoen}). For a smooth function $f$ and a vector field $X$, we have $D\Phi_{(\bar{g},\bar{\pi})}^*(f,X)=D\mathcal{H}_{(\bar{g},\bar{k})}^*(f)+D\mathrm{div}_{(\bar{g},\bar{k})}^*(X)$, where
\begin{equation*}
\begin{split}
D\mathcal{H}^*_{(\bar{g},\bar{\pi})}(f) &= ((L^*_{\bar{g}}f)_{ij}+(\tr_{\bar{g}}\bar{\pi} \cdot \bar{\pi}_{ij}-2 \bar{\pi}_{ik}\bar{\pi}^{k}{}_{j})f,(\tr_{\bar{g}}\bar{\pi} \cdot \bar{g}_{ij}-2 \bar{\pi}_{ij})f),\\
D \text{div}^*_{(\bar{g},\bar{\pi})}(X)  &=\frac{1}{2}(\mathcal{L}_X \bar{\pi}_{ij}+ \bar{\nabla}_k X^k \pi_{ij}-(X_i (\bar{\nabla}_k \pi^k{}_j + X_j \bar{\nabla}_k \pi^{k}{}_i)\\
&\quad-(\bar{\nabla}_m X_k \pi^{km} + X_k \bar{\nabla}_m \pi^{mk})g_{ij}, \quad -\mathcal{L}_Xg_{ij}),
\end{split}
\end{equation*}
and $L_{\bar{g}}^*(f)=-(\Delta_{\bar{g}}f)\bar{g}+\mathrm{Hess}_{\bar{g}}(f)-f\mathrm{Ric}_{\bar{g}}$. In the current case, $(\bar{g}, \bar{\pi}) = (g_{m_0},0)$, thus,
\begin{equation}\label{equation for D phi dual on Schwarzschild with m0}
D\Phi_{(\bar{g}_{m_0},0)}^*(f,X)=(L_{\bar{g}_{m_0}}^*f,-\frac{1}{2}\mathcal{L}_X\bar{g}_{m_0}).
\end{equation}

First of all, we consider the second equation $\mathcal{L}_X\bar{g}_{m_0}=0$ in \eqref{equation for D phi dual on Schwarzschild with m0}. It amounts to say that $X$ is a Killing vector field on constant $t$ slice for $\bar{g}_{m_0}$. Thus, $X$ is spanned by $\Omega_1$,$\Omega_2$ and $\Omega_3$, i.e. the infinitesimal rotations of the Schwarzschild space-time. We remark that $m_0 \neq 0$ in this case otherwise we may also have translations of the Minkowski space-time.

Secondly, we consider the first equation in \eqref{equation for D phi dual on Schwarzschild with m0}:
\begin{align*}
L_{\bar{g}_{m_0}}^*f=-(\Delta_{\bar{g}_{m_0}}f)\bar{g}_{m_0}+\mathrm{Hess}_{\bar{g}_{m_0}}(f)-f\mathrm{Ric}_{\bar{g}_{m_0}}=0.
\end{align*}
Taking the trace, we obtain $\Delta_{\bar{g}_{m_0}}f=0$. So the above equation is equivalent to
\begin{equation}\label{Hess f equal to f Ric}
\mathrm{Hess}_{\bar{g}_{m_0}}(f)=f\mathrm{Ric}_{\bar{g}_{m_0}}.
\end{equation}
We show that $f$ depends only on $r$. We consider the $(\partial_r,\partial_\theta)$-component of \eqref{Hess f equal to f Ric}, that is, $\partial_r\partial_\theta f-\frac{1}{r}\partial_\theta f=0$.  Thus, $\partial_\theta f=\alpha r$ for some $\alpha\in\mathbb{R}$. We then turn to the $(\partial_r,\partial_r)$-component of \eqref{Hess f equal to f Ric}, it can be written as
\begin{align*}
\partial_r^2f+\frac{m_0}{r^2}(1-\frac{2m_0}{r})^{-1}\partial_rf=(-\frac{2m_0}{r^3}(1-\frac{2m_0}{r})^{-1})f.
\end{align*}
We can take derivative $\partial_\theta$ on both sides and then substitute $\partial_\theta f=\alpha r$ to the above equation. Those operations yield $\frac{\alpha m_0}{r^2}=-\frac{2\alpha m_0}{r^2}$. Therefore, $\alpha=0$ since $m_0\ne0$. We conclude that $\partial_\theta f=0$. We then proceed in the same manner to show $\partial_{\varphi}f=0$. Hence, $f$ depends only on $r$. To obtain the exact form of $f$, we consider the $(\partial_\theta,\partial_\theta)$-component of \eqref{Hess f equal to f Ric}, that is
\begin{align*}
\partial_\theta^2f+r(1-\frac{2m_0}{r})\partial_rf=\frac{m_0}{r}f.
\end{align*}
Since $f$ is simply a function of $r$, we have $\frac{\partial_rf}{f}=\frac{m_0}{r^2}(1-\frac{2m_0}{r})^{-1}$. Therefore, $f(r)=\alpha(1-\frac{2m_0}{r})^{\frac{1}{2}}$ for some $\alpha\in\mathbb{R}$. It is straightforward to verify that $f(r)=\alpha(1-\frac{2m_0}{r})^{\frac{1}{2}}$ is indeed a solution of \eqref{Hess f equal to f Ric} for all $\alpha\in\mathbb{R}$. Hence, we complete the proof of the lemma.
\end{proof}

\subsection{Gluing Construction}
We choose a sufficiently large $k$ sufficient in Theorem \ref{Higherorder} so that we can use the Local Deformation Lemma in Section \ref{The Work of Corvino-Schoen}. To run the gluing construction, we take $\phi=\phi(r)$ a smooth cut-off function on the three slice $H$ such that $\phi \equiv 1$ near $S_1$ and $\phi \equiv 0$ near $S_2$. We define a candidate for the final metric as follows:
\begin{align*}
(\tilde{g},\tilde{k})= (\phi\bar{g}+(1-\phi)\bar{g}_{m,\mathbf{a}},\phi\bar{k} + (1-\phi)\bar{k}_{m,\mathbf{a}}),
\end{align*}
At this stage, $(\tilde{g},\tilde{k})$ may not satisfy the constraint equations \eqref{constraint} and we will use the techniques introduced in Section \ref{The Work of Corvino-Schoen} to deform $(\tilde{g},\tilde{k})$ to be a solution of the constraint equations.

We choose two free parameters $m \in \mathbb{R}_{>0}$ and $\mathbf{a} \in \mathbb{R}^3$, so that
\begin{equation*}
|m-m_0| + |\mathbf{a}|\le C_0\epsilon,
\end{equation*}
where $C_0$ is large positive constant to be determined later. The constant $C_0$ is independent of the choice of $\epsilon$, $m$ and $\mathbf{a}$.

By the construction, it is easy to see that
\begin{align*}
\|\tilde{g}-\bar{g}_{m_0}\|_{C^{k-4,\alpha}_{\rho^{-1}}}\lesssim\epsilon,\ \|\tilde{k}\|_{C^{k-5,\alpha}_{\rho^{-1}}}\lesssim\epsilon,
\end{align*}
for some $0<\alpha<1$ and $\rho$ is a weight function decays near $\partial H=S_1\cup S_2$ (see Section \ref{The Work of Corvino-Schoen} for  definitions). We remark that the above norms are computed with respect to $\bar{g}_{m_0}$. Since both $(\bar{g}, \bar{k})$ and $(\bar{g}_{m,\mathbf{a}}, \bar{k}_{m,\mathbf{a}})$ solve the constraint equations, it is easy to see that
\begin{align*}
\|\Phi(\tilde{g},\tilde{\pi})\|_{C^{k-6,\alpha}_{\rho^{-1}}}\lesssim\epsilon,
\end{align*}
where $\tilde{\pi}=\tilde{g}-\tr_{\tilde{g}}\tilde{k}\tilde{g}$.

We now follow the procedure in \cite{C} and \cite{C-S} to deform $(\tilde{g},\tilde{\pi})$. Let $\zeta$ be a bump function on $H$ which is compactly supported between $S_1$ and $S_2$, we define $\zeta K$ to be the following set vector space
\begin{equation*}
\zeta K = \{ (\zeta f, \zeta X) | (f, X) \in K\}.
\end{equation*}
It is the obstruction space for the gluing for data close to the Schwarzschild data. We remark that if $\zeta$ is sufficiently close to constant $1$ and if the data $(g,\pi)$ is sufficiently to those of the Schwarzschild data $(g_{m_0},0)$, then the restriction of the operator $D\Phi_{(g,\pi)}^*$ on the orthogonal complement of $\zeta K$ is still injective. In this case, we can still apply the Local Deformation Theorem (Theorem 2 of \cite{C-S} or Section \ref{The Work of Corvino-Schoen}) to deform the image into $\zeta K$ instead of $K$. We also notice that, in this case, if we use $V = L^2(C(H))\times L^2((\mathcal{S}(H))$ to denote the pair of square integrable symmetric two tensors on $H$, thus, $V = K \oplus (\zeta K)^{\perp}$. After applying the Local Deformation Lemma, there will be no component in $(\zeta K)^\perp$ . We also fix a weight function $\rho$ which behaves as $\rho\sim d(\cdot,\partial H)^N$ for sufficiently large $N$ near the boundary.
According to the Local Deformation Theorem , if $\epsilon$ is sufficiently small,
there exists a pair $(h,\omega)\in \mathcal{S}^{k-4,\alpha}(H) \times \mathcal{S}^{k-5,\alpha}(H)$ such that $\Phi(\tilde{g}+h, \tilde{\pi}+\omega)\in\zeta K$, $(h,\omega)=0$ near $\partial H=S_1\cup S_2$. Moreover, $(h,\omega)$ satisfies the following estimates
\begin{align*}
\|(h,\omega)\|_{C^{{k-5},\alpha}_{\rho^{-1}}}\lesssim \epsilon.
\end{align*}
We also remark that, instead of the polynomial weight $\rho$ defined above, we can also use an exponential weight $\rho$, namely $\rho\sim\mathrm{e}^{-\frac{1}{d}}$ near the boundary, to solve $(h,\omega)$ in the smooth class instead of H\"{o}lder classes (because $\Phi(\tilde{g},\tilde{\pi})$ is strictly supported in the interior of $H$). One can refer to \cite{C} and \cite{C-S} for more details.

We are going to argue that we can choose suitable parameters $(m,\mathbf{a})$ so that $\Phi(\tilde{g}+h, \tilde{\pi}+\omega)=0$. According to the above discussion, in order to show this, it suffices to show the projection of $\Phi(\tilde{g}+h, \tilde{\pi}+\omega)$ to $K$ vanishes, i.e. the following projection map $\mathcal{I}$ hits zero:
\begin{equation*}
\mathcal{I}: B_{C_0 \epsilon} \rightarrow \mathbb{R}^4, \quad (m, \mathbf{a}) \mapsto (\mathcal{I}_0, \mathcal{I}_1,\mathcal{I}_2, \mathcal{I}_3),
\end{equation*}
where $B_{C_0 \epsilon} = \{(m,\mathbf{a}) \in \mathbb{R}^4 ||m-m_0| + |\mathbf{a}|\leq C_0\epsilon \}$,
\begin{equation*}
\mathcal{I}_0 (m,\mathbf{a}) = \int_H(1-\frac{2m_0}{r})^{\frac{1}{2}}\mathcal{H}(\tilde{g}+h, \tilde{\pi}+\omega)\mathrm{d}\mu_{\bar{g}_{m_0}},
\end{equation*}
and
\begin{align*}
\mathcal{I}_i(m,\mathbf{a}) = \int_H\mathrm{div}_{\tilde{g}+h}(\tilde{\pi}+\omega)_j\Omega_i^j\D\mu_{\bar{g}_{m_0}}.
\end{align*}
for $i=1,2,3$.

We first analyze $\mathcal{I}_0$. We consider the Taylor expansion of $\mathcal{H}$ in Banach space $C^{k-5,\alpha}$ near the point $(\tilde{g},\tilde{\pi})$, thus, we have
\begin{align*}
\mathcal{I}_{0}(m,\mathbf{a}) = \int_H(1-\frac{2m_0}{r})^{\frac{1}{2}}(\mathcal{H}(\tilde{g}, \tilde{\pi})+D\mathcal{H}_{(\tilde{g},\tilde{\pi})}(h,\omega)+O(\|(h,\omega)\|_{C^{k-5,\alpha}}^2))\mathrm{d}\mu_{\bar{g}_{m_0}}.
\end{align*}
We observe that the second and the last terms in the above integral are of size $O(\epsilon^2)$ with respect to our small parameter $\epsilon$. This is obvious for the last term. For the second term, recall that $(1-\frac{2m_0}{r})^{\frac{1}{2}} \in \mathrm{Ker}D\mathcal{H}^*_{(\bar{g}_{m_0},0)}$ and $(h,\omega)$ vanishes near $\partial H$, therefore,
\begin{equation*}
\int_H(1-\frac{2m_0}{r})^{\frac{1}{2}}D\mathcal{H}_{(\bar{g}_{m_0},0)}(h,\omega)\mathrm{d}\mu_{\bar{g}_{m_0}} =0.
\end{equation*}
We then subtract the above quantity from the second term to retrieve one more $\epsilon$ as follows
\begin{align*}
\int_H(1-\frac{2m_0}{r})^{\frac{1}{2}}D\mathcal{H}_{(\tilde{g},\tilde{\pi})}(h,\omega)\mathrm{d}\mu_{\bar{g}_{m_0}}&=
\int_H(1-\frac{2m_0}{r})^{\frac{1}{2}}(D\mathcal{H}_{(\tilde{g},\tilde{\pi})}-D\mathcal{H}_{(\bar{g}_{m_0},0)})(h,\omega)\mathrm{d}\mu_{\bar{g}_{m_0}}\\
&\lesssim\|(\tilde{g}-\bar{g}_{m_0},\bar{\pi})\|_{C^2}\|(h,\omega)\|_{C^2}= O(\epsilon^2)
\end{align*}
for $k\ge 7$. Finally, we consider the first term in $\mathcal{I}_0$. Because $|\mathbf{a}|\le C_0\epsilon$, we will make use of the following key observation:
\begin{equation*}
\bar{g}_{m,\mathbf{a}}=(1-\frac{2m}{r})^{-1}\D r^2+r^2\D\sigma_{S^2}^2+O(\epsilon^2).
\end{equation*}
This is obvious according to the formula for $\bar{g}_{m,\mathbf{a}}$ in last section. We also write
\begin{align*}
\bar{g}=(1-\frac{2m_0}{r})^{-1}\D r^2+r^2\D\sigma_{S^2}^2+h_{\epsilon}
\end{align*}where $h_{\epsilon}$ is a two tensor and $|h_\epsilon|\lesssim\epsilon$. Therefore, we have
\begin{align*}
\int_H(1-\frac{2m_0}{r})^{\frac{1}{2}}\mathcal{H}(\tilde{g}, \tilde{\pi})\D\mu_{\bar{g}_{m_0}} =& \int_H R(f(r)\D r^2+r^2\D\sigma^2_{S^2})\D\mu_{\bar{g}_0}\\
&+\int_H (1-\frac{2m_0}{r})^{\frac{1}{2}}D\mathcal{H}_{(\bar{g}_{m_0},0)}(\phi h_\epsilon,0)\D\mu_{{\bar{g}}_{m_0}}+O(\epsilon^2)
\end{align*}
where $f(r)=\phi(1-\frac{2m_0}{r})^{-1}+(1-\phi)(1-\frac{2m}{r})^{-1}$. To compute the first term on the right hand side, we have to compute the scalar curvature $R(f(r)\D r^2+r^2\D\sigma^2)$. 
By direct computation,
\begin{equation*}
R(f(r)\D r^2+r^2\D\sigma^2)= 2r^{-2}(rf^{-2}(r)f'(r)-f^{-1}(r)+1)=-2r^{-2}\partial_r(r(f^{-1}(r)-1)).
\end{equation*}
By virtue of this formula, we can easily derive
\begin{align*}
\int_H R(f(r)\D r^2+r^2\D\sigma^2_{S^2})\D\mu_{\bar{g}_0}&=-8\pi(r(f^{-1}(r)-1))|_{r_1}^{r_2}=16\pi(m-m_0).\\
\end{align*}
For the second term, we denote
\begin{align*}
\tilde{\epsilon}_0\triangleq-\int_H (1-\frac{2m_0}{r})^{\frac{1}{2}}D\mathcal{H}_{(\bar{g}_{m_0},0)}(\phi h_\epsilon,0)\D\mu_{{\bar{g}}_{m_0}}.
\end{align*}
Recall that $(1-\frac{2m_0}{r})^{\frac{1}{2}}\in\mathrm{Ker}D\mathcal{H}^*_{(\bar{g}_{m_0},0)}$, $\phi=1$ near $S_1$ and $\phi=0$ near $S_2$, therefore $\tilde{\epsilon}_0$ can be written as an integral on $S_1$, which only depends on $h_\epsilon$ on $S_1$, and $|\tilde{\epsilon}_0|\lesssim\epsilon$. Finally, we summarize the above calculations for $\mathcal{I}_0$ as
\begin{equation}\label{I 0}
\mathcal{I}_{0}(m,\mathbf{a})=16\pi(m-m_0)-\tilde{\epsilon}_0+O(\epsilon^2).
\end{equation}

For each $i=1,2,3$, we analyze $\mathcal{I}_i$ as follows:
\begin{align*}
\mathcal{I}_{i}(m,\mathbf{a})&=\int_H\mathrm{div}_{\bar{g}_{m_0}}(\tilde{\pi}+\omega)_j\Omega_i^j\D\mu_{\bar{g}_{m_0}}+\int_H (\mathrm{div}_{\bar{g}_{m}}-\mathrm{div}_{\bar{g}_{m_0}})(\tilde{\pi}+\omega)_j\Omega_i^j\D\mu_{\bar{g}_{m_0}}\\
&=\int_H\mathrm{div}_{\bar{g}_{m_0}}(\tilde{\pi}+\omega)_j\Omega_i^j\D\mu_{\bar{g}_{m_0}}+O(\epsilon^2)
\end{align*}
Recall that $\mathrm{div}_{\bar{g}_{m_0}}^* (\Omega_i)=0$, $(\tilde{g},\tilde{\pi})$ coincides with $(\bar{g}, \bar{\pi})$ on $S_1$ and $(\tilde{g},\tilde{\pi})$ coincides with $(\bar{g}_{m,\mathbf{a}}, \bar{\pi}_{m,\mathbf{a}})$ on $S_2$. We can apply Stokes Theorem to derive
\begin{align*}
\mathcal{I}_i(m,\mathbf{a})=\int_{S_2}(\bar{k}_{m,\mathbf{a}})_{lj}\Omega_i^j(\partial_r)^l\D\mu_{r_2}-\int_{S_1}\bar{k}_{lj}\Omega_i^j(\partial_r)^l\D\mu_{r_1}+O(\epsilon^2),
\end{align*}
where $\D\mu_r$ is the volume form on the round sphere with radius $r$. Let $\tilde{\epsilon}_i=\int_{S_1}\bar{k}_{lj}\Omega_i^j(\partial_r)^l\D\mu_{r_1}$  and it is a constant coming from the integral on the inner sphere $S_1$. Since $g$ is close to $g_{m_0}$, we know that $|\tilde{\epsilon}_i|\lesssim \epsilon$. We emphasize that $\tilde{\epsilon}_k$ is independent of $(m,\mathbf{a})$ and the constant $C_0$. Thus,
\begin{align*}
\mathcal{I}_i(m,\mathbf{a})=\int_{S_2}(\bar{k}_{m,\mathbf{a}})_{lj}\Omega_i^j(\partial_r)^l\D\mu_{r_2}-\tilde{\epsilon}_i+O(\epsilon^2).
\end{align*}
To understand the local behavior of $\mathcal{I}_i$, we study the differential of the map
\begin{align*}
J:\mathbf{a}\mapsto (\int_{S_2}(\bar{k}_{m,\mathbf{a}})_{lj}\Omega_1^j(\partial_r)^l\D\mu_{r_2},\int_{S_2}(\bar{k}_{m,\mathbf{a}})_{lj}\Omega_2^j(\partial_r)^l\D\mu_{r_2},\int_{S_2}(\bar{k}_{m,\mathbf{a}})_{lj}\Omega_3^j(\partial_r)^l\D\mu_{r_2}),
\end{align*}
at $\mathbf{a}=0$ and the parameter $m$ is fixed.

We first compute
\begin{align*}
\D J \mid_{\mathbf{a} = 0}(0,0,1)=\frac{\D}{\D t} \mid_{t=0}(\int_{S_2}(\bar{k}_{m,(0,0,t)})_{lj}\Omega_i^j(\partial_r)^l\D\mu_{r_2})_{i=1,2,3}.
\end{align*}
When $i=1$, we have
\begin{align*}
&\int_{S_2}(\bar{k}_{m,(0,0,t)})_{lj}\Omega_1^j(\partial_r)^l\D\mu_{r_2}\\
=-&\int_0^{\pi}\D\theta\int_0^{2\pi} \cos\varphi\cot\theta\bar{k}_{m,(0,0,t)}(\partial_\varphi,\partial_r)|_{r=r_2}\cdot r_2^2\sin\theta\D\varphi = 0.
\end{align*}
since $\bar{k}_{m,(0,0,t)}(\partial_\varphi,\partial_r)$ does not depend on $\varphi$ and $\int_0^{2\pi}\cos\varphi\D\varphi=0$. When $i=2$, similarly, we have
\begin{align*}
\int_{S_2}(\bar{k}_{m,(0,0,t)})_{lj}\Omega_2^j(\partial_r)^l\D\mu_{r_2}=0.
\end{align*}
When $i=3$, in view of the exact formula for $k_{m,(0,0,a)}$ in the previous section, the $\D r\D\varphi$ component of $\bar{k}_{m,(0,0,t)}$ is of the form $\frac{2mt}{r^2}+O(t^2)$ for sufficiently small $t$. Thus,
\begin{align*}
\int_{S_2}(\bar{k}_{m,(0,0,t)})_{lj}\Omega_3^j(\partial_r)^l\D\mu_{r_2}
&=\int_0^{\pi}\D\theta\int_0^{2\pi}2mt\sin\theta\D\varphi+O(t^2)\\
&=8\pi mt+O(t^2).
\end{align*}
Therefore, we finally have
\begin{align*}
\frac{\D}{\D t}\mid_{t=0}(\int_{S_2}(\bar{k}_{m,(0,0,t)})_{lj}\Omega_3^j(\partial_r)^l\D\mu_{r_2})=8\pi m,
\end{align*}
or equivalently $\D J \mid_{\mathbf{a} = 0}(0,0,1)=(0,0,8\pi m)$.

To compute $\D J \mid_{\mathbf{a} = 0}(0,1,0)$, we use the $\text{SO}(3)$ symmetry on $\mathbf{a}$. We take $R =\begin{pmatrix}1&0&0\\0&0&1\\0&-1&0\end{pmatrix}$ and use $\D R$ to denote its differential. Apparently, we have
\begin{equation*}
\D R:(\Omega_1,\Omega_2,\Omega_3) \mapsto(\Omega_1,-\Omega_3,\Omega_2).
\end{equation*}
Therefore, we have
\begin{align*}
\D J \mid_{\mathbf{a} = 0}(0,1,0)&=\frac{\D}{\D t}\mid_{t=0}(\int_{S_2}(R^*\bar{k}_{m,(0,0,t)})_{lj}\Omega_i^j(\partial_r)^l\D\mu_{r_2})_{i=1,2,3}\\
&=\frac{\D}{\D t}\mid_{t=0}(\int_{S_2}(\bar{k}_{m,(0,0,t)})_{lj}(\D R(\Omega_i))^j(\partial_r)^l\D\mu_{r_2})\\
&=(0,-8\pi m,0).
\end{align*}
Similarly, we take $R=\begin{pmatrix}0&0&1\\0&1&0\\-1&0&0\end{pmatrix}$ and we gave
\begin{equation*}
\D R:(\Omega_1,\Omega_2,\Omega_3) \mapsto(-\Omega_3,\Omega_2,\Omega_1).
\end{equation*}
As a result, we obtain
\begin{align*}
\D J \mid_{\mathbf{a} = 0}(0,1,0)&=(-8\pi m,0,0).
\end{align*}
Combining all those computations on $\D J \mid_{\mathbf{a} = 0}$ and the fact that $J(0)=0$, for $(m,\mathbf{a}) \in B_{C_0 \epsilon}$, we have
\begin{align*}
J(\mathbf{a})&=8\pi m\begin{pmatrix}-1&0&0\\0&-1&0\\0&0&1\end{pmatrix}\mathbf{a}+O(\epsilon^2)\\
&=8\pi m_0\begin{pmatrix}-1&0&0\\0&-1&0\\0&0&1\end{pmatrix}\mathbf{a}+O(\epsilon^2).
\end{align*}
We now understand the local behavior of $\mathcal{I}$ near $(m_0,0)$:
\begin{equation}\label{local behavior of I}
\mathcal{I}(m,\mathbf{a})=(16\pi(m_0-m),8\pi m_0(-a_1,-a_2,a_3))-(\tilde{\epsilon}_0,\tilde{\epsilon}_i)+O(\epsilon^2),
\end{equation}
where $|\tilde{\epsilon}_i|\lesssim \epsilon$.

We now carry out a degree argument as in \cite{C} and \cite{C-S} to show that $\mathcal{I}(m,\mathbf{a})=0$ for some $(m,\mathbf{a}) \in B_{C_0 \epsilon}$ provided $C_0$ is sufficiently large. We define
\begin{equation*}
\mathcal{I}_1(m,\mathbf{a})=(16\pi(m-m_0),8\pi m_0(-a_1,-a_2,a_3))-(\tilde{\epsilon}_0,\tilde{\epsilon}_k),
\end{equation*}
and we then choose $C_0$ such that $16\pi C_0\ge 2|\tilde{\epsilon}_0|$ and $8\pi m_0 C_0\ge 2 |\tilde{\epsilon}_k|$. Therefore, we have
\begin{equation*}
\mathcal{I}_1(m_0+\frac{\tilde{\epsilon}_0}{16\pi},\frac{1}{8\pi m_0}(-\tilde{\epsilon}_1,-\tilde{\epsilon}_2,\tilde{\epsilon}_3))=0,
\end{equation*}
and $(m_0+\frac{\tilde{\epsilon}_0}{16\pi},\frac{1}{8\pi m_0}(-\tilde{\epsilon}_1,-\tilde{\epsilon}_2,\tilde{\epsilon}_3)) \in B_{C_0 \epsilon}$. Therefore, $\mathcal{I}_1$ is a homomorphism from $B_{C_0\epsilon}$ to another box containing $0$ in $\mathbb{R}^4$ centered at $(\tilde{\epsilon}_0,\tilde{\epsilon}_k)$. We then define
\begin{equation*}
\mathcal{I}(t,m,\mathbf{a})=(16\pi(m-m_0),8\pi m_0(-a_1,-a_2,a_3))-(\tilde{\epsilon}_0,\tilde{\epsilon}_k)+tO(\epsilon^2)
\end{equation*}
on $[0,1]\times B_{C\epsilon_0}$ to be an homotopy between $\mathcal{I}_1$ and $\mathcal{I}$. For sufficiently small $\epsilon$, $0\notin\mathcal{I}([0,1]\times\partial B_{C\epsilon_0})$, so the degree of $\mathcal{I}$ at the value $0$ is equal to the degree of $\mathcal{I}_1$ at the value $0$, which is $1$. This implies that $\mathcal{I}(m,\mathbf{a})=0$ for some $(m,\mathbf{a})\in B_{C_0 \epsilon}$.

Therefore, by a suitable choice of $(m,\mathbf{a}) \in B_{C_0 \epsilon}$, we finally deform $(\tilde{g},\tilde{k})$ to be $({\tilde{g}_S},{\tilde{k}_S})$ where $\tilde{g}_S=\tilde{g}+h$ and $\tilde{k}_S=\tilde{\pi}+\omega+\frac{1}{2}\tr_{\tilde{g}_S}(\tilde{\pi}+\omega){\tilde{g}_S}$, so that $({\tilde{g}_S},{\tilde{k}_S})$ satisfies the constraint equations on $H$, $({\tilde{g}_S},{\tilde{k}_S})=(\bar{g},\bar{k})$ near $S_1$ and $({\tilde{g}_S},{\tilde{k}_S}))=(\bar{g}_{m,\mathbf{a}},\bar{k}_{m,\mathbf{a}})$ near $S_2$. This completes the gluing construction.

\subsection{Proof of the Main Theorem}
Recall the choice of $H$, it is only defined for $\ub \geq \delta$ and it has an inner boundary at $H \cap \Cb_{\delta}$. We would like to extend $H$ to the interior to complete it as a three manifold without boundary. This can be done by choosing a smooth function $u=f(\ub)$, such that
\begin{align*}
\begin{cases}f'(\ub)=-1,\quad\text{for $\ub\le0$},\\
f'(\ub)<0,\quad \text{for $0\le\ub\le\delta$},\\
t(\ub,f(\ub))=0,\quad\text{for $\ub\ge\delta$.}
\end{cases}
\end{align*}
In addition, $f$ can be chosen such that $\tr\chi(\ub,f(\ub))>0$ for all $0\le\ub\le\delta$. The space-like piece in the picture in Section \ref{Main Result} can be viewed as the graph of $f$ in $\ub$-$u$ plane. Suppose that the curve $u=f(\ub)$ intersects the ``central'' line $\ub=u$ at $(\ub_{\text{cen}},f(\ub_{\text{cen}})=\ub_{\text{cen}})$. Let $H_0=\bigcup_{u=f(\ub),\ub_{\text{cen}}\le\ub\le\delta}S_{\ub,u}$ which is smooth space-like hypersurface of $M$ with $(\bar{g},\bar{k})$ as the induced metric and the second fundamental form. It is clear from the construction that $(\bar{g},\bar{k})=(\delta_{ij},0)$ for $\ub_{\text{cen}}\le\ub\le0$ where $\delta$ is the Euclidean metric on the ball of radius $r_0$ which is the radius of $S_{0,f(0)}$ computed in $g$. It is also clear that $0<r_0<r_1$.

We divide $\Sigma=\mathbb{R}^3$ with Cartesian coordinate $(x^1,x^2,x^3)$ into four concentric regions $\Sigma_M$, $\Sigma_C$, $\Sigma_S$ and $\Sigma_K$ as
\begin{align*}
\Sigma_M &=\{x||x|\leq r_0\},\quad \Sigma_C =\{x|r_0\leq |x| \leq r_1\},\\
\Sigma_S &=\{x|r_1\leq |x|\le r_2\},\quad \Sigma_K =\{x||x| \geq r_2\}.
\end{align*}
By construction we have $r_1-r_0=O(\delta)$ and $r_2-r_1=O(\epsilon_0)$. Thus, $H_0$ is diffeomorphic to $\Sigma_M\cup\Sigma_C$ and $H$ is diffeomorphic to $\Sigma_S$. These diffeomorphisms are realized through the Cartesian coordinates in an obvious way. We now define $({\bar{g}_{\Sigma}},{\bar{k}_{\Sigma}})$ on $\Sigma$ as follows
\begin{align*}
({\bar{g}_{\Sigma}},{\bar{k}_{\Sigma}})=\begin{cases}(\bar{g},\bar{k}),\quad\text{on $\Sigma_M\cup\Sigma_C$},\\
({\tilde{g}_S},{\tilde{k}_S}),\quad\text{on $\Sigma_S$},\\
(\bar{g}_{m,\mathbf{a}},\bar{k}_{m,\mathbf{a}}),\quad\text{on $\Sigma_K$}.
\end{cases}
\end{align*}
By construction, $({\bar{g}_{\Sigma}},{\bar{k}_{\Sigma}})$ is smooth on $\Sigma$. It is \textit{Minknowski} inside, namely, $({\bar{g}_{\Sigma}},{\bar{k}_{\Sigma}})=(\delta,0)$ on $\Sigma_M$; similarly, it is \text{Kerr} on the outer region $\Sigma_K$ with $|m-m_0|+|\mathbf{a}|\lesssim\epsilon$. Moreover, according to Christodoulou'w work \cite{Chr}, there will be a trapped surface in the future domain of dependence of $\Sigma_M\cup\Sigma_C$.

To complete the proof of the Main Theorem, we also have to show the non-existence of trapped surfaces on $\Sigma$. In fact, for any two sphere $S$ embedding in $\Sigma$. Let $S_{r_S}=\{x||x|=r_S\}$ be the sphere which is the innermost one of the spheres in the form $S_r=\{x||x|=r\}$ containing $S$. Then $S$ and $S_{r_S}$ tangent at some point $p$. The outer null expansion of $S$, $\theta_S=\tr_{S}{\bar{k}_{\Sigma}}+H_{S}$ where $\tr_{S}{\bar{k}_{\Sigma}}$ is the trace of $\bar{k}_{\Sigma}$ computed on $S$ and $H_S$ is the mean curvature of $S$ in $\Sigma$ with respect to the normal pointing out. By construction $\theta_{S_{r}}>0$ for all $r$. Then at the point $p$, $\tr_{S}\bar{k}_{\Sigma}(p)=\tr_{S_{r_S}}\bar{k}_{\Sigma}(p)$ since $S$ and $S_{r_S}$ tangent at $p$, and $H_{S}(p)\ge H_{S_{r_S}}(p)$ by the maximum principle. So
$$\theta_S(p)=\tr_{S}\bar{k}_{\Sigma}(p)+H_{S}(p)\ge\tr_{S_{r_S}}\bar{k}_{\Sigma}(p)+H_{S_{r_S}}(p)=\theta_{S_{r_S}}(p)>0$$
and then we conclude that $S$ is not trapped at the point $p$. Therefore, we complete the proof of the Main Theorem.

\end{document}